\newcommand{\bookhskip}{\hskip 1em plus 0.5em minus 0.4em\relax}
\newtheorem{theo}{Theorem}[section]
\newtheorem{defi}[theo]{Definition}
\newtheorem{prop}[theo]{Proposition}
\newtheorem{coro}[theo]{Corollary}
\newtheorem{lemma}[theo]{Lemma}
\DeclareMathOperator{\rank}{rank}
\DeclareMathOperator{\Ker}{Ker}
\DeclareMathOperator{\im}{Im}
\newcommand{\C}{\mathbb{C}}
\newcommand{\CC}{{\EuScript C}}
\newcommand{\Z}{\mathbb{Z}}
\newcommand{\N}{\mathbb{N}}
\newcommand{\F}{\mathbb{F}}
\newcommand{\X}{\EuScript X}
\newcommand{\e}{\varepsilon}
\newcommand{\h}{\mathcal{H}}
\newcommand{\hn}{\mathcal{H}^{\otimes n}}
\newcommand{\E}{{\EuScript E}}
\newcommand{\Esp}{{\mathbb E}}
\newcommand{\Prob}{{\mathbb P}}
\newcommand{\Pauli}{\mathcal{P}}
\newcommand{\HH}{\mathbf H}
\newcommand{\hh}{\mathbf h}
\newcommand\ket[1]{|#1\rangle}
\newcommand\proj[1]{|#1\rangle \langle #1|}
\newcommand\gen[1]{\langle #1\rangle}
\title{Upper Bounds on the Rate of Low Density Stabilizer Codes for the Quantum Erasure Channel}           
\author{Nicolas Delfosse, Gilles Zémor\\
Institut de Math\'ematiques de Bordeaux UMR 5251, Universit\'e Bordeaux 1,\\
351, cours de la Lib\'eration, F-33405 Talence Cedex, France \\
Email: \{Nicolas.Delfosse, Gilles.Zemor\}@math.u-bordeaux1.fr}
\date{May 31, 2012}
\begin{document}

\maketitle

\begin{abstract}
%We are interested in the performance of quantum LDPC codes.
Using combinatorial arguments, we determine an upper bound on
achievable rates of stabilizer codes used over the quantum erasure
channel. This allows us to recover the no-cloning bound on the
capacity of the quantum erasure channel, $R \leq 1-2p$, for stabilizer codes:
we also derive an improved upper bound of the form $R \leq 1-2p-D(p)$
with a function $D(p)$ that stays positive for $0<p<1/2$ and for any family of
stabilizer codes whose generators have weights bounded from above by a
constant -- low density stabilizer codes.

We obtain an application
to percolation theory for a family of self-dual tilings of the hyperbolic plane.
We associate a family of low density stabilizer codes with appropriate finite
quotients of these tilings. We then relate the probability of
percolation to the probability of a decoding error for these codes on
the quantum erasure channel. 
The application of our upper bound on achievable rates of 
low density stabilizer codes gives rise to an upper bound on the critical probability for these tilings.
\end{abstract}

\section{Introduction}

Low Density Parity Check (LDPC) codes are classical error-correcting
codes originally introduced by Gallager \cite{Ga63}. They come with
highly efficient local iterative decoding schemes, have been 
extensively studied and have proved very successful on a
number of channels.
%capacity approaching codes. They were introduced by Gallager in 1963 \cite{Ga63}.
% and rediscovered in the nineties by Richardson, Urbanke \cite{} and Mackay \cite{}
%This class of codes is particularly well adapted to transmit information with high rate over a noisy channel because we dispose of an efficient decoding algorithm.
Therefore, in the field of quantum communication and quantum
computation, 
it is natural to look into the quantum analog of classical LDPC codes, 
which arguably are stabilizer codes with generators of bounded weight.
We will call such codes low density stabilizer codes or,
following others, refer to them somewhat loosely as quantum LDPC codes.
These include a number of constructions of locally decodable quantum
codes with a topological connection, starting with Kitaev's celebrated
toric code \cite{Ki97}, and other families among which
surfaces codes \cite{BM073}, \cite{Ze09}, color codes \cite{BM06},
\cite{BM072}, and other variants \cite{TZ09}, \cite{CDZ11}. 
Various generalizations to the quantum setting of classical LDPC codes
have also been proposed, e.g. \cite{Al07}, \cite{Al08}, \cite{COT07}, \cite{HMIH07}, \cite{SKR08}.

In the present work we are interested in the performance of quantum
LDPC codes over the quantum erasure channel. Our motivation is
inspired by the classical setting, in which the systematic study of LDPC codes
for the classical erasure channel has led to a better understanding
of the behaviour of LDPC codes for more complicated channels such as the
 binary symmetric channel and the Gaussian channel. This approach has
 hardly been attempted in the quantum setting and it is not
 unreasonable to hope for similar returns in the long run. The quantum
 erasure channel, besides being simpler than the more universal
 depolarizing channel, is also a realistic channel \cite{GBP97}. 
Its capacity is known, it is $1-2p$ \cite{BDS97}, however almost
nothing is known about the performance of quantum LDPC codes over this
channel.
We would like to gain some understanding as to what are the code characteristics needed
to achieve capacity. 

We shall derive a bound on the achievable rate of stabilizer codes as a function
of an upper bound on the weight of the generators of the stabilizer
group.
Equivalently, we will derive an upper bound on the decoding threshold
on the erasure channel for quantum LDPC codes.
This bound will yield the following result: any family of stabilizer codes
that have stabilizer groups with generators of weight bounded by a
constant, cannot achieve the capacity of the quantum erasure channel.
This phenomenon is somewhat analogous to the classical setting
\cite{Ga63} \cite{BKLM02} where it is known that capacity achieving
LDPC codes must have parity-check matrices with growing row weights.

Our result has an unexpected application to percolation theory. Given
an infinite edge-transitive graph, a random subgraph, called the {\em open}
subgraph is considered where every edge is declared open,
independently of the others, with probability $p$.
The central question in percolation theory is the determination of the
critical probability $p_c$, which is the minimum value of $p$ such
that the open connected component of any given edge $e$ is infinite
with non zero probability. For most graphs, computing the critical
probability exactly is usually quite difficult.

We are interested in the critical probability of graphs that make up
regular tilings of the hyperbolic plane. The connection with quantum
erasure correcting codes is that quotients of the infinite tiling of
the hyperbolic plane yield finite graphs (more precisely
combinatorial surfaces) that define quantum LDPC codes (surface
codes): selecting a random erasure pattern for the finite code is the
same as selecting a random subgraph of the finite graph, and the
non-correctable erasure event is very close to the percolation event
on the infinite graph.
This connection was observed for the toric code \cite{DKLP02}, and
this is why the erasure threshold of the toric code coincides with the
critical probability $p_c = 1/2$ in the square lattice.
We shall derive a bound on the erasure decoding threshold for surface
codes that will lead to an upper bound on the critical probability
$p_c$ for hyperbolic tilings. To the best of our knowledge this is the
sharpest presently known such upper bound.

\bigskip
The paper is organized as follows. 
After introductory and background material presented in Section~\ref{section:background}, we
derive an upper bound on achievable rates of stabilizer codes in
Section~\ref{section:stabilizer}. There are two main results in this
section. The first is Theorem~\ref{theo:capacity_stabilizer} which states
that achievable rates of stabilizer codes satisfy a bound of the form
$R\leq 1-2p-D(p)$ for a non-negative function $D(p)$. This
enables one to recover the capacity bound $R\leq 1-2p$ for the quantum
erasure channel for the particular case of stabilizer codes. It also
enables one to derive improved bounds on achievable rates for
particular classes of stabilizer codes. We derive such an improved
bound in Theorem~\ref{theo:stabilizer} for stabilizer codes with
generators of bounded weight.
In Section~\ref{section:CSS(2,m)} we refine the above bound for a particular
class of LDPC CSS codes that make up a family of surface codes. In
Section~\ref{section:percolation}
we apply the refined bound on achievable rates to percolation on
hyperbolic lattices: the main result is
Theorem~\ref{theo:percolation_bound_LDPC}
which is an upper bound on critical probabilities.
Finally, an appendix regroups some technicalities necessary to
complete formal proofs (Appendix~\ref{appendix:concavity_inequality}). 
%and Figures where bounds are plotted (Appendix~\ref{app:figures}).

\section{Background}\label{section:background}

A stabilizer code of parameters $[[n, k]]$ is a subspace of dimension
$2^k$ of the space $\hn = (\C^2)^{\otimes n}$. It is defined as the
set of fixed points of an abelian group of Pauli operators. Below we
go over notation and definitions. This material is quite well-known
but we have felt the need to highlight the properties that we need, in
particular because we shall make extensive use of linear algebra.
For a more precise description of quantum information and quantum
error-correcting codes, see Nielsen and Chuang \cite{NC00} with a
binary point of view close to the one that we adopt here,
or the article of Calderbank, Rains, Shor and Sloane \cite{CRSS98}, for an $\F_4$ point of view.

\subsection{Pauli groups}

A quantum bit or qubit is a vector of $\h = \C^2$. It is the basic
unit of quantum information.
A sequence of $n$ qubits lives in the space $\hn$. 
The classical Pauli operators form a basis of the space of operators on $\hn$.

Denote by $I$ the identity matrix of size 2, $X=
\left(\begin{smallmatrix}
  0 & 1\\
1 & 0
\end{smallmatrix}\right)$,
$
Z =
\left(
  \begin{smallmatrix}
1 & 0\\
0 & -1
  \end{smallmatrix}
\right)
$
and $Y = iXZ$.
These operators satisfy the following relations:
$$
\begin{cases}
X^2 = Y^2 = Z^2 = I,\\
XY = -YX = iZ,\\
YZ = -ZY = -iX,\\
ZX = -XZ = iY.\\
\end{cases}
$$
The Pauli group $\tilde \Pauli_1$ for one qubit is the group generated by the matrices:
$$
\tilde \Pauli_1 = \{\pm I, \pm i I, \pm X, \pm iX, \pm Y, \pm iY, \pm Z, \pm i Z\}
$$
Remark that two different non-identity Pauli matrices always anti-commute.

The Pauli group $\tilde \Pauli_n$ on $n$ qubits is the multiplicative group of $n$-fold tensor products of errors of $\tilde \Pauli_1$:
$$
\tilde \Pauli_n = \{ i^a E_1 \otimes E_2 \otimes \dots \otimes E_n \ | \ a = 0, 1, 2 \text{ or } 3 
\text{ and } E_i =I, X, Y \text{ or } Z \}
$$
The complex number $i^a$ is the phase of the Pauli operator.
An important consequence of this construction is the fact that two Pauli errors either commute or anti-commute.
More precisely, given two errors $E$ and $E'$ of $\tilde \Pauli_n$, we have:
$$
E E' = (-1)^{f(E, E')} E' E,
$$
where $f(E, E')$ is the number of components $j$ such that $E_j$ and $E'_j$ are two different non-identity Pauli matrices.
For example, the operators $I \otimes X \otimes Z$ and $X \otimes Y \otimes Z$ in $\tilde \Pauli_3$ anti-commute because they anti-commute only in the second position.
This fact is at the origin of syndrome measurement.

\subsection{Stabilizer codes}

A stabilizer group $S$ is a commutative subgroup of $\tilde \Pauli_n$ which doesn't contain $-I$. 
A stabilizer group is generated by a family of commuting Pauli operators: $S = <S_1, S_2, \dots, S_r>$.
Without loss of generality, we can assume that these operators have phase 1. This ensure us that $-I$ is not in $S$.

Given $S$ a stabilizer group of $\tilde \Pauli_n$, the corresponding
{\em stabilizer code} $C(S)$ is defined as the set of fixed points of
the subgroup $S$ in $\hn$. Using the physical ket notation for vectors, we have:
$$
C(S) = \{ \ket \psi \in \hn \ | \ s \ket \psi = \ket \psi , \forall s \in S\}.
$$
%If you are not familiar with the ket notation, you can omit it. $\ket \psi$ is only a vector.
%A qubit is represented by a vector of $\h$. The space $\hn$ represents all the messages of $n$ qubits.

This subspace is not trivial by construction of a stabilizer group.
The integer $n$ is the length of the quantum code.
Assume that $S$ is generated by $r$ generators: $S = <S_1, S_2, \dots, S_r>$ with $S_i \in \tilde \Pauli_n$.
We call the {\em stabilizer matrix} of $C(S)$  the matrix $\HH \in
\mathcal M_{r, n} (\{I, X, Y, Z\})$ with the $i$-th row representing the generator $S_i$. The coefficient $\HH_{i, j}$ is the $j$-th component of $S_i$. 
For example, the quantum code associated with the 3 commuting generators $S_1 = (I \otimes X \otimes Z \otimes Y \otimes Z), S_2 = (Z \otimes Z \otimes X \otimes I \otimes Z)$, and $S_3 = (I \otimes Y \otimes Y \otimes Y \otimes Z)$ 
is described by the following stabilizer matrix:
$$
\HH =
\begin{pmatrix}
X & Z & I & I & Z\\
Z & X & X & Y & I\\
Y & Y & X & Y & Z\\
\end{pmatrix}
$$
A stabilizer code is completely defined by its stabilizer matrix,
though different stabilizer matrices can define the same group and
therefore the same code.
%A stabilizer code can be completely defined by its stabilizer matrix.
%A stabilizer matrix is a matrix with Pauli coefficients $I, X, Y$ or $Z$ with commuting rows. It defines a stabilizer code.

\subsection{Syndrome of an error}

Given $S = <S_1, S_2, \dots, S_r>$ a stabilizer group of $\tilde \Pauli_n$, 
assume that $\ket \psi \in C(S)$ is subjected to a Pauli error $E \in
\tilde \Pauli_n$. The vector $\ket \psi$ is corrupted to $E \ket \psi$.
To recover the original quantum state, we measure the syndrome to obtain information on the error.
The {\em syndrome} of $E \in \tilde \Pauli_n$ is $\sigma(E) = (\sigma_1, \sigma_2, \dots, \sigma_r) \in \F_2^r$ defined by:
$$
\sigma_i =
\begin{cases}
0 \text{ if } E \text{ and } S_i \text{ commute }\\
1 \text{ if } E \text{ and } S_i \text{ anti-commute }
\end{cases}
$$
Given the corrupted quantum state $E \ket \psi$ 
the syndrome of the error $E$ can be measured.
It satisfies $\sigma(EE') = \sigma(E)+\sigma(E')$. The syndrome of an error $s \in S$ which has no effect on the quantum code is $\sigma(s) = 0$.

\subsection{Minimum distance of a stabilizer code}

The phase $i^a$ of a Pauli error $E \in \tilde \Pauli_n$ does not play
a role because we want to protect 
quantum states of $C(S)$, that is vectors of $C(S)$ defined up to multiplication by a non-zero complex number.
Therefore, we will consider errors $E \in \Pauli_n$ defined up to phases. In
what follows, unless otherwise stated, the Pauli group will be the
abelian quotient group:
$$
\Pauli_n = \tilde \Pauli_n / \{\pm 1, \pm i\}.
$$
Given $E, E'$ in the group $\Pauli_n$, we will say that they commute if they commute in the original group $\tilde \Pauli_n$.
This misuse of language is not problematic because commutation doesn't depend on the phase.

If we receive a quantum state $E \ket \psi$, where $\ket \psi$ is in
the quantum code $C(S)$, we measure its syndrome $\sigma(E)$.
We then apply to $E \ket \psi$ an error $\tilde E$ such that $\sigma(\tilde E) = \sigma(E)$. 
After this process the quantum state is $\tilde E E \ket \psi$. It is corrupted by an error $\tilde E E$ of syndrome 0, because $\sigma(\tilde E E) = \sigma(\tilde E)+\sigma(E) = 0$.
There are two types of error of zero syndrome. If $\tilde E E $ is in $S$, it fixes the quantum code and we have recovered the quantum state.
Otherwise, the original quantum state is probably lost. Errors of zero
syndrome that are not in $S$ are called undetectable or {\em
  problematic} errors.

The above observation leads to the definition of the minimum distance $d$:
$$
d = \min \{ |E| \ | \ E \in \Pauli_n \backslash S, \sigma(E) = 0 \},
$$
where $|E|$ is the weight of $E$, it is the number of non-identity components of $E$.
In other words, $d$ is the minimum weight of a problematic error.
The set of errors of syndrome 0 in $\Pauli_n$ is frequently denoted by
$N(S)$, because it is the normalizer 
and the centralizer of the subgroup $S$ in $\tilde \Pauli_n$.
Thus, the minimum distance is the minimum weight of an error of $N(S) \backslash S$.

\subsection{Degeneracy}

An essential feature of quantum coding theory that sets it appart from
classical coding is degeneracy.
It allows for the same decoding procedure to correct a large number of
different errors.
More precisely, all the errors of a coset $E.S$ can be corrected by the same error $E$.
Indeed, assume that a state $\ket \psi$ of the quantum code is corrupted by an error $Es$, where $s \in S$.
Then, after application of $E$, we recover the original quantum state
because it is fixed by $S$. We have: $EEs \ket \psi = \ket \psi$.
To correct an error $E \in \Pauli_n$, it is sufficient to determine its coset $E.S$.

% \subsection{Dimension of a stabilizer code}

% The dimension of the quantum code $C(S)$ is $2^k$, where $k=n-\rank S$.
% It is the number of encoded qubits.
% Recall that the rank of the abelian group $S$ is the size of a minimum generating family.
% We can also see this rank as the rank of the row space of $\HH$. We will detail this in Section \ref{subsection:rank}. Thus we will denote this rank by $\rank \HH$, where $\HH$ is a stabilizer matrix of the quantum code. Thereby, the dimension formula $k=n-\rank \HH$ coincides with the dimension formula for classical codes.
% For example, the rank of the stabilizer group presented above is 2, because we have: $S_3 = S_1 S_2$. The dimension of the quantum code is then $2^{5-2}=2^3$.

\subsection{The $\F_2$-vector space structure, rank and dimension} 
\label{subsection:rank}
The dimension of the quantum code $C(S)$ is $2^k$, where $k=n-\rank S$.
It is the number of encoded qubits. The quantity $\rank S$ is simply
the rank of the abelian group $S$, i.e. the size of a minimum
generating family. The {\em rate} of the quantum code is defined as $R=k/n$.

The Pauli group $\Pauli_n$ can be seen as an $\F_2$-vector space of dimension $2n$, by the isomorphism:
\begin{align*}
\theta : \Pauli_n & \longrightarrow \F_2^{n} \times \F_2^{n} = \F_2^{2n} \\
X_i &\longmapsto (e_i | 0)\\
Z_i &\longmapsto (0 | e_i)\\
\end{align*}
where $X_i$ is $X$ on the $i$-th component and the identity on the other components.
Errors $Y_i$ and $Z_i$ are defined similarly.
The image of $Y_i = X_i Z_i$ is $\theta(X_i Y_i) = (e_i | e_i)$.
For example, the operator $I \otimes X \otimes Y \otimes Z$ corresponds to the vector $(0110|0011)$.
For this $\F_2$-vector space structure, the addition of vectors in
$\F_2^{2n}$ corresponds to componentwise multiplication of Pauli
errors.

By the isomorphism $\theta$, subgroups of $\Pauli_n$ are sent onto
$\F_2$-linear subspaces of $\F_2^{2n}$. 
The rank of a subgroup of $\Pauli_n$ is therefore also the dimension
of the corresponding subspace. If $\HH$ is a stabilizer matrix we will also write $\rank \HH$ to
denote the rank of its row-space, equivalently the rank of the
associated stabilizer group. Note that we may choose a stabilizer matrix with a larger number $r$ of rows than its rank.

We will find it convenient to keep the notation $I, X, Y$ and $Z$ for
stabilizer matrices, but we stress the binary vector space
structure that we will rely upon heavily in the next section.

With this vector space interpretation, the syndrome application:
\begin{align*}
\sigma : \Pauli_n & \longmapsto \F_2^{r}\\
E & \longrightarrow \sigma(E)
\end{align*}
can be regarded as an $\F_2$-linear map.

\subsection{The CSS construction}

One of the most popular ways of constructing quantum codes is the
Calderbank, Shor and Steane (CSS) construction \cite{CS96,St96}. 
A CSS code is a stabilizer code constructed from a stabilizer group $S$ such that:
$$
S = < S_1, S_2, \dots, S_{r_X}, S_{r_X +1}, \dots, S_{r_X + r_Z} >
$$
where $S_1, S_2, \dots, S_{r_X}$ are included in $\{I, X\}^{\otimes n}$ and $S_{r_X +1}, S_{r_X +2}, \dots, S_{r_X + r_Z}$ belong to $\{I, Z\}^{\otimes n}$.
This simplifies commutation relations because two errors of $\{I, X\}^{\otimes n}$ automatically commute and it is similar in $\{I, Z\}^{\otimes n}$.
In this case the stabilizer matrix $\HH$ is decomposed into two stabilizer matrices $\HH_X$ and $\HH_Z$. The matrix $\HH_X$ is composed of $r_X$ rows representing the stabilizers with coefficients in $\{I, X\}$ and the matrix $\HH_Z$ is composed of $r_Z$ rows which define the stabilizers with coefficients in $\{I, Z\}$.

Remark that the subgroup $\{I, X\}$ is isomorphic to $\F_2$, thus we can write the matrix $\HH_X$ as a binary matrix. The same remark is also valid for $\HH_Z$.
By this last isomorphism, rows of the matrices can be seen as binary vectors of length $n$ and the commutation relation between a row of $\HH_X$ and a row $\HH_Z$ corresponds to the orthogonality of these binary rows in $\F_2^{n}$.

Finally, a CSS code can be defined from two binary matrices $\HH_X
\in \mathcal M_{r_X, n}(\F_2)$ and $\HH_Z \in \mathcal M_{r_Z,
  n}(\F_2)$ with orthogonality between rows of $\HH_X$ and rows of $\HH_Z$ in $\F_2^n$.
The number of thus encoded qubits is:
$$
k=n-\rank \HH_X  - \rank \HH_Z,
$$
because ranks of the binary matrices $\HH_X$ and $\HH_Z$ coincide with ranks of the corresponding groups.
Denote by $C_X$ the classical code 
$\Ker \HH_X = \{c\in \F_2^n, \;\HH_X\, ^t\!c =0\}$ and denote by $C_Z$ the code $\Ker \HH_Z$.
The minimum distance of the quantum code is:
$$
d = \inf \{w(x) \ | \ x \in C_X \backslash C_Z^\perp \cup C_Z \backslash C_X^\perp \},
$$
where $w(x)$ is the Hamming weight of a binary vector.

\medskip

\noindent
{\bf Problematic errors.}
By the isomorphism of section~\ref{subsection:rank}, the error vector
$E$ can be seen as two simultaneous binary vectors,
$\theta(E)=(E_X,E_Z)$. The error $E$ has zero syndrome if and only if
\begin{equation}
  \label{eq:zerosyndrome}
  E_X\in C_Z \;\;\text{and}\;\; E_Z\in C_X
\end{equation}
The error $E$ is problematic if \eqref{eq:zerosyndrome} holds together 
with the condition
\begin{equation}
  \label{eq:problematic}
  E_X\not\in C_X^\perp \;\;\text{or}\;\; E_Z\not\in C_Z^\perp.
\end{equation}

\section{Capacity of the quantum erasure channel} \label{section:stabilizer}

The capacity of a quantum channel is the highest rate of a family of quantum codes with an asymptotic zero error probability after decoding. Such a rate is called achievable.
For the quantum erasure channel of erasure probability $p$, the
capacity $Q$ is $1-2p$ 
when $p \leq 1/2$ and it is zero above
$1/2$. The upper bound 
\begin{equation}
  \label{eq:capacity}
  Q \leq 1-2p
\end{equation}
comes from the no-cloning
theorem, see for example \cite{BDS97}. Therefore, it doesn't rely on
the quantum code structure.
Since our purpose is to obtain improved capacity bounds for particular
families of codes, namely quantum LDPC codes, we need to derive
capacity from the code structure: our first step
is to express achievable rates of stabilizer codes over the quantum erasure channel, as a function of their stabilizer matrices.

\subsection{The quantum erasure channel}

The quantum erasure channel admits several equivalent definitions. 
See for example \cite{GBP97}, \cite{Gr02}, \cite{Pr98}.
As a completely positive trace preserving map, it is given by:
$$
\proj \psi \longmapsto (1-p) \proj \psi + p \proj 2
$$
where $\ket \psi$ is a quantum state in $\h$ and the final state lives in $\C^3 = \h \oplus^\perp \C\ket 2$.
The vector $\ket 2$ is orthogonal to the space $\h$, it corresponds to
a lost qubit.
In this paper, we will use the definition based on the Pauli operators which is well adapted to the stabilizer formalism.
When we use the quantum erasure channel, each qubit is erased independently with probability $p$. An erased qubit is subjected to a random Pauli error $I, X, Y$ or $Z$ with equal probability 1/4 and we know that this qubit is erased.

This description of the quantum erasure channel can be deduced from the definition as a completely positive trace preserving map. Indeed, the orthogonality between $\ket 2$ and $\h$ allows us to measure the erased qubit. After, we replace the lost qubit $\proj 2$ by a totally random qubit of density matrix $I/2$. This random state is the original qubit subjected to a random error $I, X, Y$ or $Z$ with equal probability. Therefore, we recover the second definition.

On $n$ qubits, we denote by $\E \in \F_2^n$, the characteristic vector
of the erased positions. 
Each component of the vector $\E$ follows a Bernoulli distribution of probability $p$. That is, the probability of a given vector $\E \in \F_2^n$ is $p^{|\E|} (1-p)^{n-|\E|}$. The qubit in position $i$ is lost if and only if $\E_i=1$.
In this case, the quantum state is subjected to a random Pauli error
$E \in \Pauli_n$, which act trivially on the non-erased qubits: $E_i =
I$ if $\E_i = 0$. We write this condition $E \subset \E$ and will say
that erasure $\E$ {\em covers} the error $E$. Keep in mind
that $E$ is a Pauli operator with coefficients in $\{I, X, Y, Z\}$ and
$\E$ is a binary vector, the shorthand notation $E \subset \E$ expresses just that
the support of $E$ is included in the set of erased positions. Note
finally that
given $\E \in \F_2^n$, all errors $E \subset \E$ occur with the same probability.

An encoded quantum state $\ket \psi$ is corrupted to a state $E \ket
\psi$ by a random error $E$ for which we have the additional knowledge
$E\subset \E$. 
To recover the original quantum state, we compute the syndrome $\sigma
\in \F_2^r$ and must deduce from the couple $(\E, \sigma)$ an error
${\tilde E} \subset \E$. To correct the effect of $E$ we apply $\tilde
E$ and the final state is $\tilde E E \ket \psi$.
If the errors $E$ and $\tilde E$ are in the same coset modulo $S$,
then $\tilde E E$ is a stabilizer of the quantum code. Thus the final
quantum state is the original state. 
When $\tilde E$ is not equivalent to $E$, we will not, in general, recover the quantum state.
Note that in this case $\tilde E E$ is a problematic error and $\tilde
E E \subset \E$. When this happens, i.e. when the erasure vector covers
a problematic error, we will say that we have a {\em non-correctable}
erasure: otherwise the erasure is correctable.

\medskip
\noindent
{\bf Non-correctable erasures in the CSS case.}
From the characterization \eqref{eq:zerosyndrome} and
\eqref{eq:problematic} of problematic errors, we obtain the simple
characterisation of a non-correctable erasure in the CSS case.

\begin{prop}\label{prop:noncorrectable}
   Let 
$\HH =\left(
  \begin{smallmatrix}
    \HH_X\\
    \HH_Z
  \end{smallmatrix}
\right)$
be the stabilizer matrix of a CSS code, and let $C_X$ and $C_Z$ be the
corresponding classical binary codes. 
The erasure vector $\E\in\F_2^n$ is non-correctable if and only if
there exists a binary vector $v$ whose support is included in the
support of $\E$ and such that
$$v\in C_X\setminus C_Z^\perp \;\;\text{or}\;\; v\in C_Z\setminus C_X^\perp.$$
\end{prop}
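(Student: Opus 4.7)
The plan is to unpack the definition of a non-correctable erasure in the CSS setting, translate the problematic-error conditions \eqref{eq:zerosyndrome} and \eqref{eq:problematic} via the decomposition $\theta(E) = (E_X, E_Z)$, and then verify both implications essentially by construction. A non-correctable erasure $\E$ is, by definition, one that covers some problematic Pauli error $E$; since $\Supp(E) = \Supp(E_X) \cup \Supp(E_Z)$, the inclusion $E \subset \E$ is equivalent to $\Supp(E_X) \subseteq \Supp(\E)$ and $\Supp(E_Z) \subseteq \Supp(\E)$. The task then reduces to a purely combinatorial statement about vectors of $\F_2^n$ living in $C_X$, $C_Z$, and their duals.

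For the ``if'' direction, I would take a vector $v$ with $\Supp(v) \subseteq \Supp(\E)$ that lies in $C_X \setminus C_Z^\perp$ and construct the CSS error $E$ defined by $E_X = 0$ and $E_Z = v$. The zero-syndrome condition \eqref{eq:zerosyndrome} holds trivially: $E_X = 0 \in C_Z$ and $E_Z = v \in C_X$. The problematic condition \eqref{eq:problematic} holds because $E_Z = v \notin C_Z^\perp$ by hypothesis. Finally $\Supp(E) = \Supp(v) \subseteq \Supp(\E)$, so $\E$ covers a problematic error and is therefore non-correctable. The symmetric case $v \in C_Z \setminus C_X^\perp$ is handled by swapping the roles and taking $E_X = v$, $E_Z = 0$.

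For the converse, I would start from a problematic error $E = (E_X, E_Z) \subset \E$. Condition \eqref{eq:problematic} gives two cases. If $E_Z \notin C_Z^\perp$, then I set $v := E_Z$: by \eqref{eq:zerosyndrome} we have $v \in C_X$, by assumption $v \notin C_Z^\perp$, and $\Supp(v) = \Supp(E_Z) \subseteq \Supp(E) \subseteq \Supp(\E)$. If instead $E_X \notin C_X^\perp$, the analogous choice $v := E_X$ yields $v \in C_Z \setminus C_X^\perp$ with $\Supp(v) \subseteq \Supp(\E)$. This exhausts both cases and completes the equivalence.

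There is no real obstacle here; the whole proposition is a direct translation of the general problematic-error characterization \eqref{eq:zerosyndrome}--\eqref{eq:problematic} into the CSS language, and the key observation is simply that in the CSS case the $X$-part and $Z$-part of a Pauli error decouple, so one may always take a witness $v$ whose ``other'' component is zero. The one point to keep track of carefully is the asymmetry in the statement: a $Z$-type witness must lie in $C_X \setminus C_Z^\perp$ (not $C_X \setminus C_X^\perp$), because the $Z$-component of an error is detected by $\HH_X$ but coincides with an element of $S$ only when it lies in the row space of $\HH_Z$, namely $C_Z^\perp$.
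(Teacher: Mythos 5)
Your proof is correct and follows exactly the route the paper intends: the paper states the proposition without proof as an immediate consequence of the problematic-error characterization~\eqref{eq:zerosyndrome}--\eqref{eq:problematic}, and your argument is precisely the expected unpacking — decoupling the $X$- and $Z$-parts, choosing a one-sided witness in the forward direction, and projecting onto the offending component in the converse. The closing remark about the cross-pairing $C_X\setminus C_Z^\perp$ (detection by $\HH_X$ versus degeneracy via the row space of $\HH_Z$) correctly identifies the one point where a careless reader might slip.
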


\subsection{An example of a non-correctable erasure}

As an example of the general case, consider the stabilizer code defined by the matrix:
$$
\HH =
\begin{pmatrix}
I & X & Z & Y & Z\\
Z & Z & X & I & Z\\
I & Y & Y & Y & Z\\
\end{pmatrix}
$$
If the erasure is $\E = (0, 1, 1, 0, 0)$, there are $2^{2|\E|}=2^4$ possible errors:
$$
\{ E \in \Pauli_n \ | \ E \subset \E \} = 
\{ X_2^{a_2} Z_2^{b_2} X_3^{a_3} Z_3^{b_3} \ | \ a_i, b_i \in \F_2 \},
$$
where the error $X_i$ is the error with $i$-th component $X$ and which is the identity outside $i$.
The operator $Z_i$ is defined similarly and $Y_i$ is the error $X_i Z_i$.

\noindent
Let us focus our attention on the ``erased matrix''
$$
\HH_\E =
\left(
\begin{array}{cc}
X & Z\\
Z & X\\
Y & Y\\
\end{array}
\right)
$$
which is the submatrix of $\HH$ whose columns are the columns indexed by the erased positions.
It is natural to introduce this matrix because the syndrome of an
error included in the erasure $\E$ depends only on 
these columns.
We remark that the third row of this matrix is the product of the
first two rows. Thus, the syndrome $u \in \F_2^3$ of an error $E \subset
\E$ satisfies $u_3 = u_1+u_2$. It depends only on the first two rows of
$\HH_\E$. Therefore, there are $2^2$ different syndrome values for
errors $E$ that are covered by the erasure.

\noindent
Now let us look at the remaining columns.
The non-erased submatrix $\HH_{\bar \E}$ is:
$$
\HH_{\bar \E} =
\left(
\begin{array}{ccc}
I & Y & Z\\
Z & I & Z\\
I & Y & Z\\
\end{array}
\right).
$$
Assume that $E$ and $E'$ are two errors included in $\E$, which are in
the same degeneracy class. That is, they differ 
by right-multiplication by an error $s \in S$. 
The restriction of the error $s = EE'$ to $\bar \E$ is the identity. The rank of this submatrix is $\rank \HH_{\bar \E} = 2$ because the first row and the third row are identical. Therefore, we have two possibilities for $s$: either $s=I^{\otimes 5}$ or $s = S_1 S_3 = I \otimes Z \otimes X \otimes I \otimes I$.
There are two errors in each degeneracy class and four possible
syndrome values: therefore, if there were no problematic error
included in $\E$ the total number of errors included in $\E$ would be
$2\times 4=2^3$, but we have seen that it actually is $2^4$. 
This erasure is not correctable.

\subsection{Two enumeration lemmas}

We will pursue the preceding approach.
Our strategy is to determine the cardinalities of two sets of Pauli errors:
\begin{itemize}
\item $N(S)_\E = \{ E \in N(S) \ | \ E \subset \E \}$,\\
recall that $N(S)$ is the set of Pauli errors of syndrome 0.
\item $S_\E = \{ s \in S \ | \ s \subset \E \}$.
\end{itemize}
We will use the submatrices introduced in the above example.

\bigskip
\textbf{The random submatrix $\HH_\E$:}
Let $\HH$ be a matrix of a stabilizer code. With an erasure $\E \in \F_2^n$, we associate the submatrix $\HH_\E$ of the stabilizer matrix $\HH \in \mathcal M_{r,n}(\Pauli_1)$ composed of the columns of the erased qubits. This is the submatrix of the columns of index $i$ such that $\E_i=1$. Similarly $\HH_{\bar \E}$ is the matrix of the non-erased qubits, corresponding to the conjugate $\bar \E$ of $\E$ defined by: $\bar \E_i = \E_i+1$.
%The rank of a matrix with coefficients in $\Pauli_1$ has been defined in Section \ref{subsection:rank}. It is the rank of the subgroup of $\Pauli_n$ generated by its rows or the dimension of this subgroup regarded as an $\F_2$-vector space.

\begin{lemma}\label{lemma:dimension_syndrome_space}
Let $S$ be a stabilizer group of matrix $\HH \in \mathcal M_{r, n}$.
The set $N(S)_\E$ is an $\F_2$-vector space of dimension $2|\E|-\rank \HH_\E$.
\end{lemma}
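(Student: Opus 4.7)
The plan is to realize $N(S)_\E$ as the kernel of an $\F_2$-linear map on the ambient space of errors supported on $\E$, and then apply the rank-nullity theorem, with the main content being the identification of the image dimension with $\rank \HH_\E$.

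First, I would observe that, via the isomorphism $\theta$ of Section~\ref{subsection:rank}, the set $\{E\in\Pauli_n \mid E\subset \E\}$ is naturally an $\F_2$-vector subspace of $\Pauli_n$ of dimension $2|\E|$, since on each erased coordinate an operator is specified by two independent bits (the $X$-part and the $Z$-part) and is forced to be $I$ elsewhere. The syndrome map $\sigma:\Pauli_n\to\F_2^r$ is $\F_2$-linear, hence its restriction $\sigma_{|\E}$ to this subspace is linear with kernel exactly $N(S)_\E$. Rank-nullity then gives
$$\dim N(S)_\E = 2|\E| - \dim \im \sigma_{|\E},$$
so everything reduces to showing $\dim \im \sigma_{|\E} = \rank \HH_\E$.

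Next I would unfold what $\sigma_{|\E}$ looks like concretely. The $i$-th coordinate of $\sigma(E)$ records whether $E$ and $S_i$ commute or anticommute, and when $E\subset \E$ this test depends only on the restrictions of $E$ and of $S_i$ to the erased positions, i.e.\ only on the $i$-th row of $\HH_\E$. In symplectic terms, writing $\theta(E)=(E_X|_\E,E_Z|_\E)\in\F_2^{2|\E|}$ and $\theta(S_i|_\E)=(h_i^X,h_i^Z)\in\F_2^{2|\E|}$, we have
$$\sigma_i(E) = \omega\bigl(\theta(E),\theta(S_i|_\E)\bigr),$$
where $\omega$ is the standard symplectic form on $\F_2^{2|\E|}$. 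Thus $\sigma_{|\E}$ is the composition of the isomorphism $\theta$ with the map $v\mapsto (\omega(v,\theta(S_i|_\E)))_{1\le i\le r}$.

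The key step — and the one requiring a moment of care — is to compute the dimension of the image of this map. Since $\omega$ is non-degenerate on $\F_2^{2|\E|}$, the pairing $v\mapsto \omega(v,\cdot)$ identifies $\F_2^{2|\E|}$ with its $\F_2$-dual, so the kernel of $v\mapsto (\omega(v,\theta(S_i|_\E)))_i$ is exactly the symplectic orthogonal $W^\perp$ of the subspace $W=\langle \theta(S_i|_\E):1\le i\le r\rangle$, which has dimension $2|\E|-\dim W$. Consequently the image has dimension $\dim W$, which by definition is $\rank \HH_\E$. Plugging this into the rank-nullity formula yields $\dim N(S)_\E = 2|\E|-\rank \HH_\E$, as required. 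The only potential pitfall is to insist on the non-degeneracy of $\omega$ (which holds because any nonzero Pauli operator admits another Pauli operator with which it anticommutes); everything else is bookkeeping.
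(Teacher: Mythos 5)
Your proposal is correct and takes essentially the same approach as the paper: realize $N(S)_\E$ as the kernel of the restricted syndrome map $\sigma_\E$, apply rank-nullity, and identify the image dimension with $\rank \HH_\E$. The paper dismisses this last identification as "straightforward to see," whereas you fill it in via the non-degeneracy of the symplectic form on $\F_2^{2|\E|}$ — a correct and welcome elaboration, but not a different route.
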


\begin{proof}
The $\F_2$-linear structure of the Pauli group $\Pauli_n$ has been
detailed in Section \ref{subsection:rank}. 
The syndrome is an $\F_2$-linear map from $\Pauli_n$ to $\F_2^r$.
Its restriction $\sigma_\E$ to the space of Pauli errors included in
$\E$ is also an $\F_2$-linear map. 
The subspace  $N(S)_\E$ is simply the kernel of $\sigma_\E$. Its dimension is $2|\E| - \dim \im \sigma_\E$.
The restricted syndrome function $\sigma_\E$ depends only on the submatrix $\HH_\E$.
It is straightforward to see that the dimension of its image is the rank of $\HH_\E$.
\end{proof}

\begin{lemma}\label{lemma:dimension_Se}
Let $S$ be a stabilizer group of matrix $\HH \in \mathcal M_{r, n}$.
The set $S_\E$ is an $\F_2$-vector space of dimension $\rank \HH - \rank \HH_{\bar \E}$.
\end{lemma}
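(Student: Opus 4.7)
The plan is to mirror the proof of Lemma~\ref{lemma:dimension_syndrome_space} but using restriction instead of the syndrome map. Recall from Section~\ref{subsection:rank} that the Pauli group $\Pauli_n$ carries an $\F_2$-vector space structure, and under this structure the subgroup $S$ is a subspace of dimension $\rank \HH$. The subset $S_\E$ is clearly an $\F_2$-subspace: it is closed under the componentwise multiplication that corresponds to addition in $\Pauli_n$, since the support of a product is contained in the union of the supports.

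The key step is to introduce the restriction-to-unerased-positions map
$$\rho : S \longrightarrow \Pauli_{|\bar \E|},$$
which sends a Pauli operator to its restriction to the coordinates indexed by $\bar \E$. This is $\F_2$-linear for the vector space structure of $\Pauli_n$ and $\Pauli_{|\bar \E|}$. By definition, an element $s \in S$ satisfies $s \subset \E$ if and only if its restriction to $\bar \E$ is the identity, i.e.\ if and only if $\rho(s) = 0$. Hence $S_\E = \Ker \rho$.

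It remains to identify $\im \rho$. Writing an arbitrary element of $S$ as an $\F_2$-linear combination of the rows $S_1,\dots,S_r$ of $\HH$, the map $\rho$ sends it to the same linear combination of the rows of $\HH_{\bar \E}$. Therefore $\im \rho$ is exactly the row-space of $\HH_{\bar \E}$, which has dimension $\rank \HH_{\bar \E}$. Applying the rank-nullity theorem to $\rho$ gives
$$\dim S_\E = \dim S - \dim \im \rho = \rank \HH - \rank \HH_{\bar \E},$$
as required. The only subtlety, which should be acknowledged briefly, is that the $\F_2$ structure on $\Pauli_n$ and the corresponding correspondence between rank of the matrix and dimension of the subspace are precisely those discussed in Section~\ref{subsection:rank}, so no obstacle of substance arises.
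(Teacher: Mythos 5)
Your proof is correct and follows essentially the same route as the paper: both identify $S_\E$ as the kernel of the restriction-to-$\bar\E$ map on $S$, observe that the image is the row space of $\HH_{\bar\E}$, and conclude by rank--nullity. The only difference is that you spell out why $\im\rho$ equals the row space of $\HH_{\bar\E}$, which the paper leaves implicit.
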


\begin{proof}
The set $S_\E$ is the kernel of the $\F_2$-linear map:
\begin{align*} 
S & \longrightarrow \{E \in \Pauli_n \ | \ E \subset \E\}\\
s & \longmapsto s_{|\bar \E}
\end{align*}
By definition of the rank, the image of this application is a space of dimension $\rank \HH_{\bar \E}$, and the group $S$ has dimension $\rank \HH$. Therefore, $\dim S_\E = \rank \HH - \rank \HH_{\bar \E}$.
\end{proof}

\bigskip 
From the lemmas, there are $2^{\rank \HH_\E}$ different syndromes and in each coset modulo $S$
there are $2^{\rank \HH - \rank \HH_{\bar \E} }$ errors included in
$\E$. 
Therefore the number of correctable error patterns is $2^{\rank \HH +
  \rank \HH_\E - \rank \HH_{\bar \E} }$, and since there are $2^{2|\E|}$
error vectors covered by $\E$, the erasure vector $\E$ can be corrected
only if:
\begin{eqnarray}\label{eqn:rank_equation_idea}
2|\E| \leq \rank \HH + \rank \HH_\E - \rank \HH_{\bar \E}.
\end{eqnarray}
The rank of  $\HH$ is $\rank \HH = (1-R)n$ where $R$ is the rate of
the quantum code. When $p\leq 1/2$, there are typically more
non-erased coordinates than erased ones, and it is reasonable to
expect that the larger matrix $\HH_{\bar \E}$ has a higher rank than
the smaller matrix $\HH_\E$. 
Equation~\eqref{eqn:rank_equation_idea} therefore becomes simply $2|\E| \leq \rank \HH$
and the typical weight of an erasure being $|\E| = np$, we obtain:
$$
R \leq 1-2p
$$
which recovers \eqref{eq:capacity} for the class of stabilizer codes.
In the next section we will make this informal argument rigorous and
pave the way for improvements for particular classes of quantum codes.

\subsection{A combinatorial bound on the capacity}

Now, we will give a rigorous proof using an entropic formulation of this idea and Fano's inequality.

Recall that a rate $R \in [0, 1]$ is achievable if there exists a
family of codes of rates $R_t$ converging to $R$ with vanishing error probability after decoding.
Denote expectation by $\mathbb E$.
Let $(\HH_t)_t$ be a sequence of stabilizer matrices and denote by $n_t$ the length of the code defined by the matrix $\HH_t$.

\begin{defi}\label{defi:D}
The rank difference function $D$ associated with the sequence $(\HH_t)_t$ of stabilizer matrices is $D(p) = \limsup_t \Delta_t(p)$ where :
$$
\Delta_t(p) = \frac{ \mathbb E_p [\rank \HH_{t, \bar \E} - \rank \HH_{t, \E}] }{n_t}.
$$
\end{defi}

\begin{theo}\label{theo:capacity_stabilizer}
Achievable rates of a sequence of stabilizer codes of matrices $(\HH_t)_{t \in \N}$, over the quantum erasure channel of probability $p$, satisfy:
$$
R \leq 1 - 2p - D(p).
$$
where $D(p)$ is the rank difference function of the family $(\HH_t)_t$.
\end{theo}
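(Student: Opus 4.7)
The plan is to translate the informal rank-counting argument behind equation~(\ref{eqn:rank_equation_idea}) into a rigorous bound via Fano's inequality, with the two enumeration lemmas doing all the combinatorial work. Fix a sequence $(\HH_t)_t$, and for each $t$ view the decoding task as a classical estimation problem: given the random pair $(\E,\sigma)$, the decoder must recover the coset $X = E\cdot S$ of the channel error $E$. The decoder's guess $\hat X = \tilde E \cdot S$ is a (deterministic) function of $(\E,\sigma)$, and the error event is exactly $\{X\neq \hat X\}$, so $P_{e,t} = \Prob(X\neq \hat X)$.

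The key step is to compute $H(X\mid \E,\sigma)$. Given $\E$, the channel produces $E$ uniformly on the $4^{|\E|}$ Pauli operators supported in $\E$. A short check shows that for any coset $X_0$ meeting this set, the intersection has exactly $|S_\E|$ elements (pick a representative $E_0\subset \E$; then $E_0 s \subset \E$ iff $s\subset \E$, i.e.\ $s\in S_\E$). Hence, given $\E$, the coset $X$ is uniform over the $4^{|\E|}/|S_\E|$ cosets meeting $\{E\subset \E\}$. Conditioning further on $\sigma$, Lemma~\ref{lemma:dimension_syndrome_space} shows that the errors in $\{E\subset \E\}$ with syndrome $\sigma$ form an affine space of dimension $2|\E|-\rank \HH_{t,\E}$; combined with Lemma~\ref{lemma:dimension_Se} this set spans exactly $2^{M_t(\E)}$ cosets, where
$$M_t(\E) = 2|\E| + \rank \HH_{t,\bar \E} - \rank \HH_{t,\E} - \rank \HH_t,$$
and the above uniformity argument persists. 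Therefore $H(X\mid \E,\sigma) = \mathbb E[M_t(\E)]$, and taking expectations yields $\mathbb E[M_t(\E)] = n_t\bigl(2p + \Delta_t(p) - (1-R_t)\bigr)$ using $\rank \HH_t = (1-R_t)n_t$.

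Now I apply Fano's inequality to the estimator $\hat X$ of $X$ in the space $\mathcal X_t$ of $S$-cosets, noting $\log|\mathcal X_t| \leq 2n_t$. Because $\hat X$ is a function of $(\E,\sigma)$, the data-processing inequality gives $H(X\mid \hat X) \geq H(X\mid \E,\sigma)$, whence
$$n_t\bigl(2p + \Delta_t(p) - (1-R_t)\bigr) \;\leq\; H(X\mid \hat X) \;\leq\; 1 + 2n_t P_{e,t}.$$
Dividing by $n_t$ and letting $t\to\infty$ along an achievable sequence, where $R_t\to R$ and $P_{e,t}\to 0$, the right-hand side tends to $0$. Since $2p - (1-R_t)\to 2p - (1-R)$ converges, taking $\limsup$ on the left gives $2p - (1-R) + \limsup_t \Delta_t(p) \leq 0$, i.e.\ $R \leq 1 - 2p - D(p)$, as required.

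I expect the only real subtlety to lie in the uniformity claim for $X\mid(\E,\sigma)$: one needs the constancy $|X_0\cap\{E\subset \E\}| = |S_\E|$ to hold both before and after fixing the syndrome, so that the conditional distribution of $X$ stays uniform on consistent cosets. Once this is set up cleanly via the two lemmas, the remainder is a standard Fano argument together with a careful $\limsup$ step, isolating the deterministic convergent pieces (the rate term) from the term defining $D(p)$.
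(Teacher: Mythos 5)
Your proof is correct and follows essentially the same route as the paper: Fano's inequality applied to the coset random variable $X = E\cdot S$ with side information $(\E,\Sigma)$, the conditional entropy computed via Lemmas~\ref{lemma:dimension_syndrome_space} and~\ref{lemma:dimension_Se} exactly as in Lemma~\ref{lemma:conditional_entropy} (you count cosets directly where the paper computes the coset probability $|S_v|/|N(S)_v|$, but these are the same calculation), and the concluding $\limsup$ step. The only cosmetic difference is that you invoke Fano in the form $H(X\mid\hat X)\leq 1+P_e\log|\mathcal X|$ together with data processing, whereas the paper applies the $P_{err}\geq (H(X\mid Y)-1)/\log|\mathcal X|$ form directly; these are equivalent.
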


\begin{proof}
We shall apply the classical Fano inequality, see for example
\cite{CT91}.
Recall that if $X, Y$ and $\hat X$ are any three random variables such
that $\hat X$ depends only on $Y$, 
then Fano's inequality states:
$$
P_{err} := \mathbb P(\hat X \neq X) \geq \frac{H(X|Y) - 1}{\log(|\X|)}.
$$
where $X$ takes its values in $\X$.

Over the quantum erasure channel, $\E$ is the erasure vector random variable with distribution
$\mathbb P(\E=v)= p^{|v|}(1-p)^{n-|v|}$. The error random variable $E$ is uniformly distributed among the errors acting on the erased components.
We apply Fano's inequality when $X$ is the information we need to
recover the quantum state, namely the coset $E.S$ of the Pauli error
vector $E$. We set the variable $Y$ to be the couple $Y=(\E , \Sigma)$
where $\E$ is the erasure vector random variable defined above and $\Sigma =\sigma(E)$ is the syndrome
of $E$. The variable $\hat X$ is the best possible estimation of $X$
given $Y$, meaning here that the decoding error probability $P_{err}$ is the probability to have $X \neq \hat X$.

The conditional entropy is decomposed as:
$$
H(X | \E, \Sigma) = \sum_{v, y} \mathbb P \big( (\E, \Sigma) = (v, y) \big) H(X | \E = v, \Sigma = y).
$$
We have:
$$
H(X | \E = v, \Sigma = y) = 2|v|-\rank \HH + \rank \HH_{\bar v}-\rank \HH_v,
$$
the proof of which is detailed in Lemma~\ref{lemma:conditional_entropy} below.
We see that the value of $H(X | \E = v, \Sigma = y)$ is independent of $y$, thus we have:
\begin{align*}
H(X | \E, \Sigma)
& =
\sum_v \mathbb P (\E = v) \big( 2|v| - \rank \HH - \rank \HH_{v} + \rank \HH_{\bar v} \big)\\
& = 2np - \rank \HH + \mathbb E_p (\rank \HH_{\bar \E} - \rank \HH_{\E}).
\end{align*}

The random variable $X = E.S$ takes on values in the quotient group
$\X = \Pauli_n/S$. This quotient group is composed of $|\X| = 2^{2n-\rank \HH}$ classes.
From Fano's inequality we get, upperbounding the denominator by $2n-\rank \HH \leq 2n$,
\begin{align*}
P_{err}  &
\geq \frac{2np - \rank \HH + \mathbb E_p (\rank \HH_{\bar \E}-\rank \HH_{\E}) - 1}{2n}
\end{align*}

\noindent
The rate of the quantum code is $R = 1- \rank \HH /n$.
If the error probability goes to zero, then the rate of the quantum
code family satisfies:
$$
\limsup R \leq 1- 2p - D(p).
$$
\end{proof}

\begin{lemma} \label{lemma:conditional_entropy}
Let $S$ be a stabilizer group of matrix $\HH$.
The conditional entropy of $X = E.S$ given $\E=v$ and $\Sigma=y$ is:
$$
H(X | \E = v, \Sigma = y) = 2|v|-\rank \HH + \rank \HH_{\bar v}-\rank \HH_v,
$$
when the probability to have $\E = v$ and $\Sigma = y$ is non zero. 
\end{lemma}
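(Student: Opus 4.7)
The plan is to compute the conditional distribution of $X=E.S$ given $\E = v$ and $\Sigma = y$ explicitly, show that it is uniform on an identifiable set, and count the size of that set using the two enumeration lemmas (Lemma~\ref{lemma:dimension_syndrome_space} and Lemma~\ref{lemma:dimension_Se}).

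First I would unfold the conditional distributions. Given $\E = v$, the error $E$ is uniform on the set $\{E\in\Pauli_n : E\subset v\}$, which has cardinality $2^{2|v|}$. Conditioning further on $\Sigma = y$, the set of admissible errors becomes
$$A_{v,y} = \{E\in\Pauli_n : E\subset v,\ \sigma(E) = y\},$$
on which $E$ is still uniform (because on the fibre $\E = v$ the syndrome is an $\F_2$-linear function of $E$ and the prior on $E$ is uniform on the linear space $\{E\subset v\}$). By Lemma~\ref{lemma:dimension_syndrome_space}, the kernel of $\sigma$ restricted to $\{E\subset v\}$ has dimension $2|v|-\rank\HH_v$, so the coset $A_{v,y}$ (which is nonempty by hypothesis on $v,y$) has cardinality $|A_{v,y}| = 2^{2|v|-\rank\HH_v}$.

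Next I would push this uniform distribution forward through the quotient map $E\mapsto E.S$. The image of $A_{v,y}$ in $\Pauli_n/S$ consists of those cosets $E.S$ meeting $A_{v,y}$; for any such coset, the intersection $(E.S)\cap A_{v,y}$ has size $|S_v|$. Indeed, if $E\in A_{v,y}$ then $Es\in A_{v,y}$ iff $s\in S$ satisfies $s\subset v$, i.e.\ $s\in S_v$; and conversely any two elements of $(E.S)\cap A_{v,y}$ differ by such an $s$. By Lemma~\ref{lemma:dimension_Se}, $|S_v| = 2^{\rank\HH - \rank\HH_{\bar v}}$. Hence each coset contributes equally, the pushforward distribution is uniform, and the number of cosets reached is
$$\frac{|A_{v,y}|}{|S_v|} = 2^{\,2|v|-\rank\HH_v - \rank\HH + \rank\HH_{\bar v}}.$$

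Finally, the entropy of a uniform distribution on a set of this size is the base-$2$ log of the cardinality, yielding exactly
$$H(X\mid \E = v,\ \Sigma = y) = 2|v| - \rank\HH + \rank\HH_{\bar v} - \rank\HH_v,$$
as claimed. I do not foresee a genuine obstacle: the only point requiring care is verifying that the fibers $(E.S)\cap A_{v,y}$ all have the common size $|S_v|$, which amounts to the elementary observation that $S_v$ acts freely and transitively on each such fiber by multiplication. Once this is in place, the computation is just a matter of combining the two earlier lemmas.
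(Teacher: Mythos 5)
Your proposal is correct and follows essentially the same route as the paper: both proofs observe that the conditional distribution of $X$ is uniform, and both invoke Lemma~\ref{lemma:dimension_syndrome_space} and Lemma~\ref{lemma:dimension_Se} to count, respectively, the number of errors compatible with $(\E,\Sigma)=(v,y)$ and the size of each degeneracy fibre $S_v$. The only cosmetic difference is that the paper computes the single-coset probability $|S_v|/|N(S)_v|$ (after translating to reduce to $y=0$), whereas you count the cosets in the support directly; the two computations are identical.
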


\begin{proof}
Recall that given the erasure $\E$, the distribution of $E$ is
uniform inside the support of~$\E$. Therefore,
the probability of a coset $E.S$, assuming that the erasure is $\E=v$ and the syndrome is $\Sigma=y$, is:
$$
\mathbb P( X = E.S | \E=v, \Sigma=y) = 
\frac{ |\{ P \in E.S \ | \ P \subset v, \sigma(P) = y \}| }
{ |\{ P \in \Pauli_n \ | \ P \subset v, \sigma(P) = y \}| }.
$$
When this probability is non-zero, 
by linearity, (multiply by an operator $T\subset v$ of syndrome
$y$), we can assume that $y=0$. The set in the denominator
is the subgroup $N(S)_v$ and
the set in the numerator is a coset of the subgroup $S_v$. Applying
Lemmas \ref{lemma:dimension_syndrome_space} and \ref{lemma:dimension_Se} we have:
$$
\mathbb P( X = E.S | \E=v, \Sigma=y) = \frac{|S_v|}{|N(S)_v|}
= 2^{-2|v|+\rank \HH  -\rank \HH_{\bar v}+\rank \HH_v }.
$$
whence 
$$
H(X | \E = v, \Sigma = y) = 2|v|-\rank \HH + \rank \HH_{\bar v}-\rank \HH_v.
$$
\end{proof}

\bigskip
The following corollary proves the efficiency of our method. We recover the upper bound given by the capacity of the quantum erasure channel \cite{BDS97}. This bound is deduced only from combinatorial properties of stabilizer codes. It doesn't involve the no-cloning theorem.

\begin{coro}\label{cor:1-2p}
Achievable rates of a sequence of stabilizer codes of matrices $(\HH_t)_{t \in \N}$, over the quantum erasure channel of probability $p$, satisfy:
$$
R \leq 1 - 2p,
$$
when $p \leq 1/2$.
\end{coro}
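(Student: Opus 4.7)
The plan is to deduce the corollary directly from Theorem~\ref{theo:capacity_stabilizer}: since that theorem gives $R \leq 1-2p-D(p)$, it suffices to prove that the rank difference function satisfies $D(p) \geq 0$ whenever $p \leq 1/2$. In fact I would prove the stronger pointwise statement that $\Delta_t(p) \geq 0$ for every $t$ and every $p \leq 1/2$, which immediately gives $D(p) = \limsup_t \Delta_t(p) \geq 0$.

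The main lemma I would establish is that, for any fixed stabilizer matrix $\HH$ of length $n$, the function
$$f(p) = \mathbb E_p[\rank \HH_\E]$$
is monotonically non-decreasing in $p \in [0,1]$. The natural way to prove this is via a coupling: draw i.i.d. uniform variables $U_1,\dots,U_n$ on $[0,1]$, and for each $p$ define $\E^{(p)}_i = 1$ iff $U_i \leq p$. Then $\E^{(p)}$ has the correct Bernoulli distribution, and for $p \leq p'$ one has $\E^{(p)} \subseteq \E^{(p')}$ coordinatewise in the coupling. Consequently $\HH_{\E^{(p)}}$ is a submatrix of $\HH_{\E^{(p')}}$, so $\rank \HH_{\E^{(p)}} \leq \rank \HH_{\E^{(p')}}$ pointwise, and taking expectations yields $f(p) \leq f(p')$.

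The second observation is the distributional symmetry $\mathbb E_p[\rank \HH_{\bar\E}] = f(1-p)$, which holds because under the Bernoulli$(p)$ law the complement $\bar\E$ is Bernoulli$(1-p)$. Combining the two, for each $t$ we get
$$\Delta_t(p) = \frac{f_t(1-p) - f_t(p)}{n_t},$$
and when $p \leq 1/2$ we have $p \leq 1-p$, so monotonicity forces $f_t(p) \leq f_t(1-p)$ and hence $\Delta_t(p) \geq 0$. Passing to the $\limsup$ gives $D(p) \geq 0$, and plugging this into Theorem~\ref{theo:capacity_stabilizer} yields $R \leq 1-2p$.

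I do not expect any serious obstacle in this argument: the monotonicity coupling is standard, and the symmetry between $\E$ and $\bar\E$ is built into the definitions. The only subtlety worth mentioning explicitly in a full write-up is verifying that the coupling preserves the right marginals, which is immediate from the fact that $\Pr(U_i \leq p) = p$.
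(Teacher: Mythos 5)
Your proof is correct and follows the same overall structure as the paper's: both decompose $\Delta_t(p)$ as $\phi_t(1-p) - \phi_t(p)$, observe that $p \leq 1-p$ when $p \leq 1/2$, and reduce to showing that $\phi_t$ (your $f_t/n_t$) is non-decreasing. The only difference is how the monotonicity sub-lemma is established. The paper proves it (Proposition~\ref{prop:increasing} in Appendix~\ref{appendix:concavity_inequality}) by introducing the multilinear extension $\Phi(x_1,\dots,x_n)$ of the expected rank, noting it is affine in each variable, and showing each partial derivative is non-negative because adding a column cannot decrease rank. Your coupling argument --- realize $\E^{(p)}$ as $\{i : U_i \leq p\}$ for i.i.d.\ uniforms, so that $\E^{(p)} \subseteq \E^{(p')}$ almost surely when $p \leq p'$, and use monotonicity of rank under adding columns --- is a more elementary and probabilistically natural way to get the same conclusion, and it avoids any appeal to the appendix. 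On the other hand, the paper's multilinear-extension machinery is reused to prove the \emph{concavity} of $\phi_t$ (Proposition~\ref{prop:concavity}), which is needed later for the lower bound on $\Delta$ in \eqref{eq:fconcave}; your coupling gives monotonicity cleanly but does not by itself yield concavity, so the appendix cannot be dispensed with entirely. For the purpose of this corollary alone, though, your argument is complete and somewhat simpler.
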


\begin{proof}
To prove the corollary, it suffices to remark that $\Delta_t(p)$ in
Theorem~\ref{theo:capacity_stabilizer} is
non-negative when $p \leq 1/2$. Observe that we can
write the function $\Delta_t$ as:
$$
\Delta_t(p) = \phi_t(1-p) - \phi_t(p),
$$
where $\phi_t(p) = \mathbb E_p (\rank \HH_{t, \E})/n_t$. It is intuitively
clear and it is formally stated and proved in Appendix~\ref{appendix:concavity_inequality}
(Proposition~\ref{prop:increasing}) that $\phi_t$ is an increasing
function of $p$. The corollary follows.
\end{proof}

Our goal is now to improve on Corollary~\ref{cor:1-2p} by finding non-zero
lower bounds on $D(p)$. This cannot be done for stabilizer codes in
general since they are known to achieve capacity of the quantum
erasure channel, but we can obtain such improvements for {\em sparse}
quantum codes, i.e. codes that have sparse stabilizer matrices.
Our most general result in this direction is
Theorem~\ref{theo:stabilizer} below. It is
somewhat reminiscent of an upper bound on achievable rates of
classical LDPC codes for the classical erasure channel \cite{RU08}.

\subsection{Reduction to the study of the mean rank of a random
  submatrix of the stabilizer matrix}\label{section:method}

\begin{theo}
  \label{theo:stabilizer}
  Let $\CC$ be any family of stabilizer codes of rates at least $R$
  and achieving vanishing decoding error probability over the quantum
  erasure channel of erasure probability $p$. Suppose furthermore that
  every code $C\in\CC$ has a set of generators of its stabilizer group
  whose weights are all upper bounded by $m$. Then we have:
  $$R \leq (1-2p)\frac{1-(1-p)^{m-1}}{1-(1-2p)(1-p)^{m-1}}.$$
\end{theo}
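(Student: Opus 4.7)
\medskip

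\noindent\textbf{Proof plan.}
The plan is to combine Theorem~\ref{theo:capacity_stabilizer} with a lower bound on the rank-difference function $D(p)$ that exploits the weight-$\leq m$ hypothesis, then isolate $R$ by elementary algebra. Theorem~\ref{theo:capacity_stabilizer} already gives $R \leq 1-2p-D(p)$, so it is enough to establish
\[
D(p) \;\geq\; (1-R)(1-2p)(1-p)^{m-1} \qquad \text{for } p\leq 1/2.
\]
Substituting this into $R \leq 1-2p-D(p)$ and collecting the $R$-terms yields $R\bigl[1-(1-2p)(1-p)^{m-1}\bigr] \leq (1-2p)\bigl[1-(1-p)^{m-1}\bigr]$, and dividing by the bracketed coefficient on the left (strictly positive for $0<p<1/2$) produces exactly the bound stated in the theorem.

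To prove the announced lower bound on $D(p)$, I would first use Lemmas~\ref{lemma:dimension_syndrome_space} and~\ref{lemma:dimension_Se} to rewrite
\[
\Delta_t(p) \;=\; \frac{1}{n_t}\mathbb{E}_p\bigl[\rank \HH_{t,\bar\E} - \rank \HH_{t,\E}\bigr] \;=\; \frac{1}{n_t}\mathbb{E}_p\bigl[\dim W_t^{\bar\E} - \dim W_t^{\E}\bigr],
\]
where $W_t\subseteq\F_2^{2n_t}$ is the row-span of $\HH_t$ and $W_t^{X}:=\{w\in W_t:\Supp(w)\subseteq X\}$. Extracting a basis $H_1,\ldots,H_{r_t}$ of $W_t$ out of the given weight-$\leq m$ generating set (a minimal independent subfamily consists of vectors of weight at most $m$), the target becomes to prove
\[
\mathbb{E}_p\bigl[\dim W_t^{\bar\E} - \dim W_t^{\E}\bigr] \;\geq\; r_t\,(1-2p)(1-p)^{m-1},
\]
after which dividing by $n_t$, using $r_t/n_t\to 1-R$, and passing to $\limsup$ yields the target bound on $D(p)$. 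The factor $(1-p)^{m-1}$ ultimately comes from the elementary single-vector identity
\[
(1-p)^w - p^w \;=\; (1-2p)\sum_{j=0}^{w-1}(1-p)^{w-1-j}p^j \;\geq\; (1-2p)(1-p)^{w-1} \;\geq\; (1-2p)(1-p)^{m-1},
\]
valid for $1\leq w\leq m$ and $p\leq 1/2$, applied to each basis vector individually.

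The main obstacle is lifting this per-vector estimate to a rigorous bound on the rank gap, because an $\F_2$-linear combination of basis vectors may be supported on $\E$ or on $\bar\E$ even when none of the individual $H_k$ is, so one cannot simply identify $\dim W^X$ with $\sum_k\mathbf{1}[\Supp(H_k)\subseteq X]$. I would tackle this through the telescoping identity
\[
\dim W_t^{(k),X} - \dim W_t^{(k-1),X} \;=\; \mathbf{1}\bigl[H_k|_{X^c}\in W_t^{(k-1)}|_{X^c}\bigr]
\]
along the filtration $W_t^{(k)}:=\vect(H_1,\ldots,H_k)$, which after taking expectations reduces the target to showing, for each $k$,
\[
\mathbb{P}\bigl[H_k|_\E\in W_t^{(k-1)}|_\E\bigr] - \mathbb{P}\bigl[H_k|_{\bar\E}\in W_t^{(k-1)}|_{\bar\E}\bigr] \;\geq\; (1-p)^{|H_k|}-p^{|H_k|}.
\]
The trivial inclusion $\{\Supp(H_k)\subseteq\bar\E\}\subseteq\{H_k|_\E\in W_t^{(k-1)}|_\E\}$ already gives the first probability $\geq(1-p)^{|H_k|}$; the crux is the asymmetric upper bound on the second probability, showing that the extra mass beyond $\{\Supp(H_k)\subseteq\E\}$ cannot exceed the corresponding extra mass on the $\bar\E$ side. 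I expect this to follow from a monotone coupling of $\E$ with $\bar\E$ valid for $p\leq 1/2$, and that coupling argument is where the bulk of the technical work lies.
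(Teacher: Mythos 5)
Your top-level structure matches the paper's: invoke Theorem~\ref{theo:capacity_stabilizer}, establish $D(p) \geq (1-R)(1-2p)(1-p)^{m-1}$, and rearrange. The rearrangement algebra is exactly right, and extracting a basis from the weight-$\leq m$ generating set before reasoning about ranks is a point the paper glosses over that you handle cleanly. However, your route to the bound on $D(p)$ is genuinely different from the paper's, and it has a real gap at the crucial step.

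\textbf{What the paper does differently.} The paper does not try to prove a per-generator inequality on rank increments at all. It instead proves (Appendix~\ref{appendix:concavity_inequality}, via submodularity of the rank function of a matroid) that $\phi(p) = \mathbb E_p(\rank\HH_\E)/n$ is concave on $[0,1]$, which yields the purely analytic inequality $\Delta(p) \geq \frac{1-2p}{1-p}\bigl(\frac{\rank\HH}{n} - M(p)\bigr)$ for \emph{any} upper bound $M(p) \geq \phi(p)$. One then only needs the one-sided and essentially trivial estimate $\phi(p) \leq \frac{\rank\HH}{n}\bigl(1-(1-p)^m\bigr)$, obtained by lower-bounding the rank deficit $\rank\HH - \rank\HH_\E$ by the expected number of zero rows of $\HH_\E$. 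No direct comparison between the $\E$ side and the $\bar\E$ side is ever attempted — concavity does that work.

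\textbf{Where your argument breaks.} Your filtration identity $\dim W^{(k),X} - \dim W^{(k-1),X} = \mathbf{1}\bigl[H_k|_{X^c}\in W^{(k-1)}|_{X^c}\bigr]$ is correct, and the reduction to the per-$k$ inequality
\[
\mathbb P\bigl[H_k|_\E\in W^{(k-1)}|_\E\bigr] - \mathbb P\bigl[H_k|_{\bar\E}\in W^{(k-1)}|_{\bar\E}\bigr] \;\geq\; (1-p)^{|H_k|} - p^{|H_k|}
\]
is a clean way to organize the computation. But you leave this inequality unproved, and the tool you propose — a monotone coupling of $\E$ with $\bar\E$ — cannot work as stated. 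After subtracting the ``trivial'' contributions $(1-p)^{|H_k|}$ and $p^{|H_k|}$, what you need is $\mathbb P_p[E_1] \geq \mathbb P_{1-p}[E_1]$ for the event $E_1(A) = \{H_k|_A\in W^{(k-1)}|_A\}\cap\{\Supp(H_k)\cap A\neq\emptyset\}$, and this event is the intersection of a \emph{decreasing} event with an \emph{increasing} event, hence not monotone in $A$. (Concretely, with $H_k=(1,0)$ and $W^{(k-1)}=\vect((1,1))$ in $\F_2^2$, one checks $E_1 = \{\{1\}\}$, which is neither an up-set nor a down-set.) A monotone coupling gives nothing here. The per-$k$ inequality may in fact be true — it holds in the cases one can check by hand, and it is provable when $\dim W^{(k-1)}\leq 1$ by noting that $t\mapsto (1-p)^t - p^t$ is decreasing in $t$ for $p\leq 1/2$ — but proving it for general $W^{(k-1)}$ is a nontrivial combinatorial-probabilistic lemma that would need its own argument, and at present your proposal does not contain one. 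The paper's concavity route sidesteps this entirely and is complete; without it, your reduction lemma is the missing load-bearing piece.
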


{\bf Method:}
Let $\HH$ be a stabilizer matrix and set $\phi(p)=\mathbb E_p (\rank \HH_\E)/n$. Denote by $\Delta$ the function $\Delta(p)=\phi(1-p)-\phi(p)$.
To apply Theorem~\ref{theo:capacity_stabilizer} we need a lower bound
on $\Delta(p)$. Because the function $\phi$ is concave, any upper bound
$\phi(p)\leq M(p)$ implies the lower bound on $\Delta$:
\begin{equation}
  \label{eq:fconcave}
  \Delta(p) \geq \frac{1-2p}{1-p}\left(\frac {\rank \HH} n - M(p)\right).
\end{equation}
The formal proof of the concavity of $\phi$ and of \eqref{eq:fconcave} is somewhat technical and not
necessary to the understanding of the main ideas, therefore it is
placed in Appendix~\ref{appendix:concavity_inequality} (Proposition \ref{prop:bound_g}).

\begin{proof}[Proof of Theorem~\ref{theo:stabilizer}]
  Let $\HH$ be a stabilizer matrix of a code $C\in\CC$. Set
  $\phi(p)=\mathbb E_p (\rank \HH_{\E})/n$. Let $\hh_\E^0$ stand for the
  number of zero rows in the random submatrix $\HH_\E$. We have:
  \begin{align*}
    \phi(p) \leq &\frac 1n \left[\rank \HH - \mathbb E_p (\hh_{\E}^0)\right]\\
         \leq & \frac {\rank\HH}n - \frac {\rank \HH}n (1-p)^m.
  \end{align*}
  Applying \eqref{eq:fconcave} we get:
  $$\Delta(p)\geq \frac{1-2p}{1-p}\frac {\rank \HH}n (1-p)^m \geq  \frac{1-2p}{1-p}(1-R) (1-p)^m.$$
  From Theorem~\ref{theo:capacity_stabilizer} we now have:
  $$R\leq 1-2p - (1-R)(1-2p)(1-p)^{m-1}$$
  and the result follows after rearranging.
\end{proof}

As an example let us consider the family of {\em color codes}
\cite{BM06}.
Color codes are defined from a trivalent tiling of a surface by faces
and the associated stabilizer matrices have rows of weight bounded by
the maximum length (number of edges) of a face. Hence
Theorem~\ref{theo:stabilizer} applies to this family that cannot be
capacity-achieving if the faces stay with bounded length.

\section{Achievable rates of $(2, m)$ CSS codes} \label{section:CSS(2,m)}

We now turn to deriving a refined upper bound on achievable rates of a particular family
of quantum LDPC codes. Let us say that a binary matrix is of type
$(2,m)$ if every one of its rows is of weight $m$ and every column is
of weight $2$. We shall say that a quantum code is a
$(2,m)$ CSS code if its stabilizer matrix $\HH$
decomposes in two matrices $\HH_X$ and $\HH_Z$, {\em each of which is
a $(2, m)$ matrix}.

\subsection{The $2$-complex associated to a $(2, m)$ CSS code}
\label{section:2complex}
The matrix $\HH_X$, viewed as a binary matrix, can be seen as the
incidence matrix of a finite graph $G_X$. The vertex set $V$ of the
graph $G_X$ is defined as the set of rows of $\HH_X$, and two vertices
$i$ and $i'$ are declared to be incident if there is column $j$ such
that there are $1$'s in positions $(i,j)$ and $(i',j)$. The Edge set
of the graph $G_X$ can therefore be indexed by the columns of $\HH_X$.
The constant row weight $m$ of $\HH_X$ means that the graph $G_X$ is
regular
(every vertex has $m$ neighbours).

Recall that the classical code $C_X$ is the set of vectors of $\F_2^n$
orthogonal to the rows of $\HH_X$. The code $C_X$ is generated by the
vectors whose supports coincide with cycles of the graph $G_X$
(actually $C_X$ is {\em exactly} the set of cycles of $G_X$ when one
allows cycles to be non-connected subgraphs) and $C_X$ is classically
called the {\em cycle code} of the associated graph $G_X$.

If the graph $G_X$ has $n$ edges, the dimension of the code $C_X$ is given by: 

\begin{lemma}\label{lemma:dimCX}
  We have $\dim C_X = n-|V| +\kappa_X$ where $\kappa_X$ is the number of
  connected components of $G_X$.
\end{lemma}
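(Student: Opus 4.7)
The plan is to compute $\rank \HH_X$ over $\F_2$ and then invoke the rank-nullity identity. Since the paper defines $C_X = \Ker \HH_X$, we have $\dim C_X = n - \rank \HH_X$, so it suffices to show that $\rank \HH_X = |V| - \kappa_X$.

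For the upper bound $\rank \HH_X \leq |V| - \kappa_X$, I would exhibit $\kappa_X$ linear dependencies among the rows. For each connected component $K$ of $G_X$, form the sum (over $\F_2$) of the rows indexed by the vertices of $K$. Each column of $\HH_X$ has exactly two $1$'s, and these two $1$'s correspond to the endpoints of an edge of $G_X$, which necessarily lie in the same connected component. Hence, for a fixed component $K$, each edge of $G_X$ either has both endpoints in $K$ (contributing $1+1=0$ to that column of the sum) or neither (contributing $0$). The resulting sum is therefore the zero vector, giving one dependency per component, and since the components partition $V$ these $\kappa_X$ dependencies are $\F_2$-linearly independent.

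For the lower bound $\rank \HH_X \geq |V| - \kappa_X$, I would pick a spanning forest $F$ of $G_X$ (one spanning tree per connected component), which has exactly $|V| - \kappa_X$ edges, and argue that the corresponding $|V|-\kappa_X$ columns of $\HH_X$ are linearly independent over $\F_2$. A nontrivial linear dependence among these columns would correspond to a nonempty subset $F' \subseteq F$ such that every vertex of $G_X$ is incident to an even number of edges of $F'$; but any nonempty graph with all vertex degrees even contains a cycle, contradicting the fact that $F$ is a forest. Combining the two bounds yields $\rank \HH_X = |V| - \kappa_X$, and hence $\dim C_X = n - |V| + \kappa_X$.

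The only subtle point is phrasing the lower bound purely over $\F_2$ rather than relying on the more familiar oriented incidence matrix over $\mathbb Z$ (which has the same rank but requires a choice of orientation that is not available here); the "subgraph with even degrees contains a cycle" observation lets one sidestep this and keep everything in characteristic $2$. The rest is bookkeeping, so this should be the entirety of the argument.
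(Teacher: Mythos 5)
Your proof is correct and complete. The paper does not actually supply an argument here—it simply labels the statement a classical result and cites Berge's textbook—so there is no in-text proof to compare against. What you have written is the standard proof of this fact: reduce to computing $\rank_{\F_2}\HH_X$ via rank-nullity, get the upper bound $\rank \HH_X \leq |V|-\kappa_X$ from the $\kappa_X$ disjoint component-sum row relations, and get the matching lower bound by showing that the columns indexed by a spanning forest are $\F_2$-independent. Your remark that a nonempty subgraph with all vertex degrees even must contain a cycle is precisely the right way to carry out the lower-bound step over $\F_2$ without appealing to the oriented (integer) incidence matrix; that is a clean way to close the argument.
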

This is a classical result, see for example \cite{Be73}.

Since the rows of $\HH_Z$ are orthogonal to the rows of $\HH_X$, the
supports of the rows of $\HH_Z$ are cycles of the graph $G_X$. The
graph $G_X$ together with the set of supports of the rows of $\HH_Z$
is called a $2$-complex, and the supports of the rows of $\HH_Z$ are
particular cycles that are called {\em faces}. That $\HH_Z$ is a
$(2,m)$ matrix means that every edge is incident to exactly two faces.

In the same way that the matrix $\HH_X$ defines a graph $G_X$, the
matrix $\HH_Z$ defines a graph $G_Z$, which together with $\HH_X$ also
makes up a $2$-complex. The two complexes are said to be dual to each
other: faces are vertices of the dual complex.

We shall say that the $(2,m)$ CSS code (or equivalently the associated
$2$-complex) is {\em proper} if the two graphs $G_X$ and $G_Z$ are
connected and have girth (smallest cycle size) equal to $m$.

It is not immediate that proper $(2,m)$ CSS codes even exist, and the
existence of families of $(2,m)$ CSS codes with growing
minimum distance is even less obvious. One way of coming up with
such families is through the construction of combinatorial surfaces.
the associated $(2,m)$ CSS codes are a highly regular instance of {\em surface
codes} \cite{BM073,Ze09}. 
An example of such a surface is given on Figure~\ref{fig:G(5)mod2} for
$m=5$. 
\begin{figure}
  \centering
  \begin{tikzpicture}[scale=2.5]

\tikzstyle{every node}=[circle, draw, fill=black!100, inner sep=0pt, minimum width=2pt]

%pentagone central
\draw
(0:1.3) \foreach \x in {72,144,...,359} { -- (\x:1.3) } -- cycle
(0:1.3) node [label = left:$5$] (a) {}
(72:1.3) node [label = below:$2$](b) {}
(144:1.3) node [label = below right:$0$](c) {}
(216:1.3) node [label = above right:$3$](d) {}
(288:1.3) node [label = above:$9$](e) {};

%on ajoutes les aretes autour des sommets du pentagone
\draw[shift=(a)]
(0:.9) node [label = above left:$10$](a2) {}
(60:.9) node (a3) [label = left:$1$]{}
(300:.9) node (a1) [label = left:$15$]{};
\draw
(a1)--(a)--(a2)
(a)--(a3);

\draw[shift=(b)]
(72:.9) node [label = left:$13$] (b2) {}
(132:.9) node [label = below:$7$] (b3) {}
(12:.9) node [label = below:$14$](b1) {};
\draw
(b1)--(b)--(b2)
(b)--(b3);

\draw[shift=(c)]
(144:.9) node [label = below:$8$](c2) {}
(204:.9) node [label = below right:$6$](c3) {}
(84:.9) node [label = below right:$1$](c1) {};
\draw
(c1)--(c)--(c2)
(c)--(c3);
\draw[color =blue, line width=1.5pt, ->]
(c)--(c1);
%\draw[shift =(c3), color=red, fill=red!20]
%(110:0.1cm) arc (0:90:.1cm);

\draw[shift=(d)]
(216:.9) node [label = right:$11$](d2) {}
(276:.9) node [label = above right:$15$](d3) {}
(156:.9) node [label = above right:$7$](d1) {};
\draw
(d1)--(d)--(d2)
(d)--(d3);

\draw[shift=(e)]
(288:.9) node [label = above right:$4$](e2) {}
(348:.9) node [label = above:$14$](e3) {}
(228:.9) node [label = above:$6$](e1) {};
\draw
(e1)--(e)--(e2)
(e)--(e3);

%fermeture des faces qui contiennent une arete du pentagone
\draw
(36:2.1) node [label = below left:$11$](f){}
(108:2.1) node [label = below:$4$](g){}
(180:2.1) node [label = right:$10$](h){}
(252:2.1) node [label = above right:$13$](i){}
(324:2.1) node [label = above left:$8$](j){};
\draw
(a3)--(f)--(b1)
(b3)--(g)--(c1)
(c3)--(h)--(d1)
(d3)--(i)--(e1)
(e3)--(j)--(a1);

%fermeture des faces qui on un sommet en commun avec le pentagone
\draw[shift=(a2)]
(70:.4) node [label = left:$8$](a24){};
\draw[shift=(a3)]
(345:.4) node [label = below left:$0$](a31){};
\draw
(a2)--(a24)--(a31)--(a3);
\draw[color=blue, line width=1.5, ->]
(a31)--(a3);

\draw[shift=(b2)]
(142:.4) node [label = below:$11$](b24){};
\draw[shift=(b3)]
(57:.4) node [label = below right:$3$](b31){};
\draw
(b2)--(b24)--(b31)--(b3);

\draw[shift=(c2)]
(214:.4) node [label = below right:$4$](c24){};
\draw[shift=(c3)]
(129:.4) node [label = right:$9$](c31){};
\draw
(c2)--(c24)--(c31)--(c3);

\draw[shift=(d2)]
(286:.4) node [label = above right:$10$](d24){};
\draw[shift=(d3)]
(201:.4) node [label = above:$5$](d31){};
\draw
(d2)--(d24)--(d31)--(d3);

\draw[shift=(e2)]
(358:.4) node [label = above:$13$](e24){};
\draw[shift=(e3)]
(273:.4) node [label = above left:$2$](e31){};
\draw
(e2)--(e24)--(e31)--(e3);

%ajout des 5 faces totalement exterieures :
\draw[shift=(j)]
(5:.3) node [label = below:$10$](j3){};
\draw[shift=(j3)]
(45:.2) node [label = right:$6$](j34){};
\draw[shift=(a1)]
(345:.3) node [label = above:$12$](a11){};
\draw
(j)--(j3)--(j34)--(a11)--(a1);

\draw[shift=(f)]
(77:.3) node [label = right:$13$](f3){};
\draw[shift=(f3)]
(117:.2) node [label = above:$15$](f34){};
\draw[shift=(b1)]
(57:.3) node [label = above left:$12$](b11){};
\draw
(f)--(f3)--(f34)--(b11)--(b1);

\draw[shift=(g)]
(149:.3) node [label = above:$8$](g3){};
\draw[shift=(g3)]
(189:.2) node [label = above left:$14$](g34){};
\draw[shift=(c1)]
(129:.3) node [label = left:$12$](c11){};
\draw
(g)--(g3)--(g34)--(c11)--(c1);

\draw[shift=(h)]
(221:.3) node [label = above left:$11$](h3){};
\draw[shift=(h3)]
(261:.2) node [label = left:$1$](h34){};
\draw[shift=(d1)]
(201:.3) node [label = below:$12$](d11){};
\draw
(h)--(h3)--(h34)--(d11)--(d1);

\draw[shift=(i)]
(293:.3) node [label = left:$4$](i3){};
\draw[shift=(i3)]
(333:.2) node [label = below:$7$](i34){};
\draw[shift=(e1)]
(273:.3) node [label = right:$12$](e11){};
\draw
(i)--(i3)--(i34)--(e11)--(e1);
  \end{tikzpicture}
  \caption{An example of self-dual 5-regular tiling of a surface of
    genus 4 composed of 16 vertices, 40 edges and 16 faces. Each face
    is represented once. Vertices are represented several times to
    allow this planar representation.
    Each boundary edge is represented twice. multiple replicas of
    vertices and edges are identified to create the surface.
    In bold the identification of the edges $\{0, 1\}$.}
  \label{fig:G(5)mod2}
\end{figure}
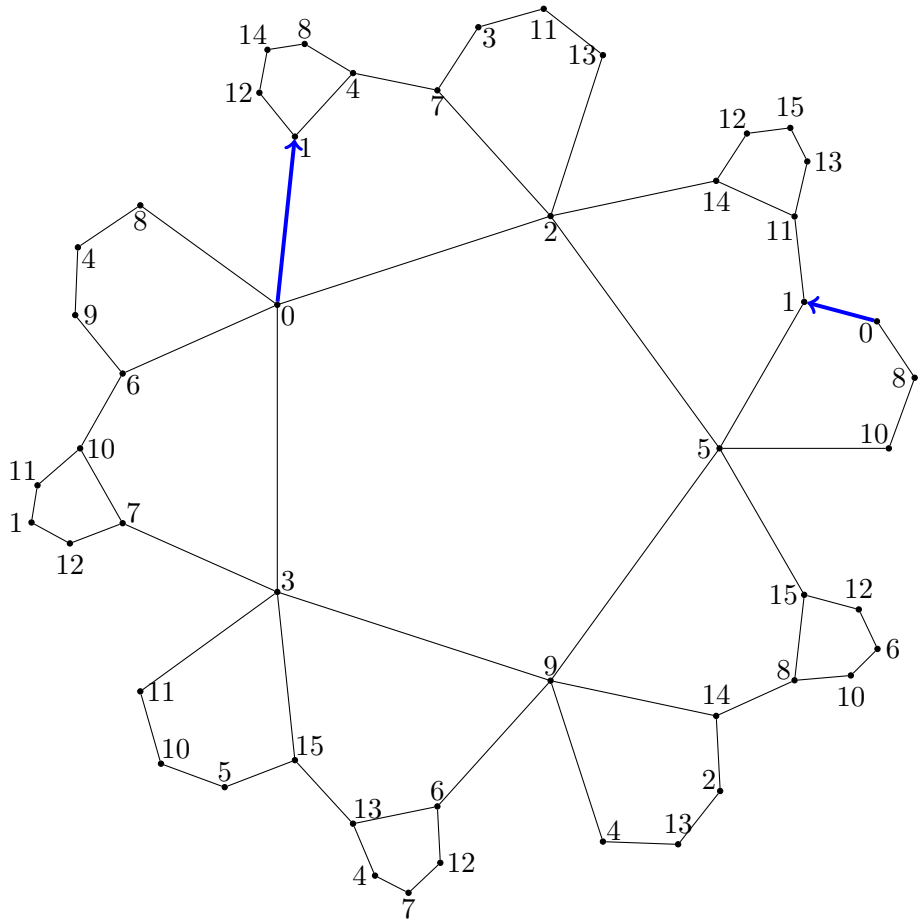
The
associated $(2,5)$ matrices $\HH_X$ and $\HH_Z$ are given on 
Figure~\ref{fig:5_2_code}. It is the smallest $(2,5)$ CSS code we have
found such that the associated graphs $G_X$ and $G_Z$ are both
connected and simple (without multiple edges). The only method we know
of that allows the construction of proper $(2,m)$ CSS codes involves
sophisticated number-theoretic arguments and combinatorial surfaces.
We shall take up this matter
in Section~\ref{section:percolation} where upper bounds on the
achievable rate of quantum $(2,m)$ CSS codes lead to upper bounds on the
critical probability for associated families of infinite tilings.

\begin{figure}
$$
\HH_X =
\begin{pmatrix}
0 & 1 & 2 & 3 & 8\\
1 & 4 & 5 & 11 & 20\\
2 & 6 & 7 & 14 & 25\\
0 & 9 & 10 & 18 & 28\\
5 & 12 & 13 & 22 & 32\\
4 & 7 & 15 & 21 & 31\\
3 & 16 & 17 & 27 & 36\\
6 & 10 & 13 & 19 & 23\\
8 & 12 & 24 & 33 & 38\\
9 & 15 & 17 & 22 & 26\\
16 & 19 & 21 & 24 & 29\\
11 & 28 & 29 & 30 & 35\\
20 & 23 & 27 & 34 & 39\\
14 & 32 & 35 & 36 & 37\\
25 & 26 & 30 & 33 & 34\\
18 & 31 & 37 & 38 & 39\\
\end{pmatrix}
\qquad
\HH_Z =
\begin{pmatrix}
0 & 2 & 7 & 9 & 15\\
1 & 2 & 5 & 6 & 13\\
0 & 3 & 10 & 16 & 19\\
1 & 4 & 8 & 21 & 24\\
3 & 8 & 12 & 17 & 22\\
4 & 7 & 11 & 25 & 30\\
5 & 12 & 20 & 33 & 34\\
6 & 10 & 14 & 28 & 35\\
9 & 17 & 18 & 36 & 37\\
11 & 19 & 20 & 23 & 29\\
13 & 23 & 27 & 32 & 36\\
14 & 22 & 25 & 26 & 32\\
15 & 26 & 31 & 33 & 38\\
16 & 24 & 27 & 38 & 39\\
18 & 21 & 28 & 29 & 31\\
30 & 34 & 35 & 37 & 39\\
\end{pmatrix}
$$
\caption{Two matrices of size $16 \times 40$ defining a $(2, 5)$ CSS
  code of parameters $[[40, 10, 4]]$. Columns are indexed by the
  integers $\{0, 2, \dots, 39\}$ and rows are described by their supports.
This code is the surface code defined from the $2$-complex of Figure \ref{fig:G(5)mod2}.}
\label{fig:5_2_code}
\end{figure}

\subsection{Reduction to the study of the mean number of connected components
in the subgraph of a graph}

Recall from our proof method described in Section~\ref{section:method}
that our objective is to find an upper bound on the function $\phi(p)=
\mathbb E_p (\rank \HH_\E)/n$ for a stabilizer matrix $\HH$.
Since we are dealing with the stabilizer matrix 
$\HH = \left(
\begin{smallmatrix}
  \HH_X \\ \HH_Z
\end{smallmatrix}
\right)$
of a CSS code, we have $\phi(p) = \phi_X(p) + \phi_Z(p)$ where
$\phi_X(p)= \mathbb E_p (\rank \HH_{X,\E})/n$ and $\phi_Z(p)= \mathbb
E_p (\rank \HH_{Z,\E})/n$. 
Recall that the two matrices $\HH_X$ and $\HH_Z$ are $(2,m)$ matrices.
Our problem is therefore to bound from above the mean rank of random
submatrix of a fixed binary $(2,m)$ matrix.

In the rest of this section, let $H \in \mathcal M_{r, n}$ therefore stand for a binary
matrix of type $(2,m)$. We create the submatrix $H_\E$ by keeping
every column of $H$ with probability $p$ and independently of the others.

\bigskip
\textbf{The random subgraph $G_\E$:}
As described in section~\ref{section:2complex} we can regard the
matrix $H$ as the incidence matrix of a graph $G$ with vertex set $V$.

Given the random vector $\E \in \F_2^n$, we denote by $G_\E$ the subgraph of $G$ of
incidence matrix $H_\E$. Assume that each component of $\E$ is 1 with
probability $p$ and 0 with probability $1-p$, independently. Then, the
graph $G_\E$ is the random subgraph of $G$ with unchanged vertex set,
and obtained by taking each edge,
independently, with probability $p$. In other words,
taking a random submatrix $H_\E$ of $H$ corresponds to taking a random
subgraph $G_\E$ of $G$. For the random subgraph $G_\E$,  Lemma~\ref{lemma:dimCX} translates into:

\begin{lemma}\label{lemma:rank_subgraph}
If $H$ is a binary $(2,m)$ matrix with $|V|$ rows, then $|V| -
\rank H_\E$ is equal to the number $\kappa_\E$ of connected components 
of the subgraph $G_\E$. That is:
$$
\rank H_\E = |V| - \kappa_\E.
$$
\end{lemma}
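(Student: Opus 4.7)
The plan is to reduce Lemma~\ref{lemma:rank_subgraph} to a direct application of the classical Lemma~\ref{lemma:dimCX} already quoted in the paper. The key observation is that the submatrix $H_\E$ is itself an incidence matrix, namely the incidence matrix of the subgraph $G_\E$: indeed, $G_\E$ has vertex set $V$ (unchanged) and its edge set is exactly the set of columns of $H$ selected by $\E$, each of which still has weight $2$ in $H_\E$ and therefore still represents an edge joining two vertices of $V$.

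Given this, I would proceed as follows. First, apply Lemma~\ref{lemma:dimCX} to the graph $G_\E$, which has $|V|$ vertices, $|\E|$ edges, and $\kappa_\E$ connected components (where isolated vertices, i.e.\ rows of $H_\E$ that happen to be zero, count as their own components, as is standard). This gives
$$\dim \Ker H_\E \;=\; |\E| - |V| + \kappa_\E.$$
Then the rank--nullity theorem, applied to $H_\E$ viewed as a linear map $\F_2^{|\E|}\to\F_2^{|V|}$, yields
$$\rank H_\E \;=\; |\E| - \dim \Ker H_\E \;=\; |V| - \kappa_\E,$$
which is the desired identity.

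There is essentially no obstacle to overcome: the lemma is a direct rephrasing of the cycle-space dimension formula already invoked for $H_X$ and $G_X$, now applied to the pair $(H_\E, G_\E)$. The only point that deserves a brief mention is the convention on isolated vertices in $\kappa_\E$, which is forced on us by the fact that $G_\E$ is defined with unchanged vertex set $V$; this is exactly the convention under which Lemma~\ref{lemma:dimCX} is stated, so no adjustment is needed.
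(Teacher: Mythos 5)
Your proposal is correct and follows exactly the route the paper intends: the paper presents Lemma~\ref{lemma:rank_subgraph} as a direct translation of Lemma~\ref{lemma:dimCX} applied to the pair $(H_\E, G_\E)$, and you have simply spelled out the translation via rank--nullity, with a clean remark about the isolated-vertex convention. Nothing is missing and nothing differs in substance.
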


\subsection{Bound on the mean number of connected components in the graph} \label{subsection:graph_components}

From Lemma \ref{lemma:rank_subgraph}, the mean rank of the submatrix $H_\E$ can be expressed as a function of the expected number of connected components of the graph $G_\E$:
\begin{equation}
  \label{eq:f(p)}
  \frac{\Esp_p(\rank H_\E)}{n} = \frac{|V|}{n} - \frac{\mathbb E_p (\kappa_\E)}{n}
\end{equation}

If we upper bound $\Esp_p(\rank H_\E)/n$ by writing that $\mathbb E_p (\kappa_\E)$ is
lower bounded by the expected number of isolated vertices, we will
simply recover Theorem~\ref{theo:stabilizer}. We will proceed to
derive a more precise lower bound by enumerating larger connected
components.

\bigskip
Assuming that
the $m$-regular graph $G$ constructed from the matrix $H$ has no small
cycles, it looks like the $m$-regular tree $G_m$ in any sufficiently small neighbourhood.
Thus, for our enumeration problem, we introduce the number $a_k$ of
subtrees of $G_m$, with $k$ edges, 
containing a fixed vertex $x$ of $G_m$.
We will use a generating function approach: for background,
two classical references on this subject are \cite{FS09} and \cite{Wi06}.

\textbf{The generating function for rooted trees:}
Let $G_m$ be the $m$-regular tree and let $x$ be a fixed vertex of
$G_m$ called a root.
Let us define the generating function for rooted tree of degree $m$ as the real function:
$$
T_m(z) = \sum_{k \geq 0} a_k z^k,
$$
where $a_k$ is the number of subtrees of $G_m$, with $k$ edges, containing the root $x$.
To compute this generating function, it is useful to introduce the auxiliary generating function.
$$
T^1_m(z) = \sum_{k \geq 0} b_k z^k,
$$
where $b_k$ is the number of subtrees $\mathcal T$ of $G_m$ such that :
\begin{itemize}
\item $\mathcal T$ is composed of $k$ edges,
\item the root $x$ is included in $\mathcal T$,
\item $\mathcal T$ contains a fixed edge $\{x, y\}$ among edges
  incident to $x$, and no other edge incident to $x$ is contained in $\mathcal T$.
\end{itemize}

This generating function doesn't depend on the particular
choice of the edge ${x, y}$, by regularity of $G_m$.
The function $T_1(z)$ is sometimes called the generating function of
planted rooted subtrees, since $x$ has degree one in such a subtree.

Coefficients $b_k$ can be computed easily using the Lagrange inversion
Theorem \cite{FS09} because $T^1_m$ satisfies:
$$
T^1_m(z) = z (1 + T^1_m(z))^{m-1}
$$
This formula comes from the fact that every vertex of the tree except the root $x$ has $m-1$ sons.
We get:
$$
b_k = \frac{1}{k}\binom{k(m-1)}{k-1}.
$$
Then, the computation of the $a_k$ follows from the expression of $T_m$ as a function of $T^1_m$.
$$
T_m(z) = (1+T^1_m(z))^m.
$$
To see this formula remark that a subgraph of $G_m$ containing the
root $x$ can be decomposed into at $m$ planted subtrees of root $x$.
This method allows the computation of a large number of coefficients $a_k$ using symbolic computation software.

We can now state an upper bound on the expected rank of the submatrix $H_\E$ involving the numbers $a_k$.
\begin{prop}\label{prop:bound_fn_LDPC2}
If $H$ is a binary $(2, m)$ matrix whose associated graph $G$ has
girth (smallest cycle size) at least $\delta+2$, then
$$
\frac{\Esp_p(\rank H_\E)}{n} \leq \frac{2}{m} \big( 1 - (1-p)^m S_\delta( p(1-p)^{m-2} ) \big),
$$
where $S_\delta(z) = \sum_{k=0}^\delta \frac{a_k}{k+1} z^k$ and the
$a_k$ are the coefficients of the generating function $T_m$.
\end{prop}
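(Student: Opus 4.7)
The plan is to translate everything into graph-theoretic language via Lemma~\ref{lemma:rank_subgraph} and bound the expected number of connected components $\kappa_\E$ of $G_\E$ from below by enumerating the small tree-components. First, since $G$ is $m$-regular with $n$ edges, double counting gives $|V|/n = 2/m$, and Lemma~\ref{lemma:rank_subgraph} yields
$$\frac{\Esp_p(\rank H_\E)}{n} = \frac{2}{m} - \frac{\Esp_p(\kappa_\E)}{n}.$$
So the proposition reduces to showing $\Esp_p(\kappa_\E) \geq |V|(1-p)^m S_\delta(p(1-p)^{m-2})$.

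Next, for each $k = 0, 1, \dots, \delta$, let $N_k$ be the number of connected components of $G_\E$ that are subtrees of $G$ with exactly $k$ edges. Clearly $\kappa_\E \geq \sum_{k=0}^{\delta} N_k$, hence $\Esp_p(\kappa_\E) \geq \sum_{k=0}^{\delta} \Esp_p(N_k)$. To compute $\Esp_p(N_k)$ I would enumerate the subtrees $T$ of $G$ with $k$ edges and sum the probability that $T$ appears as a component. A subtree $T$ with $k$ edges is a component of $G_\E$ exactly when all $k$ of its edges are in $\E$ and all edges of $G$ incident to $V(T)$ but not in $T$ are absent from $\E$. The number of such "boundary" edges is $m(k+1) - 2k$ provided no edge of $G$ connects two vertices of $T$ without being an edge of $T$; since $T$ has diameter at most $k \leq \delta$, such an extra chord would create a cycle of length at most $k+1 \leq \delta+1$, contradicting the girth hypothesis. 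Hence
$$\Prob(T \text{ is a component of } G_\E) = p^k (1-p)^{m(k+1)-2k}.$$

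Then I need to count, for $k \leq \delta$, the number of $k$-edge subtrees of $G$ containing a fixed vertex $v$, and I claim this equals $a_k$. This is the main step and relies on a universal-cover argument: $G_m$ is the universal cover of $G$, and any subtree $T \subset G$ containing $v$ lifts uniquely, given a choice of lift $\tilde v$ of $v$, to a subtree $\tilde T \subset G_m$ with the same number of edges (since $T$ is simply connected). The lift map is clearly injective. For surjectivity (i.e.\ that every subtree of $G_m$ at $\tilde v$ with $k \leq \delta$ edges projects to a genuine subtree of $G$), I would argue by contradiction: if two distinct vertices of $\tilde T$ projected to the same vertex of $G$, the $\tilde T$-path between them (length $\leq k$) would project to a closed walk in $G$ with no immediate backtracking (using that $G$ is simple, which follows from girth $\geq \delta+2 \geq 3$), yielding a cycle of length $\leq k \leq \delta$, contradicting the girth assumption. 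Therefore the count is exactly $a_k$, and summing over $v$ gives $|V|a_k/(k+1)$ subtrees with $k$ edges in $G$.

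Putting it together,
$$\Esp_p(N_k) = \frac{|V|a_k}{k+1} \, p^k (1-p)^{m(k+1)-2k} = \frac{|V|a_k}{k+1}(1-p)^m \bigl(p(1-p)^{m-2}\bigr)^k,$$
and summing $k$ from $0$ to $\delta$ yields $\Esp_p(\kappa_\E) \geq |V|(1-p)^m S_\delta(p(1-p)^{m-2})$, which combined with the opening identity gives the desired inequality. The main obstacle is the clean justification of the bijection between subtrees of $G$ with $\leq \delta$ edges through $v$ and rooted subtrees of $G_m$, since both the enumeration $a_k$ and the exponent $m(k+1)-2k$ on $(1-p)$ depend on the tree-like structure of the $\delta$-neighborhood forced by the girth hypothesis.
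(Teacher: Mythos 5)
Your proposal is correct and follows essentially the same route as the paper: reduce to lower-bounding $\Esp_p(\kappa_\E)$ via Lemma~\ref{lemma:rank_subgraph}, then enumerate small tree components (Lemmas~\ref{lemma:subgraph_probability} and~\ref{lemma:subgraph_number} in the paper). The only differences are cosmetic at the lemma level: you obtain the boundary-edge count $m(k+1)-2k$ by a direct degree-count argument (the paper proves the same probability by induction on $k$, peeling off one edge at a time), and you justify the subtree count $a_k$ via a universal-cover lifting argument where the paper invokes the local isomorphism between $G$ and $G_m$ more informally.
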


\begin{proof}
From \eqref{eq:f(p)} we want a lower bound on the expected number of connected components in the graph $G_\E$.

The graph $G_\E$ is constructed from the edge set of $G$ by choosing each edge, independently, with probability $p$.
Let us compute the expected number of connected components with 0
edges. This is the average number of isolated points in the random subgraph $G_\E$.
Denote by $X_0$ the random variable which associate with a random vector $\E$, the number of isolated points in $G_\E$. We can write $X_0(\E) = \sum X_v(\E)$ where: $X_v(\E) = 1$ if the vertex $v$ is isolated in $G_\E$ and $X_v(\E) = 0$ otherwise.
By linearity of expectation, we have:
$$
\mathbb E (X_0) = \sum_v \mathbb E (X_v) = \sum_v \mathbb P(X_v = 1) = |V|(1-p)^m.
$$
Indeed, each vertex is bordered by $m$ edges, therefore $\mathbb P(X_v = 1) = (1-p)^m$, for every vertex~$v$.

This idea can be used with components of size (number of edges) $k
\leq \delta$.
For $C$ a $k$-edge connected subgraph of $G$, let $X_C$ denote the
random variable equal to $1$ if $C$ is a connected component of the
random graph $G_\E$, and $0$ otherwise. 
The average number of connected components of size $k$ is:
$$
\mathbb E (X_k) = \sum_{\substack{C \text{ connected} \\ \text{ subgraph of size } k}} \mathbb E (X_C) = \frac{|V|}{k+1} a_k (1-p)^m (p(1-p)^{m-2})^k.
$$
To prove the second equality we use two lemmas proved below.
From Lemma \ref{lemma:subgraph_probability}, the expected value of $X_C$ is $(1-p)^m(p(1-p)^{m-2})^k$, independently of the subgraph $C$ with $k$ edges. Lemma~\ref{lemma:subgraph_number} guarantees that the number of connected subgraph $C$ is $\frac{|V|}{k+1} a_k$.
Finally, the quotient $\frac{|V|}{n}$ is exactly $\frac{2}{m}$.
This proves the proposition.
\end{proof}

\begin{figure}[h]
\centering
\begin{tikzpicture}[scale=0.5]

\tikzstyle{every node}=[circle, draw, fill, inner sep=0pt, minimum width=4pt]

\draw
(0,0) node(a){}
(-4,-2) node(b1){}
(0,-2) node(b2){}
(4,-2) node(b3){};

\draw
(-5,-4) node(c1){}
(-3,-4) node(c2){}
(-1,-4) node(c3){}
(1,-4) node(c4){}
(3,-4) node(c5){}
(5,-4) node(c6){};

\draw
(-5.5,-6) node(d1){}
(-4.5,-6) node(d2){}
(-3.5,-6) node(d3){}
(-2.5,-6) node(d4){}
(-1.5,-6) node(d5){}
(-0.5,-6) node(d6){}
(0.5,-6) node(d7){}
(1.5,-6) node(d8){}
(2.5,-6) node(d9){}
(3.5,-6) node(d10){}
(4.5,-6) node(d11){}
(5.5,-6) node(d12){};

\draw
(a)--(b1)
(a)--(b2)
(a)--(b3);

\draw
(c1)--(b1)--(c2)
(c3)--(b2)--(c4)
(c5)--(b3)--(c6);

\draw
(d1)--(c1)--(d2)
(d3)--(c2)--(d4)
(d5)--(c3)--(d6)
(d7)--(c4)--(d8)
(d9)--(c5)--(d10)
(d11)--(c6)--(d12);
\end{tikzpicture}

\caption{A ball of radius 3 in a 3-regular graph with girth $\geq 7$}
\label{fig:ball_graph}
\end{figure}
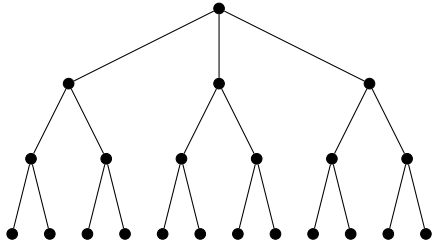

\begin{lemma}\label{lemma:subgraph_probability}
Let $G$ be a $m$-regular graph of girth at least $\delta+2$.
If $C$ is a connected subgraph of $G$ with $k \leq \delta$ edges, then $C$ is a connected component of the random graph $G_\E$ with probability $(1-p)^m(p(1-p)^{m-2})^k$.
\end{lemma}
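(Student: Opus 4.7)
The plan is to decompose the event ``$C$ is a connected component of $G_\E$'' into two independent events: all $k$ edges of $C$ are kept, and every edge of $G$ that touches a vertex of $C$ but does not belong to $C$ is removed. Since these two events concern disjoint sets of edges, their probabilities multiply.

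First I would observe that $C$ is actually a tree. Indeed, any cycle of $G$ has length at least $\delta+2$, while $C$ has only $k\leq \delta$ edges, so $C$ contains no cycle, and being connected it is a tree on $k+1$ vertices. Moreover, the induced subgraph of $G$ on $V(C)$ has no edge outside $E(C)$: such an extra edge, together with the unique path in $C$ joining its endpoints, would yield a cycle of length at most $k+1\leq \delta+1$, contradicting the girth hypothesis.

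Next I would count the ``boundary'' edges, meaning edges of $G$ with exactly one endpoint in $V(C)$. Summing degrees over $V(C)$ in the $m$-regular graph $G$ gives $m(k+1)$; each edge of $C$ is counted twice and each boundary edge exactly once (by the previous paragraph), so the number of boundary edges is
\[
m(k+1)-2k = m+k(m-2).
\]
The event ``$C$ is a connected component of $G_\E$'' is precisely the event that every edge of $C$ is kept (probability $p^k$) and every boundary edge is deleted (probability $(1-p)^{m+k(m-2)}$). Independence across edges gives
\[
\mathbb{P}(X_C=1) = p^k(1-p)^{m+k(m-2)} = (1-p)^m\bigl(p(1-p)^{m-2}\bigr)^k,
\]
as claimed.

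The only delicate point, and the one I would treat most carefully, is the argument that $C$ induces no ``chord'' in $G$ and contains no cycle of its own; both rely on comparing $k$ with the girth, and they are what makes the answer depend only on $k$ and $m$ rather than on the particular shape of $C$. Everything else is a linearity/independence computation.
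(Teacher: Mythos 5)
Your proof is correct, but it takes a genuinely different route from the paper's. The paper proceeds by induction on $k$: it starts from the isolated-vertex case ($k=0$, probability $(1-p)^m$) and shows that adding one new edge $\{x,y\}$ to a connected subgraph replaces one boundary edge by $m-1$ fresh ones, with the girth hypothesis guaranteeing that none of the $m-1$ edges incident to the new vertex $y$ (other than $\{x,y\}$) were already boundary edges. You instead give a direct, non-inductive count: you observe that $C$ is a chordless tree (both facts following from the girth bound), then count boundary edges as $m(k+1)-2k=m+k(m-2)$ by degree-summing, and multiply. Both arguments hinge on exactly the same girth-based fact — that no short cycle can appear — but your version makes the final formula transparent in one pass and makes explicit why the answer depends only on $k$ and $m$ and not on the shape of $C$, while the paper's inductive version is perhaps easier to verify locally step by step. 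Either is a fine proof; yours is arguably cleaner to read.
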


\begin{proof}
If $k=0$, then $C$ is an isolated point. This component appears in the random graph $G_\E$ with probability $(1-p)^m$ by $m$-regularity of $G$.

Assume the formula true for every connected subgraph $C$ of size $k-1$, with $k \leq \delta$.
A subgraph $C$ of $G$ of size $k-1$ is included in a ball of radius $k-1$.
We will prove that the formula remains true if we add an edge to $C$.
Let $x$ be a vertex of $C$ and let $\{x,y\}$ be an edge which is not in $C$.
Consider the graph $C'= C \cup \{x,y\}$. It contains $k$ edges.
%What is the number of neighbors edges of $C'$?
Denote by $\partial C$ the set of edges which have exactly one endpoint in $C$.
Similarly, $\partial C'$ is the first neighbourhood of $C'$.
The set $\partial C'$ contains $\partial C$ except $\{x,y\}$. Moreover it contains $m-1$ new edges: the edges $\{y,z\}$ for $z \neq x$. These edges were not in $\partial C$. Indeed, if $\{y,z\}$ is included in $\partial C$, define the graph $C'' = C' \cup \{y,z\}$. It contains $k+1$ edges and it covers a cycle, because without the edge $\{y,z\}$ it is still connected. 
This is impossible because the shortest cycle has length at least $\delta+2$.
Thus the formula is satisfied for all $k \leq \delta$.
\end{proof}

\begin{lemma}\label{lemma:subgraph_number}
Let $G$ be an $m$-regular graph of girth at least $\delta+2$.
The number of connected subgraphs of $G$ with $k \leq \delta$ edges is at least:
$$
\frac{|V|}{k+1} a_k.
$$
where $a_k$ are the coefficient of the generating function $T_m$.
\end{lemma}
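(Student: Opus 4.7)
The plan is to fix a vertex $v \in V$, count the subtrees of $G$ with $k$ edges containing $v$, and then apply a double-counting argument. My first observation is that under the girth hypothesis, any connected subgraph of $G$ with $k \leq \delta$ edges is already a tree: a cycle in such a subgraph would have length at most $k \leq \delta < \delta+2$, contradicting the girth bound. So the connected subgraphs we seek to count are exactly the subtrees with $k$ edges.

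For each fixed $v \in V$, I would establish that the number of subtrees of $G$ with $k$ edges containing $v$ equals $a_k$. The tool is the universal cover $\pi : G_m \to G$, which is the $m$-regular tree since $G$ is $m$-regular, together with a chosen lift $\tilde v$ of $v$. A subtree of $G$ with $k$ edges containing $v$ lifts uniquely to a subtree of $G_m$ with $k$ edges containing $\tilde v$, because trees are simply connected. Conversely, a subtree $\tilde T$ of $G_m$ with $k$ edges containing $\tilde v$ projects down to a subgraph of $G$; the projection is injective on the vertices of $\tilde T$, since if two distinct vertices $u, w \in \tilde T$ shared an image in $G$, then the unique path from $u$ to $w$ inside $\tilde T$ (of length at most $k$) would project to a nontrivial reduced closed walk in $G$, forcing a cycle of length at most $k \leq \delta$ and contradicting the girth hypothesis. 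This gives a bijection between the two sets of subtrees, and vertex-transitivity of $G_m$ shows the latter has cardinality exactly $a_k$.

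Finally, I would apply the double-counting identity
$$
\sum_{v \in V} \#\bigl\{\text{subtrees of $G$ with $k$ edges containing } v \bigr\} = |V| \cdot a_k,
$$
noting that each subtree with $k$ edges has $k+1$ vertices and so is counted exactly $k+1$ times on the left. Combined with the first observation, this yields $|V| a_k / (k+1)$ connected subgraphs of $G$ with $k$ edges, which establishes the lemma (in fact with equality). The main technical step will be the injectivity of the projection $\tilde T \to G$, which rests on the standard fact that any nontrivial reduced closed walk of length $\ell$ forces a cycle of length at most $\ell$; once this is in hand, the rest is routine covering-space reasoning followed by a clean double count.
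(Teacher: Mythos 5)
Your proof is correct and takes essentially the same route as the paper's: both observe that a connected $k$-edge subgraph with $k\leq\delta$ must be a tree, count the $a_k$ such trees through each fixed vertex, and divide by $k+1$. Your covering-space argument (lifting to the $m$-regular tree and using the girth bound to show the projection is injective) is simply a rigorous version of the step the paper's proof dismisses as "clear," and it in fact gives the bound with equality.
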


\begin{proof}
Connected subgraphs with $k$ edges are trees and contain $k+1$
vertices. It should be clear that given any vertex $x$, the number of
ways of constructing a $k$-edge subgraph containing $x$, and hence the
number of such subgraphs, is the same in $G$ and in the $m$-regular
tree.
\end{proof}

\subsection{Achievable rates}

Let $\HH = \left(
  \begin{smallmatrix}
    \HH_X\\ \HH_Z
  \end{smallmatrix}
  \right)$ 
be the stabilizer matrix of a $(2,m)$ CSS code.
Remark that the graphs associated to $\HH_X$ and $\HH_Z$ have girth at
most $m$, since a row of $\HH_Z$ yields a cycle for $\HH_X$ and vice
versa.
We will say that a $(2,m)$ CSS code is proper if the two
associated graphs are connected and have girth exactly $m$.

We now translate the upper bound of
Proposition~\ref{prop:bound_fn_LDPC2} into a lower bound on the
function $D(p)$ of Theorem~\ref{theo:capacity_stabilizer} for a sequence of $(2,m)$ CSS codes.

\begin{prop} \label{prop:bound_g_LDPC2}
For any sequence of proper $(2,m)$ CSS codes we have:
$$
D(p) \geq \Big( \frac {1-2p} {1-p} \Big) 
\Big( \frac {4} {m} - 
\frac{4}{m} \big(1-(1-p)^m S_{m-2}( p(1-p)^{m-2} ) \big) \Big),
$$
where $S_{m-2}(z) = \sum_{k=0}^{m-2} \frac{a_k}{k+1} z^k$ and $a_k$ are the coefficients of the generating function $T_m$.
\end{prop}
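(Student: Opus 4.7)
The plan is to apply Proposition~\ref{prop:bound_fn_LDPC2} separately to the two submatrices $\HH_X$ and $\HH_Z$ of a proper $(2,m)$ CSS code, combine the estimates to upper bound $\phi(p) = \phi_X(p) + \phi_Z(p)$, compute the rate $\rank \HH / n$ exactly using connectedness, and finally feed everything into the concavity inequality \eqref{eq:fconcave} before passing to the $\limsup$.

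\textbf{Step 1: upper bound on $\phi(p)$.} For a proper $(2,m)$ CSS code both $\HH_X$ and $\HH_Z$ are binary $(2,m)$ matrices whose associated graphs $G_X$ and $G_Z$ have girth exactly $m = (m-2) + 2$. Thus Proposition~\ref{prop:bound_fn_LDPC2} applies with $\delta = m-2$, giving
$$\phi_X(p),\; \phi_Z(p) \;\leq\; \tfrac{2}{m}\bigl(1 - (1-p)^m S_{m-2}(p(1-p)^{m-2})\bigr).$$
Summing yields $\phi(p) \leq M(p)$ where $M(p) = \tfrac{4}{m}\bigl(1-(1-p)^m S_{m-2}(p(1-p)^{m-2})\bigr)$.

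\textbf{Step 2: the rate term.} Double counting the $1$'s in $\HH_X$ (two per column, $m$ per row) gives $|V_X| = 2n/m$, and by Lemma~\ref{lemma:dimCX} together with the connectedness of $G_X$ we have $\rank \HH_X = |V_X| - 1 = 2n/m - 1$. The same holds for $\HH_Z$, so for every code in the sequence
$$\frac{\rank \HH_t}{n_t} \;=\; \frac{4}{m} - \frac{2}{n_t}.$$

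\textbf{Step 3: concavity and $\limsup$.} Writing $\Delta_t(p) = \phi_t(1-p) - \phi_t(p)$ as in the method paragraph of Section~\ref{section:method}, the concavity of $\phi_t$ together with \eqref{eq:fconcave} gives
$$\Delta_t(p) \;\geq\; \frac{1-2p}{1-p}\left(\frac{\rank \HH_t}{n_t} - M(p)\right) \;=\; \frac{1-2p}{1-p}\left(\frac{4}{m} - \frac{2}{n_t} - M(p)\right).$$
As $t \to \infty$, the term $2/n_t$ vanishes, so taking the $\limsup$ on both sides yields
$$D(p) \;\geq\; \frac{1-2p}{1-p}\left(\frac{4}{m} - M(p)\right),$$
which is precisely the statement after substituting the expression for $M(p)$.

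\textbf{Expected obstacles.} There is no real difficulty; the proposition is essentially a concatenation of results already established. The two things worth checking carefully are (i) that the hypothesis $\text{girth} \geq \delta + 2$ of Proposition~\ref{prop:bound_fn_LDPC2} is met with $\delta = m-2$, which is exactly what ``proper'' provides, and (ii) that the vanishing $-2/n_t$ correction coming from $\rank \HH_t / n_t$ does not disturb the $\limsup$ (it does not, since the prefactor $(1-2p)/(1-p)$ is bounded on the relevant range $p \in [0,1/2]$, and for $p > 1/2$ the claimed bound is trivial as its right-hand side is non-positive).
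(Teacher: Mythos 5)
Your proof is correct and follows the same strategy as the paper: bound $\phi(p)=\phi_X(p)+\phi_Z(p)$ by applying Proposition~\ref{prop:bound_fn_LDPC2} to each of $\HH_X$ and $\HH_Z$ with $\delta=m-2$ (which is exactly what the girth-$m$ hypothesis supplies), plug into the concavity inequality \eqref{eq:fconcave} with $\rank\HH_t/n_t = 4/m - 2/n_t$, and let $t\to\infty$. One small inaccuracy in your closing remark: for $p>1/2$ the inequality is not trivially true just because the right-hand side is non-positive, since $D(p)$ is itself non-positive there (by monotonicity of $\phi_t$); the correct observation is rather that \eqref{eq:fconcave}, whose derivation in Proposition~\ref{prop:bound_g} places $1-p$ on the segment between $p$ and $1$, is only valid for $p\leq 1/2$, which is the range on which the proposition is meant to be read.
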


\begin{proof}
  For $\HH$ the stabilizer matrix of a proper CSS code, we
  apply \eqref{eq:fconcave} by using for the upper bound $M(p)$ on
  $\Esp(\rank(\HH_\E))/n$, the sum of 
  the upper bounds provided by 
  Proposition~\ref{prop:bound_fn_LDPC2} on each of the terms
  $\Esp(\rank(\HH_{X,\E}))/n$ and $\Esp(\rank(\HH_{Z,\E}))/n$.
  The result follows after some rearranging.
\end{proof}

Finally, Proposition~\ref{prop:bound_g_LDPC2} together with
Theorem~\ref{theo:capacity_stabilizer} yields:

\begin{theo}\label{theo:capacity_CSS_LDPC2}
Over the quantum erasure channel of erasure probability $p$,
achievable rates of proper $(2, m)$ CSS codes satisfy 
$$
R \leq (1 - 2p) \Big( \frac{4}{mp} \big( 1 - (1-p)^{m} S_{m-2}(p(1-p)^{m-2}) \big) -1 \Big).
$$
where $S_{m-2}(z) = \sum_{k=0}^{m-2} \frac{a_k}{k+1} z^k$ and $a_k$ are the coefficients of the generating function $T_m$.
\end{theo}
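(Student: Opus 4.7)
The plan is to invoke Theorem~\ref{theo:capacity_stabilizer} (which asserts $R\leq 1-2p-D(p)$), inject into it a quantitative lower bound on $D(p)$ tailored to proper $(2,m)$ CSS codes, and then isolate $R$ by a self-consistent algebraic rearrangement.

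First I would upper-bound $\phi(p):=\Esp_p(\rank\HH_\E)/n$. Because $\HH$ is a CSS matrix, the rows of $\HH_X$ and $\HH_Z$ lie in orthogonal halves of the $\F_2^{2n}$ representation of $\Pauli_n$, so $\rank\HH_\E = \rank\HH_{X,\E} + \rank\HH_{Z,\E}$ and $\phi(p) = \phi_X(p) + \phi_Z(p)$. By the ``proper'' assumption, $G_X$ and $G_Z$ are both $m$-regular of girth exactly $m$, hence of girth $\geq (m-2)+2$, so Proposition~\ref{prop:bound_fn_LDPC2} applies to each with $\delta = m-2$. Summing the two contributions yields
$$\phi(p) \leq M(p) := \tfrac{4}{m}\bigl(1 - (1-p)^m S_{m-2}(p(1-p)^{m-2})\bigr).$$
I would then apply the concavity inequality~\eqref{eq:fconcave}, using the \emph{exact} identity $\rank\HH_t/n_t = 1 - R_t$ (from $R_t = 1 - \rank\HH_t/n_t$) rather than its asymptotic value $4/m$. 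Passing to $\limsup$ gives $D(p) \geq \tfrac{1-2p}{1-p}\bigl((1-R) - M(p)\bigr)$, and plugging this into Theorem~\ref{theo:capacity_stabilizer} produces an inequality linear in $R$ with $R$ appearing on both sides.

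The final step is to isolate $R$. Collecting $R$-terms and multiplying through by $1-p$ reduces the inequality to $Rp \leq (1-2p)(M(p) - p)$, i.e.\ $R \leq (1-2p)(M(p)/p - 1)$. Substituting the explicit form of $M(p)$ yields the stated bound. All the technical work is upstream, in Proposition~\ref{prop:bound_fn_LDPC2}; the only real subtlety here is conceptual — one must resist plugging Proposition~\ref{prop:bound_g_LDPC2} in directly (which would replace $1-R_t$ by its asymptotic value $4/m$ and produce a bound of a different shape) and instead carry out the self-consistent solve using the exact rank–rate relation. The algebraic rearrangement itself is routine, but requires careful bookkeeping of signs and of the factors of $p$, $1-p$, and $1-2p$.
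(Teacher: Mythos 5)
Your proposal is correct and follows the paper's own route: bound $\phi_X$ and $\phi_Z$ separately via Proposition~\ref{prop:bound_fn_LDPC2} with $\delta=m-2$, sum to get $\phi(p)\leq M(p)=\tfrac4m(1-(1-p)^mS_{m-2}(p(1-p)^{m-2}))$, feed into the concavity inequality~\eqref{eq:fconcave} and Theorem~\ref{theo:capacity_stabilizer}, and rearrange. The paper is terse on the last step (it simply says Proposition~\ref{prop:bound_g_LDPC2} ``together with Theorem~\ref{theo:capacity_stabilizer} yields'' the result), and you are right that the stated form $R\leq(1-2p)(M(p)/p-1)$ comes from the self-consistent solve using $\rank\HH/n=1-R$, exactly as the paper does explicitly in the proof of Theorem~\ref{theo:stabilizer}. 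One small remark: the ``direct'' substitution you caution against, namely $R\leq 1-2p-\tfrac{1-2p}{1-p}(4/m-M(p))$, does produce a bound of a different analytic shape, but because proper $(2,m)$ CSS codes have $1-R\to 4/m$ it imposes exactly the same constraint on $p$ at $R=1-4/m$; so the ``self-consistent'' step is not needed for validity, only to arrive at the exact algebraic form stated in the theorem.
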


%\begin{figure}[h]
%\centering
%\includegraphics[scale=0.5]{CSS_2_8}
%\caption{Upper bounds on achievable rates of CSS $(2,8)$ codes with Theorem \ref{theo:stabilizer} in blue and with Theorem \ref{theo:capacity_CSS_LDPC2} in black and the capacity of the quantum erasure channel in red.}
%\label{fig:CSS(2,8)}
%\end{figure}

\section{Erasure threshold of quantum LDPC codes} \label{section:threshold}

In this part, we reformulate our bound on achievable rates of quantum
LDPC codes by determining an upper bound on the erasure decoding threshold of regular quantum LDPC codes.

The capacity of the quantum erasure channel is $1-2p$. This means that
the rate of a family of quantum codes with vanishing decoding error probability over the quantum erasure channel of probability $p$, satisfies $R \leq 1-2p$.
Alternatively,
given a family of quantum codes of rate at least $R$, we can ask for
the highest erasure rate that we can tolerate with vanishing error
probability after decoding. This is the erasure decoding threshold.
Assume that we have a family of quantum codes of rate higher than $R$, which achieve an asymptotic zero error probability over the quantum erasure channel of erasure probability $p$. Then, we have:
$$
p \leq \frac{1-R}{2}.
$$
If we consider CSS codes of type $(2, m)$, we have $2n/m$ rows in each
matrix $\HH_X$ and $\HH_Z$. Therefore, the number of encoded qubits is
at least $(1-4/m)n$ (actually it is exactly $(1-4/m)n + 2/n$ by Lemma \ref{lemma:dimCX}).
Using the bound of Theorem~\ref{theo:capacity_CSS_LDPC2} and the fact that the rates of the quantum codes are higher than $1-4/m$, we obtain:
\begin{theo} \label{theo:erasure_threshold}
The erasure threshold of $(2, m)$ proper CSS codes is below the solution of the equation:
$$
1-\frac{4}{m} = (1 - 2p) \Big( \frac{4}{mp} \big( 1 - (1-p)^{m} S_{m-2}(p(1-p)^{(\ell-1)(m-1)-1}) \big) - 1 \Big)
$$
where $p \in [0, 1/2]$.
\end{theo}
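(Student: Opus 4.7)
The plan is to combine the upper bound on achievable rates from Theorem~\ref{theo:capacity_CSS_LDPC2} with a lower bound on the rate of $(2,m)$ CSS codes themselves. The observation is that the theorem is essentially a rewriting of Theorem~\ref{theo:capacity_CSS_LDPC2}: we fix the rate at its minimum possible value for a $(2,m)$ CSS code and solve for the erasure probability.

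First, I would compute the rate of a proper $(2,m)$ CSS code. Since the matrix $\HH_X$ has $n$ columns each of Hamming weight $2$, the total number of ones equals $2n$; the constant row weight $m$ forces the number of rows of $\HH_X$ to be $r_X = 2n/m$, and the same holds for $\HH_Z$. By the CSS dimension formula, the number of encoded qubits satisfies
$$k = n - \rank \HH_X - \rank \HH_Z \geq n - r_X - r_Z = n\Bigl(1 - \tfrac{4}{m}\Bigr),$$
so the rate $R_t$ of every code in the family satisfies $R_t \geq 1 - 4/m$ (one can be more precise via Lemma~\ref{lemma:dimCX}, but only the lower bound matters here).

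Next, suppose that the family achieves vanishing decoding error probability over the quantum erasure channel of parameter $p$. Then all limiting rates are achievable and must satisfy the bound of Theorem~\ref{theo:capacity_CSS_LDPC2}, that is,
$$R \leq (1 - 2p)\left( \frac{4}{mp}\bigl(1 - (1-p)^m S_{m-2}(p(1-p)^{m-2})\bigr) - 1 \right).$$
Combining this with the rate lower bound $R \geq 1 - 4/m$ yields the inequality
$$1 - \frac{4}{m} \leq (1 - 2p)\left( \frac{4}{mp}\bigl(1 - (1-p)^m S_{m-2}(p(1-p)^{m-2})\bigr) - 1 \right).$$
Finally, to conclude that the threshold is upper bounded by the root of the associated equation, I would observe that the right-hand side is a decreasing function of $p$ on $[0,1/2]$ (it is, up to scaling, the upper bound on the rate as a function of erasure probability, and achievable rates decrease with $p$). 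Therefore the set of $p\in[0,1/2]$ satisfying the above inequality is an initial interval, whose right endpoint is the unique solution of the equality, and the erasure threshold of the family lies inside this interval.

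The only non-routine step is the monotonicity of the right-hand side, which on inspection follows because $\phi_X$ and $\phi_Z$ are increasing in $p$ (Proposition~\ref{prop:increasing}), so the upper bound $M(p)$ used in Proposition~\ref{prop:bound_fn_LDPC2} produces a decreasing expression for the resulting rate bound; no genuinely new estimate is needed.
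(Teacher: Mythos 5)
Your proposal follows essentially the same route as the paper: compute $r_X=r_Z=2n/m$ from the $(2,m)$ condition to get $R\geq 1-4/m$, then combine this lower bound with the rate upper bound of Theorem~\ref{theo:capacity_CSS_LDPC2} to isolate the threshold. The only addition is your attempt to justify that the right-hand side is decreasing in $p$, which the paper leaves implicit (referring instead to the graphical intersection); your sketch of that monotonicity via $\phi$ increasing is heuristic rather than a complete algebraic argument, but this does not change the fact that the core derivation matches the paper's.
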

This value is obtained at the intersection of the graphical
representation of the upper bound with the line $y = 1-4/m$. An
example of these curves is given in Figure~\ref{fig:CSS_2_8_v2}.

\begin{figure}[h]
\centering
\includegraphics[scale=0.5]{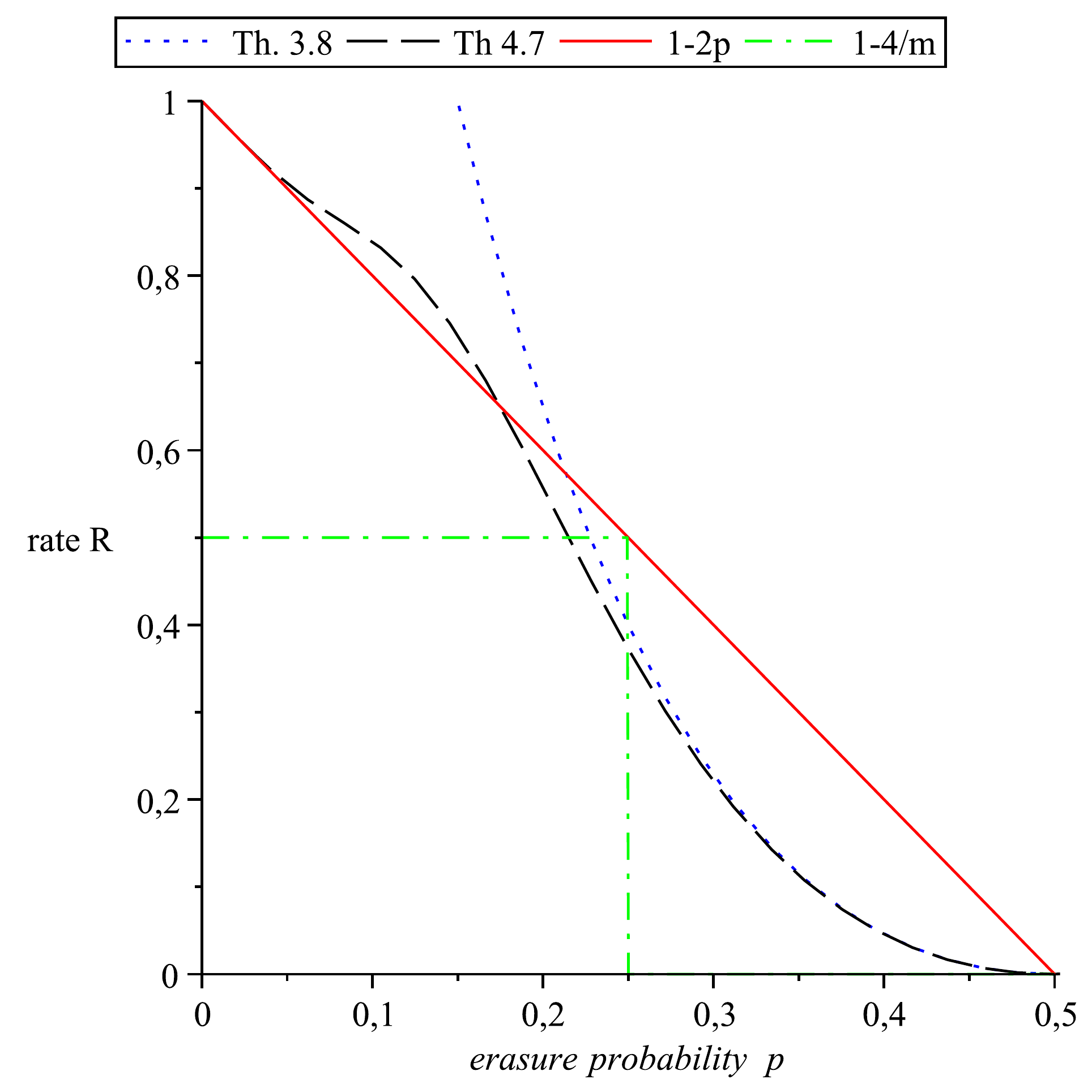}
\caption{The horizontal green line is the rate of a proper CSS (2,8)
  code. Its intersection with the capacity gives an upper bound on the
  erasure threshold of quantum codes : $p_e \leq 0.25$. The two other
  curves are the upper bounds of Theorems~\ref{theo:stabilizer} and 
  \ref{theo:erasure_threshold}: their intersection with the horizontal
  line gives the two upper bounds $p_e \leq 0.228$ and $p_e \leq 0.215$ for CSS (2,8) codes from Theorem \ref{theo:erasure_threshold}.}
\label{fig:CSS_2_8_v2}
\end{figure}

\bigskip

Using symbolic computation software, we computed different numerical
values of this upper bound on the decoding erasure threshold $p_e$:
$$
\begin{array}{|c|c|c|}
\hline
\text{ type } & \text{improved upper bound on } p_e & \text{ Capacity bound } p_e \leq 2/m\\
\hline
CSS(2,8), \text{ with Th. } \ref{theo:stabilizer} & 0.228 & 0.25 \\
CSS(2,8), \text{ with Th. } \ref{theo:erasure_threshold} & 0.215 & 0.25\\
Stab(4,8), \text{ with Th. } \ref{theo:stabilizer} & 0.228 & 0.25\\
CSS(2,5), \text{ with Th. } \ref{theo:stabilizer} & 0.387 & 0.40 \\
CSS(2,5), \text{ with Th. } \ref{theo:erasure_threshold} & 0.381 & 0.40 \\
\hline
\end{array}
$$

% CSS codes of type $(2,8)$ are also Stabilizer codes of type $(4, 8)$. These families have rate at least 1/2 and erasure threshold at most 0.25. We obtain a more accurate result for CSS codes.
% We can also remark that for $\ell \leq 3$ the most important contribution is given by components of size 0. The other components will be negligible in the bound.
% We can obtain a similar result for stabilizer regular LDPC codes using Theorem \ref{theo:capacity_stabilizer_LDPC}

\section{Application to percolation theory} \label{section:percolation}

% In the paper \cite{DZ10}, we obtained a bound on the critical probability of hyperbolic graphs $G_m$ using the upper bound on the capacity of the quantum erasure channel. This result was deduced from the construction of surfaces codes with finite quotient of the graphs $G_m$. This part is devoted to the improvement of this bound on the  critical probability using the bound of Theorem~\ref{theo:capacity_CSS_LDPC2}.

\subsection{Percolation theory}

In this section $E$ denotes the edge set of a graph, rather than a
Pauli error: context should not allow confusion.
Let $G = (V, E)$ be an infinite graph. Denote by $\mu_p$
the probability measure on $\{0,1\}$ defined by $\mu_p(\{1\})=p$.
Consider the product space $\Omega = \{0,1\}^E$ endowed with the
product probability measure $\Prob_p=\mu_p^{\otimes E}$.
Random events should be seen as subgraphs. Informally, we choose
every edge of $G$ with probability $p$ independently of the other
edges, and obtain a random subgraph. The edges of this subgraph are
called {\em open} edges.
Percolation theory is interested in the probability 
that a given edge $e$ is contained in a infinite open connected component
(an open {\em cluster}). This probability depends a priori on the edge
$e$, but not if the graph $G$ is edge-transitive, for example if
$G$ is the infinite square lattice (Figure~\ref{fig:square}).
The central parameter in percolation theory is the {\em critical probability} $p_c$, defined as:
$$
p_c(G) = \inf\{p \in [0, 1],\, \Prob_p(|\E(e)| = \infty)>0 \},
$$
where $\E(e)$ denotes the open cluster containing edge $e$.

\begin{figure}[h]
  \centering
  \begin{tikzpicture}[scale=0.9]
  
\draw[dashed]
\foreach \x in {1,2,...,5} {  (\x,0.25)-- (\x,5.75) };
\draw[dashed]
\foreach \y in {1,2,...,5} {  (0.25,\y)-- (5.75,\y) };
\draw[step=1cm] (1,1) grid (5,5);

  \end{tikzpicture}
  \caption{The square lattice}
  \label{fig:square}
\end{figure}
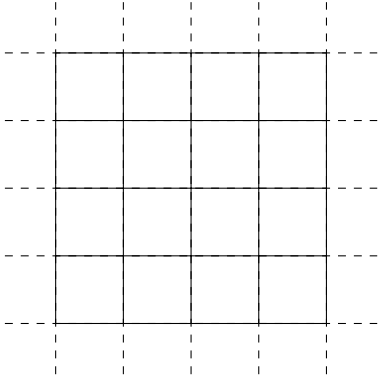

By a famous result of Kesten \cite{K80} that stayed a conjecture for 20 years,
we have $p_c=1/2$ for the square lattice. Computing the critical
probability exactly is usually quite difficult.
%but one class of
%graphs for which percolation is fairly
%well understood is trees: in particular it is straightforward to
%compute the critical probability of a regular tree of degree $\Delta$,
%in which case we have $p_c=1/(\Delta -1)$.

For any integer $m \geq 4$, we denote
by $G(m)$ the planar graph which is regular of degree $m$ and
tiles the plane by elementary faces of length~$m$. For $m=4$ the graph
$G(4)$ is exactly the square lattice. The local structure of the graph
$G(5)$ is shown on Figure~\ref{fig:G(5)}.

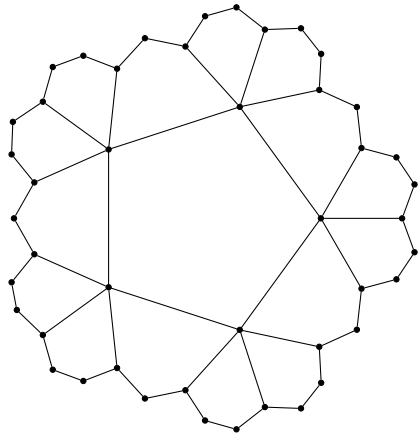
\begin{figure}[h]
  \centering
  \begin{tikzpicture}[scale=1.2]

\tikzstyle{every node}=[circle, draw, fill=black!100, inner sep=0pt, minimum width=2pt]

%pentagone central
\draw
(0:1.3) \foreach \x in {72,144,...,359} { -- (\x:1.3) } -- cycle
(0:1.3) node (a) {}
(72:1.3) node (b) {}
(144:1.3) node (c) {}
(216:1.3) node (d) {}
(288:1.3) node (e) {};

%on ajoutes les aretes autour des sommets du pentagone
\draw[shift=(a)]
(0:.9) node (a2) {}
(60:.9) node (a3) {}
(300:.9) node (a1) {};
\draw
(a1)--(a)--(a2)
(a)--(a3);

\draw[shift=(b)]
(72:.9) node (b2) {}
(132:.9) node (b3) {}
(12:.9) node (b1) {};
\draw
(b1)--(b)--(b2)
(b)--(b3);

\draw[shift=(c)]
(144:.9) node (c2) {}
(204:.9) node (c3) {}
(84:.9) node (c1) {};
\draw
(c1)--(c)--(c2)
(c)--(c3);

\draw[shift=(d)]
(216:.9) node (d2) {}
(276:.9) node (d3) {}
(156:.9) node (d1) {};
\draw
(d1)--(d)--(d2)
(d)--(d3);

\draw[shift=(e)]
(288:.9) node (e2) {}
(348:.9) node (e3) {}
(228:.9) node (e1) {};
\draw
(e1)--(e)--(e2)
(e)--(e3);

%fermeture des faces qui contiennent une arete du pentagone
\draw
(36:2.1) node (f){}
(108:2.1) node (g){}
(180:2.1) node (h){}
(252:2.1) node (i){}
(324:2.1) node (j){};
\draw
(a3)--(f)--(b1)
(b3)--(g)--(c1)
(c3)--(h)--(d1)
(d3)--(i)--(e1)
(e3)--(j)--(a1);

%fermeture des faces qui on un sommet en commun avec le pentagone
\draw[shift=(a2)]
(70:.4) node (a24){}
(290:.4) node (a21){};
\draw[shift=(a3)]
(345:.4) node(a31){};
\draw[shift=(a1)]
(15:.4) node (a14){};
\draw
(a2)--(a24)--(a31)--(a3)
(a1)--(a14)--(a21)--(a2);

\draw[shift=(b2)]
(142:.4) node (b24){}
(2:.4) node (b21){};
\draw[shift=(b3)]
(57:.4) node(b31){};
\draw[shift=(b1)]
(87:.4) node (b14){};
\draw
(b2)--(b24)--(b31)--(b3)
(b1)--(b14)--(b21)--(b2);

\draw[shift=(c2)]
(214:.4) node (c24){}
(74:.4) node (c21){};
\draw[shift=(c3)]
(129:.4) node(c31){};
\draw[shift=(c1)]
(159:.4) node (c14){};
\draw
(c2)--(c24)--(c31)--(c3)
(c1)--(c14)--(c21)--(c2);

\draw[shift=(d2)]
(286:.4) node (d24){}
(136:.4) node (d21){};
\draw[shift=(d3)]
(201:.4) node(d31){};
\draw[shift=(d1)]
(231:.4) node (d14){};
\draw
(d2)--(d24)--(d31)--(d3)
(d1)--(d14)--(d21)--(d2);

\draw[shift=(e2)]
(358:.4) node (e24){}
(218:.4) node (e21){};
\draw[shift=(e3)]
(273:.4) node(e31){};
\draw[shift=(e1)]
(303:.4) node (e14){};
\draw
(e2)--(e24)--(e31)--(e3)
(e1)--(e14)--(e21)--(e2);

  \end{tikzpicture}
  \caption{The local structure of the graph $G(5)$}
  \label{fig:G(5)}
\end{figure}

For $m>4$ these graphs make up regular tilings of the hyperbolic
plane. Interest in percolation on hyperbolic tilings was raised in a number
of papers e.g. \cite{Be96, BMK09, GZ11} and determining their critical probability $p_c(m)$ is 
highly non-trivial. Note that all graphs $G(m)$ are self-dual like
the square lattice $G(m)$.

We have the following easy bounds on $p_c$:

\begin{prop} \label{prop:easy_bound}
The critical probability $p_c$ of $G(m)$ satisfies
  $$\frac{1}{m-1} \leq p_c.$$
\end{prop}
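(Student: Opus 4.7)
The plan is to use a standard first-moment (Peierls-type) argument, counting open self-avoiding paths emanating from a fixed vertex and showing their expected number tends to zero when $p < 1/(m-1)$.

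Fix an edge $e = \{u, v\}$ of $G(m)$. If the open cluster $\E(e)$ is infinite, then in particular the open connected component of $u$ is infinite, so by König's lemma (the graph is locally finite) there exist open simple paths starting from $u$ of arbitrarily large length. Hence for every integer $k\geq 1$,
$$\Prob_p(|\E(e)| = \infty) \leq \Prob_p(N_k \geq 1) \leq \Esp_p(N_k),$$
where $N_k$ denotes the number of open simple paths of length $k$ in $G(m)$ starting at $u$.

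Next I would bound $\Esp_p(N_k)$. Since $G(m)$ is $m$-regular, the first edge of such a path can be chosen in at most $m$ ways, and each subsequent edge can be chosen in at most $m-1$ ways (we cannot back-track along the edge just used, and the simple-path condition only forbids more edges). Therefore the total number of simple paths of length $k$ starting at $u$ is at most $m(m-1)^{k-1}$. Each fixed such path consists of $k$ edges, independently open with probability $p$, hence has probability $p^k$ of being entirely open. By linearity,
$$\Esp_p(N_k) \leq m(m-1)^{k-1} p^k = \frac{m}{m-1}\bigl((m-1)p\bigr)^{k}.$$

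If $p < 1/(m-1)$, the right-hand side tends to $0$ as $k\to\infty$, so $\Prob_p(|\E(e)|=\infty) = 0$. By the definition of $p_c$ this forces $p_c(G(m)) \geq 1/(m-1)$.

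There is no real obstacle here: the only subtlety is the upper bound on the number of self-avoiding walks, for which we simply use the degree bound rather than any finer property of the graph $G(m)$ (which is why the bound is the same for all $m\geq 4$, including hyperbolic tilings, and is expected to be far from tight in the hyperbolic regime).
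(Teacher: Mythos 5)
Your proof is correct and follows essentially the same route as the paper: a first-moment (Peierls) count of self-avoiding paths from a fixed vertex, using the bound $m(m-1)^{k-1}$ on their number, adapted from Grimmett's argument for the square lattice. Your write-up is in fact slightly more careful than the paper's one-paragraph sketch, spelling out the Markov-inequality step $\Prob_p(N_k \geq 1) \leq \Esp_p(N_k) \to 0$ rather than appealing informally to a finite ``average length of an open path.''
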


\begin{proof}
  We adapt the proof of \cite{Gr} page 14 in the case of the square lattice.

Let O be a fixed vertex. To show the first inequality we can say that
there are not more than $m(m-1)^{n-1}$ paths from O of length $n$ in
$G(m)$ and the probability of such an open path is $p^n$. So if $p <
\frac{1}{m-1}$ 
the average length of an open path from O is not more than $\sum_{n=1}^\infty m(m-1)^{n-1}p^n < \infty$. In this case $p$ is under the critical probability.
\end{proof}

The same method leads to the upper bound $p_c  \leq 1-\frac{1}{m-1}$.
The proof can be immediatly adapte from the case of the square lattice \cite{Gr} page 14.

\subsection{Quotient graphs}

To study percolation on the hyperbolic tiling $G(m)$, we need a family
of increasingly big finite graphs which are locally the same as $G(m)$. We will use a family introduced by \v{S}ir\'a\v{n} in~\cite{Si00}.

Let $P_k(X) = 2\cos(k\arccos(X/2))$ be the $k$-th normalized Chebychev polynomial and $\xi = 2\cos(\pi/m^2)$. Let $y$ and $z$ be the matrices of $SL_3(\Z[\xi])$ defined by
$$
y  =
\left(
\begin{array}{ccc}
P_m(\xi)^2-1 & 0 & P_m(\xi)\\
P_m(\xi) & 1 & 0\\
-P_m(\xi) & 0 & -1
\end{array}
\right)
$$

$$
z =
\left(
\begin{array}{ccc}
-1 & -P_m(\xi) & 0\\
P_m(\xi) & P_m(\xi)^2-1 & 0\\
P_m(\xi) & P_m(\xi)^2 & 1
\end{array}
\right).
$$
These two matrices generate the triangular group $T(m)$ \cite{Si00}. 
To obtain a finite graph we can reduce the entries of the matrices
modulo a prime number $p$. The coefficients are in the ring $\Z[\xi]$
which is isomorphic to the quotient $\Z[X]/h(X)$ where $h$ is the
minimal polynomial of the algebraic integer $\xi$. 
Reducing coefficients modulo $p$, we obtain a group homomorphism from
$SL_3(\Z[\xi])$ to $SL_3(\F_p[X]/(h(X))$. The image of $T(m)$
 will be called $\bar T(m)$. 

Let $\bar G(m)$ be the graph defined like $G(m)$ but with the group $\bar
T(m)$, in other words the vertices, edges and faces of $\bar G(m)$ are
defined as the left cosets of $\gen{\bar y}$, $\gen{\bar y \bar z}$ and
$\gen{\bar z}$ respectively.
There is a surjection $s$ from $G(m)$ to $\bar G(m)$ which sends
$u\gen{y}$ to 
$\bar u \gen{\bar y}$. 

Following \v{S}ir\'a\v{n}, let us define the {\em injectivity radius}
of the graph $\bar G(m)$ as the largest 
integer $r$ such that the restriction of the surjection $s$ to a ball
of radius $r$ is one-to-one. It is shown in \cite{Si00} that
we can choose $p$ so as to have $r$ arbitrarily large. Loosely
speaking, \v{S}ir\'a\v{n}'s argument is that 
if two distinct vertices $u\gen{y}$ and $v\gen{y}$ 
in $G(m)$  have the same image under $s$ then $u^{-1}v$ in $T(m)$ must
project to the identity element in $\bar T(m)$. But this means that
the matrix $u^{-1}v$ has polynomial entries that, properly reduced
modulo $h(X)$, can only be expressed
with coefficients at least one of which exceeds $p$: this implies that $u^{-1}v$ can
only be expressed as a product of a large number of matrices $y$ and
$z$, which in turn means that the original vertices $u\gen{y}$ and $v\gen{y}$ 
have to be far apart in $G(m)$.

The above construction enables us
to define a family of finite graphs $(G_r(m))_{r\geq 1}$ such that each
graph $G_r(m)$ has injectivity radius at least $r$, 
for every integer $r$. 

Let us now define random subgraphs of $G_r(m)$
through the product measure $\mu_p^{\otimes E_r}$, where $E_r$ denotes
the edge set of $G_r(m)$. In other words the open subgraph of
$G_r(m)$ is created by declaring every edge open with 
independent probability $p$.

For any fixed edge $e$, let $\E_r(e)$ be the (possibly empty)
connected component of the random subgraph of $G_r(m)$ that contains
$e$ and call it again the open cluster containing $e$.
Let $f_r(p)$ be the probability that $|\E_r(e)|>r$. We have:

\begin{prop}\label{prop:upper_bound_f}
If $p < p_c(m)$ then $f_r(p)$ goes to $0$ when $r$ goes to infinity.
\end{prop}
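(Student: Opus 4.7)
The plan is to couple the random open subgraph of $G_r(m)$ to that of $G(m)$ via the covering map $s$, and then use the subcritical regime $p < p_c(m)$ on the infinite graph, where the cluster $\E(e)$ is almost surely finite.

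First, I would fix an edge $\bar e$ of $G_r(m)$ and choose a preimage $e$ in $G(m)$. Since the injectivity radius of $G_r(m)$ is at least $r$, the restriction of $s$ to the ball $B_r$ of radius $r$ around $e$ in $G(m)$ is a graph isomorphism onto the ball of radius $r$ around $\bar e$ in $G_r(m)$. I would then build a coupling of the two random open subgraphs: for every edge inside $B_r$, declare it open in $G(m)$ if and only if its image in $G_r(m)$ is open; for edges of $G(m)$ outside $B_r$, draw independent Bernoulli$(p)$ variables. Both marginals are the correct percolation measures.

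The crucial observation is that on the event $A_r = \{\E(e) \subset B_r\}$, the BFS exploration of the cluster of $e$ in $G(m)$ never leaves $B_r$; under the coupling and the ball isomorphism, this exploration is mirrored step for step by the exploration of $\bar e$ in $G_r(m)$. Hence on $A_r$ the two clusters are in bijection via $s$ and in particular $|\E_r(\bar e)| = |\E(e)|$. This gives
$$f_r(p) = \Prob_p(|\E_r(\bar e)| > r) \leq \Prob_p(|\E(e)| > r) + \Prob_p(A_r^c).$$
Under $p < p_c(m)$ the cluster $\E(e)$ in $G(m)$ is almost surely finite, so both $\Prob_p(|\E(e)| > r)$ and $\Prob_p(\E(e) \not\subset B_r)$ tend to $0$ as $r \to \infty$, completing the argument.

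The main thing to get right is the coupling and the rigorous statement that the injectivity-radius hypothesis yields a graph isomorphism of the balls (minor care is needed because $\bar e$ is an edge rather than a vertex, so one works with the ball around an endpoint of $\bar e$ and adjusts the radius by one). After that, the argument reduces to the standard fact that a finite random variable is eventually bounded with probability tending to one, which is immediate once the subcritical assumption on $G(m)$ is in place.
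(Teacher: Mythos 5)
Your proof is correct and takes essentially the same approach as the paper: the paper observes directly that the event $\{|\E_r(\bar e)|\leq r\}$ depends only on the ball of radius $r$ around an endpoint of $e$, and that these balls in $G_r(m)$ and $G(m)$ are isomorphic by the injectivity-radius hypothesis, so $f_r(p)=\Prob_p(|\E(e)|>r)$ on the infinite graph; subcriticality then finishes via monotone convergence. Your explicit coupling is a more careful rendering of that isomorphism claim, and your bound $f_r(p)\leq \Prob_p(|\E(e)|>r)+\Prob_p(A_r^c)$ can in fact be tightened to the paper's exact equality once one notes that $A_r^c\subset\{|\E(e)|>r\}$ (a connected edge set of size $\leq r$ containing $e$ lies inside the ball of radius $r$ around an endpoint of $e$), though the looser bound you state already suffices.
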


\begin{proof}
Notice that the probability $1-f_r(p)$ that the open cluster
containing $e$ has cardinality not more than $r$ is the same
for the random subgraph defined on the finite graph $G_r(m)$ 
and the random subgraph defined on the infinite graph $G(m)$. This is
because this event depends only on the ball of radius $r$ centered on an
endpoint of $e$, and these balls in $G_r(m)$ and $G(m)$ are
isomorphic.

We can therefore consider $f_r(p)$ to mean the probability of the
event $F_r$ that $|\E(e)|>r$
in the infinite graph $G(m)$. Now $(F_r)_{r\geq 1}$ is a
decreasing sequence of events, and $\Prob_p(\cap_{r\geq 1} F_r)$ is
exactly the probability of percolation, which is $0$ since we have
supposed $p<p_c$. 
By monotone convergence we therefore have
$f_r(p)=\Prob_p(F_r)\rightarrow 0$.
\end{proof}

\subsection{The quantum codes $Q_r(m)$ associated with the graphs $G_r(m)$}

Every finite graph $G_r(m)$ gives rise to a CSS quantum code 
$Q_r(m)$ whose
coordinate set is the edge set $E$ of the graph. We will have
therefore a quantum code of length $n=|E|$. The matrices $\HH_X$ and
$\HH_Z$ are defined as described in Section~\ref{section:2complex}.
The rows of $\HH_X$ are in one-to-one
correspondence with the vertices of the graph. Every vertex $x$
yields a row of $\HH_X$ whose support is exactly the set of edges
incident to $x$. Every row of $\HH_X$ therefore has weight $m$.
The rows of the other matrix $\HH_Z$ is in one-to-one correspondence
with the set of faces of the graph. Every face yields a row whose
support is equal to the set of edges making up the face. Since faces
are $m$-gons, every row of $\HH_Z$ also has weight $m$.
It should be clear that rows of $\HH_X$ and $\HH_Z$ meet in either
$0$ or $2$ edges, so any row of $\HH_X$ is orthogonal to any row of
$\HH_Z$ and we have a quantum CSS code of type $(2,m)$.

Recall from Section~\ref{section:2complex} that
the classical code $C_X$ is the cycle code of the graph $G_r(m)$.
When we reverse the roles of $\HH_X$ and $\HH_Z$ by declaring the rows
of $\HH_Z$ (rather than those of $\HH_X$) to be vertices and the rows
of $\HH_X$ to be faces, the graph thus defined is called the dual
graph of $G_r(m)$ and we denote it here by $G_r^*(m)$. The classical code
$C_Z$ is thus the cycle code of the dual graph $G_r^*(m)$.

From Lemma~\ref{lemma:dimCX}, since the graphs $G_r(m)$ and
$G_r^*(m)$ are connected, we have:

\begin{prop}\label{prop:dim}
The dimension $k$ of the quantum code $Q_r(m)$ equals:
$$k=\left(1-\frac{4}{m}\right)n+2.$$
\end{prop}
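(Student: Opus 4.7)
The plan is to use the CSS dimension formula $k = n - \rank \HH_X - \rank \HH_Z$ and compute each rank via Lemma~\ref{lemma:dimCX} applied to $G_r(m)$ and its dual $G_r^*(m)$, with the vertex and face counts obtained by a standard double-counting argument.

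First, I would observe that since $G_r(m)$ is connected by construction, Lemma~\ref{lemma:dimCX} gives $\dim C_X = n - |V| + 1$, hence
$$\rank \HH_X = n - \dim C_X = |V| - 1.$$
The analogous statement for the dual graph $G_r^*(m)$, which is also connected, yields $\rank \HH_Z = |F| - 1$, where $F$ denotes the face set of $G_r(m)$ (equivalently the vertex set of $G_r^*(m)$).

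Next, I would determine $|V|$ and $|F|$ by double counting incidences. Each vertex of $G_r(m)$ has degree $m$ and each edge has two endpoints, so $m|V| = 2n$, giving $|V| = 2n/m$. Likewise, each face is an $m$-gon and each edge is incident to exactly two faces (the $(2,m)$ property of $\HH_Z$), so $m|F| = 2n$, giving $|F| = 2n/m$.

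Substituting into the CSS dimension formula,
$$k = n - (|V| - 1) - (|F| - 1) = n - \frac{2n}{m} - \frac{2n}{m} + 2 = \left(1 - \frac{4}{m}\right)n + 2.$$
There is no real obstacle here: the only nontrivial input is Lemma~\ref{lemma:dimCX} combined with the connectedness of both $G_r(m)$ and its dual, which holds by construction of the quotient graphs in \v{S}ir\'a\v{n}'s family.
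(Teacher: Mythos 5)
Your proof is correct and follows essentially the same route as the paper: apply Lemma~\ref{lemma:dimCX} to the connected graphs $G_r(m)$ and $G_r^*(m)$ to get $\rank \HH_X = |V|-1$ and $\rank \HH_Z = |F|-1$, use the handshake count $m|V| = 2n = m|F|$, and plug into the CSS formula $k = n - \rank\HH_X - \rank\HH_Z$. The paper leaves most of this implicit (and in Section~\ref{section:threshold} actually writes the constant with a typo as $+2/n$ rather than $+2$), so your write-up is if anything more explicit than the original.
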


We remark that for $m=4$, the graph $G_r(4)$ is a combinatorial torus and
the quantum code $Q_r(4)$ is a version of Kitaev's toric code
\cite{Ki97}. For $m\geq 5$
the quantum codes $Q_r(m)$
have positive rate bounded away from zero and minimum distance at
least $2r$ (see the remark after the proof of
Proposition~\ref{lemma:upper_bound_g} below)
which is a quantity which behaves as $\log n$. See \cite{Ze09} for a
discussion of similar families of surface codes.

\medskip

\noindent
{\bf Non-correctable erasures.}
An erasure vector can be identified with a set of
edges of $G_r(m)$ (or of $G_r^*(m)$) and we will denote it as before
by $\E$.
From Proposition~\ref{prop:noncorrectable} we have that the erasure
pattern $\E$ is non-correctable if and only if $\E$ either contains a cycle of $G_r(m)$ which is not
a sum of faces (an element of $C_X\setminus C_Z^\perp$) or $\E$, 
viewed as a set of edges of the dual graph
$G_r^*(m)$, contains a cycle of $G_r^*(m)$ that is not a
sum of faces of $G_r^*(m)$ (an element of $C_Z\setminus C_X^\perp$).

\subsection{Bounds on the critical probability using the capacity of the quantum erasure channel}
Consider an arbitrary member of
the family of quantum codes $Q_r(m)$ associated with the graphs
$G_r(m)$. 
Because the original graph $G(m)$ is
self-dual, all arguments involving $G_r(m)$ will be seen to hold
for its dual graph $G_r^*(m)$ and we will focus on the probability
that the random erasure pattern $\E$ contains a cycle that is not
a sum of faces in the original graph~$G_r(m)$.

We would like to derive the upper bound on $p_c$ in
Theorem~\ref{theo:percolation_bound} below by claiming the following: if $p<p_c$, then
for the family of graphs $G_r(m)$, the probability that the random
set of edges~$\E$ contains a cycle which is not a sum of faces
vanishes. If this is true, then Theorem~\ref{theo:capacity_stabilizer}
applies and the rate $R$ of the quantum code $Q_r(m)$
must satisfy $R<1-2p-D(p)$ for every $p<p_c$ and
Proposition~\ref{prop:dim} gives the result since $R=1-4/m$.

Unfortunately, we do not know whether for every $p<p_c$, 
the erasure pattern $\E$ contains no cycle that is not a sum of faces
with high probability. What we will prove however, is that if $\E$
contains a cycle that is not a sum of faces, then with high
probability one of the
representatives of this cycle modulo the space of faces must
have very small weight. To violate the capacity of the erasure channel
we will therefore use, not $Q_r(m)$ directly, but an ``improved''
version $Q_r'(m)$ of $Q_r(m)$ that we now introduce.

\begin{prop}\label{prop:code_improvement}
Let $Q_r(m)$ be a hyperbolic code, $n$ its length and $R$ its rate.
Suppose $\rho \in ]0, \frac{1}{2}[$ and $\alpha \in ]0, 1[$ are such that
$$
h(\rho) < \alpha < \frac{R}{2},
$$
where $h(\rho)=-\rho\log_2\rho -(1-\rho)\log_2(1-\rho)$ denotes the binary entropy function.
Then we can add $\alpha n$ rows to the parity-check matrix $\HH_X$ and
$\alpha n$ rows to the parity-check matrix $\HH_Z$ 
of $Q_r(m)$ to obtain a CSS code $Q_r'(m)$ of length $n$, rate
$R-2\alpha$ and distance $d \geq \rho n$. 
\end{prop}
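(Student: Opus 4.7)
The plan is a probabilistic existence argument with a sequencing wrinkle to preserve CSS compatibility. Set $B=\{v\in\F_2^n:0<w(v)<\rho n\}$, so the standard entropy bound gives $|B|\leq 2^{h(\rho)n}$. I would first enlarge $\HH_X$ by adjoining $\alpha n$ independent uniform random vectors $a_1,\ldots,a_{\alpha n}\in C_Z=\ker \HH_Z$, and then enlarge $\HH_Z$ by adjoining $\alpha n$ independent uniform random vectors $c_1,\ldots,c_{\alpha n}$ in $W:=C_X\cap\{a_1,\ldots,a_{\alpha n}\}^\perp$. By construction the $a_i$ commute with the original rows of $\HH_Z$, and the $c_j$ commute with every row of the enlarged $\HH_X$, so this does produce a valid CSS code~$Q_r'(m)$.

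For the distance, let $C_X'$ and $C_Z'$ denote the kernels of the enlarged matrices and fix a vector $v\in B$. If $v$ is $Z$-type problematic in the new code, i.e.\ $v\in C_X'\setminus(C_Z')^\perp$, then in particular $v\in C_X\setminus C_Z^\perp$, so the linear form $a\mapsto v\cdot a$ is non-trivial on $C_Z$; the events $\{v\cdot a_i=0\}$ are therefore independent Bernoulli of parameter $1/2$ and $v\in C_X'$ has probability exactly $2^{-\alpha n}$. If instead $v$ is $X$-type problematic, i.e.\ $v\in C_Z'\setminus(C_X')^\perp$, then the identity $(C_X')^\perp=C_X^\perp+\mathrm{span}(a_i)=W^\perp$ shows that $v\notin(C_X')^\perp$ is precisely the condition that $c\mapsto v\cdot c$ is non-trivial on $W$; conditionally on the $a_i$ one then has $v\in C_Z'$ with probability $2^{-\alpha n}$. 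A union bound gives an expected number of low-weight problematic vectors at most $2\cdot 2^{(h(\rho)-\alpha)n}$, which tends to~$0$ because $h(\rho)<\alpha$.

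It remains to verify that the added rows are linearly independent modulo the old row spaces, so that the rate drops by exactly $2\alpha$. The images $\bar a_i$ are uniform in $C_Z/C_X^\perp$, a quotient of dimension $Rn$, and a standard count shows that the $\alpha n$ of them are dependent with probability at most $2^{(\alpha-R)n}$. The images of the $c_j$ live in $W/C_Z^\perp$, a quotient whose dimension is generically $(R-\alpha)n$; here the hypothesis $\alpha<R/2$ is used to keep this dimension strictly greater than $\alpha n$ and to bound the corresponding dependence probability by $2^{(2\alpha-R)n}$. Both bounds vanish as $n\to\infty$, so for $n$ large enough there is a realisation of the construction producing a CSS code of rate $R-2\alpha$ and minimum distance at least $\rho n$. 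The main subtlety, which drives the sequential choice of the $c_j$ from $W$ rather than from the whole of $C_X$, is the orthogonality requirement between the two batches of new rows.
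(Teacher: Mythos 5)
Your argument is essentially the paper's: both proceed by a probabilistic existence argument, adding random rows from $C_Z$ to $\HH_X$ and then random rows from the enlarged kernel to $\HH_Z$, and bounding the expected number of low-weight problematic vectors by the entropy bound $|B|\leq 2^{h(\rho)n}$ together with a rank count that uses $\alpha<R/2$. The only presentational difference is that the paper runs two sequential existence arguments over extensions conditioned to be of full additional rank, whereas you sample unconditioned, bound the rank-deficiency probability separately, and close with a single union bound; both give the same conclusion.
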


\begin{proof}
Denote by $r_X$ and $r_Z$ the dimension of the code $C_X^\perp$ and $C_Z^\perp$ respectively. We have $r_X = r_Z = \frac{2}{m} n -1$.

We will construct a matrix $\HH_X'$ by adding $\alpha n$ rows to the
matrix $\HH_X$ such that the rows of $\HH_X'$ are orthogonal to the
rows of $\HH_Z$ and the rank of $\HH_X'$ is $r_X+\alpha n$. Let $C_X'$
be the code of parity-check
matrix $\HH_X'$.

For $\rho \in ]0, 1/2[$, we define $X_\rho$ by
$$
X_\rho(\HH_X') = |\{ v \in  C_X' \backslash C_Z^\perp | w(v) \leq \rho n \}|.
$$

We can write $X_\rho$ as a sum a random variables to see that
$$
\mathbb E (X_\rho) = \sum_{\substack{v \in C_X\backslash C_Z^\perp \\ v \in B(0, \rho n)}} \frac{|\{\HH_X' | v \in C_X'\}|}{|\{\HH_X'\}|}.
$$
where $B(0, \rho n)$ denotes the Hamming ball of radius $\rho n$.
Let $L_1, L_2, \dots L_{r_X}$ be $r_X$ rows of $\HH_X$.
The number of suitable matrices $\HH_X'$ is the number of families $L_1', L_2', \dots L_{\alpha n}'$ of vectors of $\F_2^n$ such that $L_j' \in C_Z$ for all $j$ and $(L_1, L_2, \dots L_{r_X}, L_1', L_2' \dots, L_{\alpha n}')$ are linearly independant.

We can construct a suitable matrix $\HH_X'$ if and only if $r_X + \alpha n \leq \dim(C_Z)$ this gives the condition $\alpha < (1-\frac{4}{m})-\frac{2}{n}$.
In this case the number of matrices is
$$
\prod_{i=r_X}^{r_X+\alpha n-1} (2^{n-r_Z}-2^i).
$$

To evaluate the cardinality $|\{\HH_X' | v \in C_X'\}|$ with $v$ in $C_X \backslash C_Z^\perp$, it suffices to add the condition $L_j' \in \{v\}^\perp$ for all $j$.
We get
$$
\prod_{i=r_X}^{r_X+\alpha n-1} (2^{n-r_Z-1}-2^i).
$$

So we have
$$
\frac{|\{\HH_X' | v \in C_X'\}|}{|\{\HH_X'\}|} = \frac{2^{n-r_X-r_Z-\alpha n}-1}{2^{n-r_X-r_Z}-1} \leq 2^{-\alpha n}.
$$
This bound doesn't depend of $v$ so we can give an upper bound on the expectation of $X_\rho$ because we know that the number of words in the ball of radius $\rho n$ is less than $2^{nh(\rho)}$.
We find
$$
\mathbb E (X_\rho) \leq 2^{n(h(\rho)-\alpha)}.
$$
If $\alpha > h(\rho)$ the mean goes to 0.
Since $X_\rho$ has integer values there exists $\HH_X'$ such that $X_\rho(\HH_X) = 0$.
We obtain a CSS code of matrix $\HH_X'$ with $r_X' = r_X+\alpha n$ and $\HH_Z$ unchanged such that the minimum weight of a word of $C_X' \backslash C_Z^\perp$ is at least $\rho n$.

We want to repeat this argument to have the minimum weight of a word of $C_Z' \backslash C_X'^\perp$ higher than $\rho n$. It suffices to choose $\alpha < \frac{1}{2}(1-\frac{4}{m})+\frac{1}{n}$ because in this case $r_Z + \alpha n < \dim(C_X)$.
\end{proof}

Let $\E$ be an erasure. We can write 
\begin{equation}
\label{eq:EC}
\E = \E_C + \E_P
\end{equation}
where $\E_C$ is the sum of the connected components which do not cover
a cycle which is not a sum of faces. The {\em problematic} part $\E_P$
of $\E$ is the union of the others components.

In the graph $G_r(m)$, define $g_r(p)$ to be the probability that
that the open cluster $\E_r(e)$ covers a cycle which is not a sum of
faces.
We have:
\begin{lemma}\label{lemma:upper_bound_g}
If $p<p_c(m)$ then $g_r(p)$ goes to $0$ when $r$ goes to infinity.
\end{lemma}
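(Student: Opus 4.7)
The plan is to relate $g_r(p)$ to $f_r(p)$, whose vanishing is already established by Proposition~\ref{prop:upper_bound_f}, via a containment of events. The key geometric fact to exploit is that any non-face-sum cycle must be large: specifically, any cycle of $G_r(m)$ that is not a sum of faces has at least $2r$ edges. Once this is in hand, the rest is formal.

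First I would establish that the minimum weight of an element of $C_X\setminus C_Z^\perp$ (equivalently, a cycle of $G_r(m)$ that is not a sum of faces) is at least $2r$. To do so, consider a cycle $c$ of length $\ell<2r$. Being a simple closed walk, $c$ is contained in a ball of radius $\lfloor \ell/2\rfloor<r$ around either of its vertices. By definition of the injectivity radius, this ball lifts isomorphically under the projection $s:G(m)\to G_r(m)$, so $c$ lifts to a cycle $\tilde c$ of the same length in the infinite graph $G(m)$. Because $G(m)$ is a tiling of the simply connected (Euclidean or hyperbolic) plane, every cycle of $G(m)$ is a sum of its bounded faces; hence $\tilde c$ is a sum of faces in $G(m)$. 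Projecting back via $s$, we conclude that $c$ itself is a sum of faces of $G_r(m)$. By symmetry the same argument applies in the dual graph $G_r^*(m)$.

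Next, suppose the open cluster $\E_r(e)$ covers a cycle $c$ of $G_r(m)$ that is not a sum of faces. By the previous step, $c$ has at least $2r$ edges, and since all of these edges belong to $\E_r(e)$, we have $|\E_r(e)|\geq 2r>r$ (for $r\geq 1$). Thus the event whose probability defines $g_r(p)$ is contained in the event whose probability defines $f_r(p)$, so
$$g_r(p)\leq f_r(p).$$
Applying Proposition~\ref{prop:upper_bound_f} for $p<p_c(m)$ yields $g_r(p)\to 0$ as $r\to\infty$.

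The only genuinely substantive step is the first one: justifying that short cycles are sums of faces. This combines two ingredients — the injectivity radius condition supplied by the Šiřáň construction, and the fact that each face of $G_r(m)$ is the image of a bounded face of the plane tiling $G(m)$. Neither is difficult, but it is precisely here that one uses the specific construction of $G_r(m)$ rather than an arbitrary $m$-regular graph with large girth.
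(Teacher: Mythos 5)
Your proof is correct and takes essentially the same route as the paper: both arguments establish $g_r(p)\leq f_r(p)$ by a planarity argument on a ball of radius $r$ where the surjection $s:G(m)\to G_r(m)$ is injective, and then invoke Proposition~\ref{prop:upper_bound_f}. The only difference is organizational — you front-load the $\geq 2r$ lower bound on the length of any non-face-sum cycle and deduce the event containment from it, whereas the paper argues directly that a cluster of size $\leq r$ lies in a planar ball (and records the $2r$ minimum-distance bound only as a remark afterward).
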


\begin{proof}
Recall that $f_r(p)$ denotes the probability that $|\E_r(e)|>r$. 
We prove that $g_r(p) \leq f_r(p)$ and apply Proposition~\ref{prop:upper_bound_f}.
If $|\E_r(e)|\leq r$ then the open cluster 
$\E_r(e)$ is included in a ball of radius $r$ of the graph $G_r(m)$. 
Since this ball is isomorphic to the ball of the same radius 
in the planar graph $G(m)$, it is planar. In any planar graph every
cycle is a sum of faces so $\E_r(e)$ covers a cycle which is not a sum
of faces only if $|\E_r(e)|> r$, hence $g_r(p) \leq f_r(p)$.
\end{proof}

\noindent
{\bf Remark:}
By the same planarity argument as above,
every cycle of length less than $2r$ in the graph $G_r(m)$ is a sum of faces. This proves that the distance of the quantum code $Q_r(m)$ is at least $2r$.

\begin{prop}\label{prop:bound_mean}
If we consider the erasure channel of probability $p<p_c$ then
$\forall \e>0, \exists r_0 \in \N$ such that if $r \geq r_0$ then the 
expectation of the weight of $\E_P$ defined as in \eqref{eq:EC} satisfies
$$
\mathbb E (|\E_P|) \leq \e n.
$$
\end{prop}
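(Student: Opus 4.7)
The plan is to bound $\Esp(|\E_P|)$ by a simple union bound over edges, exploiting the fact that each individual edge lies in a problematic component only with vanishing probability when $p<p_c$.

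First I would write $|\E_P|=\sum_{e\in E_r}\mathbf{1}[e\in\E_P]$ and apply linearity of expectation to get $\Esp(|\E_P|) = \sum_{e\in E_r} \Prob(e\in\E_P)$, where $E_r$ denotes the edge set of $G_r(m)$, so $|E_r|=n$. By the very definition of the decomposition \eqref{eq:EC}, the event $\{e\in\E_P\}$ means that $e$ is open and the open cluster $\E_r(e)$ containing $e$ covers a cycle that is not a sum of faces.

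Next, I would argue that $\Prob(e\in\E_P)$ is bounded above by $g_r(p)$ uniformly in $e$. This is really just the event whose probability was analysed in Lemma~\ref{lemma:upper_bound_g}, and the argument there used only that the ball of radius $r$ around $e$ in $G_r(m)$ is isomorphic to the corresponding planar ball in $G(m)$. Since by construction of the quotient graph $G_r(m)$ every edge has an identical neighbourhood up to radius $r$, the bound $\Prob(e\in\E_P)\leq g_r(p)\leq f_r(p)$ holds for every edge.

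Summing over the $n$ edges gives $\Esp(|\E_P|)\leq n\, f_r(p)$. Since $p<p_c$, Proposition~\ref{prop:upper_bound_f} yields $f_r(p)\to 0$ as $r\to\infty$, so for any $\e>0$ we can choose $r_0$ such that $f_r(p)\leq\e$ whenever $r\geq r_0$, giving $\Esp(|\E_P|)\leq\e n$ as claimed. There is no real obstacle here: the work was done in Lemma~\ref{lemma:upper_bound_g}, and what this proposition adds is merely that a per-edge vanishing bound immediately produces the desired linear bound on the total expected weight of the problematic part via additivity.
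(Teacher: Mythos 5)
Your proof is correct and follows the same route as the paper: write $|\E_P|$ as a sum of per-edge indicators, note that the per-edge probability is $g_r(p)$ (uniform over edges since it depends only on the ball of radius $r$), and invoke Lemma~\ref{lemma:upper_bound_g} / Proposition~\ref{prop:upper_bound_f} to make it vanish as $r\to\infty$. The only cosmetic difference is that you pass through the chain $\Prob(e\in\E_P)\leq g_r(p)\leq f_r(p)$ explicitly, whereas the paper writes $\Esp(X_{r,e})=g_r(p)$ and then applies Lemma~\ref{lemma:upper_bound_g} directly; this is the same argument.
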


\begin{proof}
For any edge $e$ of $G_r(m)$, let $X_{r, e}$ be the random variable
which take the value 1 if the connected component $\E_r(e)$ of $e$ in
$G_r(m)$ covers a cycle which is not a sum of faces and the value 0
otherwise. Then we have:
$$|\E_P| = \sum_eX_{r, e}.$$
To conclude note that $\mathbb E (X_{r, e})=g_r(p)$ and apply
Lemma \ref{lemma:upper_bound_g}.
\end{proof}

The next Lemma states that if the erasure vector $\E$ has a large
``problematic'' part $\E_P$ then it must be correctable by the 
``improved'' codes given by Proposition~\ref{prop:code_improvement}.
\begin{lemma}
  Let $Q'_r(m)$ be one of the quantum codes given by Proposition~\ref{prop:code_improvement} and let $d$ be its minimum distance. Suppose the part $\E_P$ of the
erasure vector $\E$ defined in \eqref{eq:EC} satisfies $|\E_P|<d$. Then
$\E$ is correctable by $Q'(m)$.
\end{lemma}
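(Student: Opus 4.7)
The plan is to argue by contradiction. Suppose $\E$ is not correctable by $Q'_r(m)$; then Proposition~\ref{prop:noncorrectable} applied to the improved CSS code produces a binary vector $v$ with $\mathrm{supp}(v)\subseteq\E$ lying in $C_X'\setminus (C_Z')^\perp$ or in $C_Z'\setminus (C_X')^\perp$. By the self-dual structure of $G_r(m)$, the two cases are entirely symmetric, so I focus on the first. The key idea is to decompose $v$ along the connected components of $\E$ viewed as a subgraph of $G_r(m)$, and to show that the part of $v$ sitting inside $\E_C$ is ``harmless'' modulo $(C_Z')^\perp$, so that all the obstruction is concentrated on the (small) problematic part $\E_P$.

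More precisely, write $v=\sum_C v|_C$ where $C$ ranges over the connected components of the edge subset $\E$. Since $v\in C_X'\subseteq C_X$, the vector $v$ has even degree at every vertex of $G_r(m)$; as each vertex is incident to edges of at most one connected component of $\E$, each restriction $v|_C$ also has even degree at every vertex, hence $v|_C\in C_X$ is itself a cycle. Split the sum as $v=v_C+v_P$ according to whether the component lies in $\E_C$ or in $\E_P$. By the definition \eqref{eq:EC} of $\E_C$, every cycle supported in a component of $\E_C$ is a sum of faces of $G_r(m)$, and hence lies in $C_Z^\perp$. Because $\HH_Z'$ is obtained by appending rows to $\HH_Z$, we have $C_Z^\perp\subseteq (C_Z')^\perp$, and the CSS orthogonality of $Q_r'(m)$ gives $(C_Z')^\perp\subseteq C_X'$. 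Therefore $v_C\in (C_Z')^\perp\subseteq C_X'$, which in turn forces $v_P=v-v_C$ to lie in $C_X'\setminus (C_Z')^\perp$.

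To finish, observe that $\mathrm{supp}(v_P)\subseteq \E_P$, so $|v_P|\leq |\E_P|<d$, contradicting the definition of $d$ as the minimum weight of a nonzero element of $C_X'\setminus (C_Z')^\perp\,\cup\, C_Z'\setminus (C_X')^\perp$. The one step that deserves genuine care is the cycle-decomposition observation: that restricting a cycle of $G_r(m)$ to a single connected component of an ambient edge subgraph still yields a cycle. Once this is in place, the argument reduces to bookkeeping with the inclusions $C_X'\subseteq C_X$, $C_Z^\perp\subseteq (C_Z')^\perp$, and $(C_Z')^\perp\subseteq C_X'$ supplied by Proposition~\ref{prop:code_improvement}.
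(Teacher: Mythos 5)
Your proof is correct and essentially the same as the paper's: you decompose the covered problematic vector $v$ along connected components of $\E$, observe that the part supported in $\E_C$ lies in $C_Z^\perp \subseteq (C_Z')^\perp$, and concentrate the obstruction on the small part $v_P$ supported in $\E_P$, whose weight then contradicts the minimum distance of $Q_r'(m)$. The only difference is presentational: you spell out the component-by-component decomposition and the chain of inclusions $C_Z^\perp\subseteq (C_Z')^\perp\subseteq C_X'$ more explicitly than the paper, which simply writes $y=x|_{\E_C}$ and $x+y\subseteq\E_P$.
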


\begin{proof}
Denote by $C_X$ and $C_Z$ the binary linear codes associated with the 
quantum code $Q_r(m)$ 
and by $C_X', C_Z'$ their  binary sub-codes associated with the quantum
code $Q_r'(m)$ introduced in Proposition~\ref{prop:code_improvement}
and defined by augmenting the 
parity-check matrices $\HH_X$ and $\HH_Z$ of $Q_r(m)$.

If the erasure vector
$\E$ covers an element $x$ of $C_X'\backslash C_Z'^\perp$ then $x$
must belong to $C_X \backslash C_Z^{\perp}$ \emph{i.e.} $x$
is a cycle of $G_r(m)$ which is not a sum of faces. 
The restriction of this cycle to $\E_C$ defined
in \eqref{eq:EC}
is another cycle $y$ and the definition of $\E_C$ implies that $y$ is
a sum of faces. We obtain that $x+y$ is included in $\E_P$ with
$y\in C_Z^{\perp}\subset C_Z'^\perp$, i.e. $x+y\in C_X'\backslash
C_Z'^\perp$ but this is a contradiction whenever 
the part $\E_P$ of the erasure $\E$ has weight strictly less than the 
minimum distance $d$ of the improved code $Q'(m)$.
\end{proof}

We are now in a position to give an upper bound on the critical
probability of $G(m)$. Recall the definition of the rank difference
function of a family of stabilizer codes (Definition~\ref{defi:D}).

\begin{theo} \label{theo:percolation_bound}
Let $m\geq 5$, and let $p_c(m)$ be the critical probability for
percolation on $G(m)$. Let $D(p)$ be the rank difference function
of the sequence of stabilizer matrices associated to the tilings $G_r(m)$.
Then for any $p<p_c$ we have:
$$
1-\frac 4m \leq 1-2p-D(p).
$$
\end{theo}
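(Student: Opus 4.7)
The plan is to apply Theorem~\ref{theo:capacity_stabilizer} not to the hyperbolic codes $Q_r(m)$ directly---their distance grows only like $\log n$, so one cannot force a vanishing error probability for $p$ close to $p_c$---but to the improved codes $Q_r'(m)$ produced by Proposition~\ref{prop:code_improvement}, and then transfer the resulting bound back to $D(p)$. Fix $\rho>0$ small and $\alpha$ with $h(\rho)<\alpha<(1-4/m)/2$; this is possible precisely because $m\geq 5$ makes $1-4/m$ positive. For each $r$, Proposition~\ref{prop:code_improvement} then produces a CSS code $Q_r'(m)$ of rate tending to $1-4/m-2\alpha$ and minimum distance at least $\rho n$, whose stabilizer matrix $\HH'$ is obtained from the stabilizer matrix $\HH$ of $Q_r(m)$ by adding at most $2\alpha n$ rows.

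Next I would argue that the family $(Q_r'(m))_r$ has vanishing decoding error probability on the quantum erasure channel of parameter $p<p_c(m)$. By the unnamed lemma stated just before the theorem, any non-correctable erasure $\E$ for $Q_r'(m)$ must satisfy $|\E_P|\geq \rho n$. But Proposition~\ref{prop:bound_mean} combined with Markov's inequality gives
$$\Prob\bigl(|\E_P|\geq \rho n\bigr)\;\leq\;\frac{\Esp(|\E_P|)}{\rho n}\;\leq\;\frac{\e}{\rho},$$
which can be made arbitrarily small by first fixing $\rho$, then choosing $\e$ small, and then taking $r$ large enough. Theorem~\ref{theo:capacity_stabilizer} therefore applies to $(Q_r'(m))_r$ and yields
$$1-\frac{4}{m}-2\alpha \;\leq\; 1-2p-D'(p),$$
where $D'(p)$ denotes the rank difference function of the augmented stabilizer matrices.

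The final step is a perturbation estimate relating $D'(p)$ to $D(p)$. Since $\HH'$ has at most $2\alpha n$ more rows than $\HH$, each of $\rank \HH'_\E-\rank \HH_\E$ and $\rank \HH'_{\bar\E}-\rank \HH_{\bar\E}$ lies in $[0,2\alpha n]$, so the per-coordinate rank difference satisfies $|\Delta'_t(p)-\Delta_t(p)|\leq 2\alpha$ uniformly in $t$ and thus $D(p)\leq D'(p)+2\alpha\leq 4/m+4\alpha-2p$. Letting $\alpha\to 0^+$ (with $\rho\to 0^+$ chosen so that $h(\rho)<\alpha$ is preserved) yields $D(p)\leq 4/m-2p$, which rearranges to the claimed inequality.

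The main obstacle is verifying vanishing decoding error probability for the improved family; but with Proposition~\ref{prop:code_improvement}, Proposition~\ref{prop:bound_mean}, and the correctability lemma already in place it reduces to a one-line Markov estimate. The perturbation bound $|D'(p)-D(p)|\leq 2\alpha$ is the only genuinely new piece of bookkeeping and is completely routine.
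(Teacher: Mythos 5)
Your proposal is correct and takes essentially the same route as the paper: pass to the augmented codes $Q_r'(m)$ of Proposition~\ref{prop:code_improvement}, establish vanishing decoding error via the correctability lemma together with Proposition~\ref{prop:bound_mean} and Markov's inequality, invoke Theorem~\ref{theo:capacity_stabilizer}, and then transfer the conclusion from the augmented family back to $D(p)$. The one place where you diverge is the final bookkeeping: the paper lets $\alpha$ (through $\e$ and $\rho=\sqrt{\e}$) tend to zero \emph{along} the sequence of codes, so that the proportion of added rows vanishes and the rank-difference limsup of the augmented family coincides with $D(p)$ outright; you instead keep $\alpha$ fixed along each sequence, prove the uniform perturbation bound $|\Delta'_t(p)-\Delta_t(p)|\leq 2\alpha$ (valid because adding $\leq 2\alpha n$ rows changes any restricted rank by an amount in $[0,2\alpha n]$), and only then send $\alpha\to 0$. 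Both are sound; your version has the mild advantage of making the step the paper compresses into ``the function $D(p)$ is the same'' fully explicit and quantitative.
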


\begin{proof}
Let $R = 1-\frac{4}{m}$ and fix $p<p_c$.
For any $\alpha$ such that $0 < \alpha <R/2$,
Proposition~\ref{prop:code_improvement} gives us a quantum code
$Q'(m)$ with minimum distance $d\geq \rho n$ where
$\rho=h^{-1}(\alpha/2)$ and rate $R-2\alpha$. For such a code
the probability of a decoding error satisfies:
$$P_{err}  \leq  P(|\E_P| \geq \rho n).$$

For any $\e >0$ we can take $r$ large enough so that Proposition
\ref{prop:bound_mean} applies, and together with  Markov's inequality we have
$$
P_{err} \leq P(|\E_P| \geq \frac{\rho}{\e} \e n) \leq \frac{\e}{\rho}.
$$
For every $\e>0$ we take $\rho = \sqrt{\e}$. Then $\rho(\e)$ and
$\frac{\e}{\rho(\e)}$ simultaneously go to zero when $\e$ goes to
zero. Defining $\alpha$ by $\alpha = 2h(\rho)$ and choosing a
decreasing sequence
of $\epsilon$'s that tends to zero, we obtain a family of quantum
codes $Q_r'(m)$ with decoding error probability tending to zero and
rate $R-2\alpha$ tending to $R$.

We can therefore apply Theorem~\ref{theo:capacity_stabilizer} to the
sequence of stabilizer matrices of the codes $Q_r'(m)$. But we have
just seen that their rates tend to $1-4/m$ and furthermore, since the codes $ Q'_r(m)$ are obtained from the codes
$Q_r(m)$ by adding a vanishing proportion of generators to their
stabilizer group, the function $D(p)$ is the same for the family
$Q'_r(m)$ as for the family of surface codes $Q_r(m)$.
\end{proof}

Applying the lower
bound on $D(p)$ stemming from Proposition~\ref{prop:bound_g_LDPC2}, we
obtain after some rearranging:

\begin{theo}\label{theo:percolation_bound_LDPC}
We have $p_c(m) \leq p$ where $p$ is the smallest solution $p \in [0, 1]$ of the equation:
$$
1 - \frac{4}{m} = (1 - 2p) \Big( \frac{4}{mp} \big( 1 - (1-p)^{m} S_{m-2} (p(1-p)^{m-2}) \big) -1 \Big).
$$where $S_{m-2}(x) = \sum_{k=0}^{m-2} \frac{a_k}{k+1} x^k$ and $a_k$ are the coefficients of the generating function $T_m$.
\end{theo}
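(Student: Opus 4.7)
The plan is to combine Theorem~\ref{theo:percolation_bound} with the lower bound on the rank difference function $D(p)$ provided by Proposition~\ref{prop:bound_g_LDPC2}, and then rearrange algebraically to obtain the stated equation. Essentially, the rearrangement is identical to the one that yields Theorem~\ref{theo:capacity_CSS_LDPC2} from Proposition~\ref{prop:bound_g_LDPC2} and Theorem~\ref{theo:capacity_stabilizer}, with Theorem~\ref{theo:percolation_bound} now playing the role of Theorem~\ref{theo:capacity_stabilizer}.

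First I would verify that the family $(Q_r(m))_r$ of surface codes consists of proper $(2,m)$ CSS codes for $r$ large enough, so that Proposition~\ref{prop:bound_g_LDPC2} applies. Each matrix $\HH_X$ and $\HH_Z$ is of type $(2,m)$ since $G_r(m)$ is $m$-regular and each edge lies in exactly two $m$-gonal faces. The graphs $G_r(m)$ and $G_r^*(m)$ are connected by the \v{S}ir\'a\v{n} construction, and within any ball of radius below the injectivity radius they are isomorphic to the planar tiling $G(m)$, which has girth $m$. Since the injectivity radius can be made arbitrarily large, both $G_r(m)$ and $G_r^*(m)$ have girth exactly $m$ for $r$ sufficiently large.

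Next, for any $p<p_c(m)$, Theorem~\ref{theo:percolation_bound} applied to $(Q_r(m))_r$ gives $D(p)\leq 4/m-2p$, while Proposition~\ref{prop:bound_g_LDPC2} gives
$$D(p)\geq\frac{1-2p}{1-p}\Bigl(\frac{4}{m}-\frac{4}{m}\bigl(1-(1-p)^{m}S_{m-2}(p(1-p)^{m-2})\bigr)\Bigr).$$
Combining the two, multiplying by $(1-p)/(1-2p)$, and collecting terms (the same manipulation as in the proof of Theorem~\ref{theo:capacity_CSS_LDPC2}), the resulting inequality is equivalent to
$$1-\frac{4}{m}\leq (1-2p)\Bigl(\frac{4}{mp}\bigl(1-(1-p)^{m}S_{m-2}(p(1-p)^{m-2})\bigr)-1\Bigr).$$
Thus every $p<p_c(m)$ satisfies this inequality, and so the smallest $p\in[0,1]$ at which equality holds must satisfy $p\geq p_c(m)$, which is exactly the bound claimed.

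The argument is mostly mechanical; the only step requiring real care is the verification that Proposition~\ref{prop:bound_g_LDPC2} actually applies to the family $Q_r(m)$, i.e.\ that the girth of $G_r(m)$ and $G_r^*(m)$ stabilises at exactly $m$. This is where the local planarity argument from Lemma~\ref{lemma:upper_bound_g} and the unboundedness of the injectivity radius come in.
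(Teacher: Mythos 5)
Your proposal is correct and follows the same route as the paper: Theorem~\ref{theo:percolation_bound} gives the upper bound $D(p)\leq 4/m-2p$ for $p<p_c$, Proposition~\ref{prop:bound_g_LDPC2} supplies the lower bound, and the combination rearranges to the stated equation. Your explicit check that the $Q_r(m)$ are proper $(2,m)$ CSS codes for $r$ large (via the injectivity radius, so that balls are isomorphic to those of $G(m)$ and the girth stabilises at $m$) is a welcome detail that the paper leaves implicit, but it does not constitute a different argument.
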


Using symbolic computation software, we can compute this bound on the critical probability.
We use classical properties of generating functions to compute elements of the sequence $(a_k)_k$. For classical theorems on generating functions, in particular Lagrange inversion theorem, see for example \cite{FS09}, \cite{Wi06}.

$$
\begin{array}{|c|c|c|c|}
\hline
m & \text{ lower lower on } p_c(m): \frac{1}{m-1} & \text{bound of Th. \ref{theo:percolation_bound_LDPC}} & \frac{2}{m} = \text{bound of Prop. \ref{prop:easy_bound} }\\
\hline
5 & 0.25 & 0.38 & 0.40\\
10 & 0.11 & 0.16 & 0.20\\
20 & 0.053 & 0.073 & 0.100\\
30 & 0.035 & 0.046 & 0.067\\
40 & 0.026 & 0.033 & 0.050\\
50 & 0.020 & 0.026 & 0.040\\
\hline
\end{array}
$$

We can observe that the new upper bound becomes better for tilings $m$ with faces of large length. It is not surprising because, in this case, we enumerate the connected components of larger size.

The exact value of the critical probability remains to discover.
Numerical estimations of this value are difficult due to the
exponential growth of balls of the graph. 
For example, Ziff pointed out the inconsistency of some numerical results in \cite{GZ11}.
This reinforces the importance of our theoretical approach.

\section{Concluding Remarks}

\begin{itemize}
\item We have given a combinatorial proof of the upper bound on
  achievable rates of stabilizer codes $R\leq 1-2p$, over the quantum
  erasure channel. This proof is of course less general than previous
  proofs, since it applies only to stabilizer codes, but it gives
  mathematical insight into the degeneracy phenomenon and, as we have
  shown by the study of sparse stabilizer codes, it has the potential
  for improved upper bounds on achievable rates for particular classes
  of quantum codes. A generalization of this approach to the
  depolarizing channel would be very welcome. The difficulty is the fact that for depolarizing noise the probability of an error depends on its weight. We must deal with typical errors.

\item By graphical arguments, we proved that stabilizer and CSS codes defined by generators of bounded weight don't achieve the capacity of the quantum erasure channel. This result can be applied to surfaces codes, color codes and a lot of well studied families of quantum codes. This encourages us to construct irregular quantum LDPC codes and families with growing weights.

\item The exact value of the critical probability of hyperbolic
  tilings remains unknown. To improve our upper bound, we must enumerate more
  connected components in the random graph, which becomes more
  difficult when we count components that are not subtrees.
  Our method also involves a concavity argument for the mean rank
  function which relates it to the rank difference function. This
  results in a manageable lowerbound on the relative rank difference function
  but it is generally not tight and alternative ways of evaluating
  this function would be desirable.
\end{itemize}

%the case l=2 for stabilizers
%classical case ?
%generalization to subsystem codes

\section*{Acknowledgement}

This work was supported by the French ANR Defis program under contract
ANR-08-EMER-003 (COCQ project). We acknowledge support from the Délégation Générale pour l'Armement (DGA) and from the Centre National de la Recherche Scientifique (CNRS).
We are very greatful to Jean-François Marckert for helpful discussion around generating functions.

\appendix
\makeatletter
\def\@seccntformat#1{Appendix~\csname the#1\endcsname:\quad}
\makeatother

\section{The rank of a random submatrix} \label{appendix:concavity_inequality}

The goal of this section is to prove that the function $\phi(p) =
\frac{1}{n} \mathbb E_p(\rank \HH_\E)$ is a concave ($\cap$-convex)
function. 
Then, we will use the concavity of $\phi$ to obtain a lower bound on $\Delta(p) = \phi(1-p) - \phi(p)$.

The key argument is the submodularity of the rank:
\begin{lemma} \label{lemma:submodularity}
Let $\HH \in M_{r, n}(\Pauli_1)$ be a Pauli matrix with $n$ columns. The rank function:
\begin{align*}
\mathcal P (\{1, 2, \dots, n\}) & \longrightarrow \N\\
A & \longmapsto \rank(A) = \rank(\HH_A)
\end{align*}
is a submodular function. That is, the rank function satisfies:
$$
\rank (A \cap B) + \rank (A \cup B) \leq \rank(A) + \rank(B).
$$
\end{lemma}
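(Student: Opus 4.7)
The plan is to reduce the statement to the classical fact that the rank function of any $\F_2$-linear matroid is submodular, exploiting the $\F_2$-vector space structure of $\Pauli_n$ introduced in Section~\ref{subsection:rank}.

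First, I would translate the claim into pure linear algebra. Under the isomorphism $\theta$, the row space $W$ of $\HH$ is an $\F_2$-linear subspace of $\F_2^{2n}$. For $A\subseteq\{1,\dots,n\}$, let $\pi_A:\F_2^{2n}\to\F_2^{2|A|}$ denote the projection onto the two coordinates of each position $i\in A$. Then $\rank(A) = \rank(\HH_A) = \dim \pi_A(W)$: the restriction of the rows of $\HH$ to the columns indexed by $A$ spans exactly $\pi_A(W)$.

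Next I would switch to the dual formulation via the kernel of the projection, which is where the combinatorics of union and intersection interact cleanly. Define
$$K_A = \{w\in W : \pi_A(w)=0\} = W\cap\ker\pi_A.$$
Since $\pi_A|_W$ is a surjection from $W$ onto $\pi_A(W)$ with kernel $K_A$, the rank--nullity theorem yields
$$\dim K_A = \dim W - \rank(A).$$
Two elementary observations then do the work: $K_{A\cup B} = K_A\cap K_B$, because a vector vanishes on $A\cup B$ iff it vanishes on both $A$ and $B$; and $K_A + K_B \subseteq K_{A\cap B}$, because the sum of a vector vanishing on $A$ and one vanishing on $B$ vanishes on~$A\cap B$.

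Finally, combining the inclusion $K_A+K_B\subseteq K_{A\cap B}$ with the Grassmann identity
$$\dim(K_A+K_B) + \dim(K_A\cap K_B) = \dim K_A + \dim K_B$$
and the equality $K_A\cap K_B = K_{A\cup B}$ gives
$$\dim K_{A\cap B} \geq \dim K_A + \dim K_B - \dim K_{A\cup B}.$$
Substituting $\dim K_X = \dim W - \rank(X)$ and simplifying cancels the $\dim W$ terms and produces the desired inequality
$$\rank(A\cap B)+\rank(A\cup B)\leq \rank(A)+\rank(B).$$
There is really no obstacle here: this is a textbook argument for the submodularity of the rank of a linear matroid, and the only point worth highlighting is that the Pauli setting introduces no genuine difficulty because, via $\theta$, everything is handled as standard $\F_2$-linear algebra on the row space $W$.
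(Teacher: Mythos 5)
Your proof is correct, and it takes a genuinely different and cleaner route than the paper's. The paper works directly with the images $\im H_A$ of the column-submatrices: it writes out four instances of the Grassmann dimension formula for $\rank(A\cup B)$ and $\rank(A\cap B)$, observes that $\im H_{A\cap B}\subset\im H_A$ forces an inequality between two of the intersection terms, sums the four identities and cancels. This is a straightforward but somewhat bulky computation. You instead pass to the kernels $K_A = W\cap\ker\pi_A$ inside the row space $W\subset\F_2^{2n}$, where the set-theoretic combinatorics becomes transparent: $K_{A\cup B}=K_A\cap K_B$ exactly, and $K_A+K_B\subseteq K_{A\cap B}$, after which a single application of Grassmann and rank--nullity finishes. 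The two approaches buy the same result, but your kernel formulation isolates the one genuine inequality (the inclusion $K_A+K_B\subseteq K_{A\cap B}$, which fails to be equality precisely when there is a ``long-range dependency'' that the submodularity defect measures), whereas the paper's version hides this in a comparison of two intersection dimensions. Both proofs handle the Pauli case identically, by observing via $\theta$ that everything reduces to $\F_2$-linear algebra on a matrix in $\mathcal M_{r,2n}(\F_2)$; you could make this reduction slightly more explicit, but it is the same remark the paper makes and is not a gap.
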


This lemma embraces the case of a binary matrix.
These rank properties can be obtained in the more general framework of matroid theory. A classical book on matroid theory is \cite{Ox92}.

\begin{proof}
We will prove the lemma for binary matrices. Then, we will explain how to adapt our argumentation to the quaternary case.

Let $A$ and $B$ be two subsets of $\{1,2, \dots, n\}$.
From the dimension formula for the sum of two subspaces, we deduce:
\begin{center}
$
\begin{cases}
\rank(A \cup B) = \rank(A) + \rank(B \backslash A) - \dim (\im H_A) \cap (\im H_{B \backslash A})\\
\rank(A \cup B) = \rank(B) + \rank(A \backslash B) - \dim (\im H_B) \cap (\im H_{A \backslash B})\\
\rank(A \cap B) = \rank(A) - \rank(A \backslash B) 
+ \dim (\im H_{A \cap B}) \cap (\im H_{A \backslash B})\\
\rank(A \cap B) = \rank(B) - \rank(B \backslash A) 
+ \dim (\im H_{A \cap B}) \cap (\im H_{B \backslash A})\\
\end{cases}
$
\end{center}
Look at the last term of these equalities.
We have clearly $\im H_{A \cap B} \subset \im H_A$, therefore the space $(\im H_{A \cap B}) \cap (\im H_{B \backslash A})$ is a subspace of $(\im H_A) \cap (\im H_{B \backslash A})$. This proves the dimension inequality:
$$
\dim (\im H_{A \cap B}) \cap (\im H_{B \backslash A}) \leq \dim (\im H_A) \cap (\im H_{B \backslash A}).
$$
We have the same result exchanging $A$ and $B$.
We sum the four equalities and we apply the above inequality. We get the desired result:
$$
\rank(A \cap B) + \rank(A \cup B) \leq \rank(A) + \rank(B).
$$

To prove the property for a matrix with coefficients in $\Pauli_1$, it is sufficient to show that all the tools from linear algebra used above are still satisfied for a Pauli matrix $\HH$.
As recalled in Section \ref{subsection:rank}, there is an isomorphism of $\F_2$ vector spaces between the Pauli group $\Pauli_n$ and the space $\F_2^{2n}$.
Therefore, we can regard the matrix $\HH \in \mathcal M_{r, n}(\Pauli_1)$ as a matrix $[H^X|H^Z] \in \mathcal M_{r, 2n}(\F_2)$. 
The rank function can be written $\rank \HH = \rank [H^X|H^Z]$ and the rank of a submatrix is:
$$
\rank \HH_\E = \rank [H^X_\E|H^Z_\E].
$$
From this remark, the proof of the lemma is similar for stabilizer matrices.
\end{proof}

To study the derivatives of $\phi$, we introduce the function $\Phi$ depending on $n$ variables $x = (x_1, x_2, \dots, x_n) \in [0, 1]^n$ defined by:
$$
\Phi(x_1, x_2, \dots, x_n) = \sum_{\E \in \F_2^n} \left[ \left( \rank \HH_\E \right) 
\left( \prod_{\E_i = 1} x_i \right) \left( \prod_{\E_i = 0} (1-x_i) \right) \right].
$$
This function can be seen as the expected rank $\mathbb E_x (\rank \HH_\E )$ for the probability measure such that the $i$-th component of $\E$ is 1 with probability $x_i$ and 0 otherwise, independently of the other components.
This polynomial function $\Phi$ is infinitely derivable and its partial derivatives satisfy:
\begin{lemma} \label{lemma:derivatives}
For all $x$ in $[0, 1]^n$, we have:
$$
\frac{\partial \Phi}{\partial x_i} (x) \geq 0, \qquad \forall i \in \{1, 2, \dots, n\},
$$

$$
\frac{\partial^2 \Phi}{\partial x_i \partial x_j} (x) \leq 0, \qquad \forall i, j \in \{1, 2, \dots, n\}.
$$
\end{lemma}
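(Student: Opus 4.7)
The plan is to exploit the fact that $\Phi$ is a multilinear polynomial: each variable $x_i$ appears with degree at most $1$ in every monomial, since in the term indexed by $\E$ either the factor $x_i$ or the factor $(1-x_i)$ appears but not both. Consequently, for any variable $x_i$, one has
$$\frac{\partial \Phi}{\partial x_i}(x) = \Phi(x)\bigl|_{x_i=1} - \Phi(x)\bigl|_{x_i=0},$$
and the mixed partial for $i\neq j$ is the finite difference
$$\frac{\partial^2 \Phi}{\partial x_i\,\partial x_j}(x) = \Phi\bigl|_{x_i=1,x_j=1} - \Phi\bigl|_{x_i=1,x_j=0} - \Phi\bigl|_{x_i=0,x_j=1} + \Phi\bigl|_{x_i=0,x_j=0}.$$
Also, multilinearity immediately gives $\partial^2\Phi/\partial x_i^2 = 0$, which takes care of the case $i=j$.

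For the first-derivative inequality, I will couple the two fixed-value evaluations over the remaining coordinates. More precisely, for any fixed outcome $(\E_k)_{k\neq i}$ with support $A\subseteq\{1,\dots,n\}\setminus\{i\}$ (the set of $k$ with $\E_k=1$), the corresponding summands of $\Phi|_{x_i=1}$ and $\Phi|_{x_i=0}$ are $\rank\HH_{A\cup\{i\}}$ and $\rank\HH_A$ respectively, weighted by the same probability $\prod_{k\in A}x_k\prod_{k\notin A\cup\{i\}}(1-x_k)$. Since $\rank\HH_{A\cup\{i\}}\geq\rank\HH_A$ (adding a column can only increase the rank), summing over $A$ yields $\partial\Phi/\partial x_i\geq 0$.

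For the mixed second derivative with $i\neq j$, I proceed analogously. For each realisation of $(\E_k)_{k\neq i,j}$ of support $A\subseteq\{1,\dots,n\}\setminus\{i,j\}$, the contribution to $\partial^2\Phi/\partial x_i\partial x_j$ at $x$ is
$$\bigl(\rank\HH_{A\cup\{i,j\}} - \rank\HH_{A\cup\{i\}} - \rank\HH_{A\cup\{j\}} + \rank\HH_A\bigr)\prod_{k\in A}x_k\prod_{k\notin A\cup\{i,j\}}(1-x_k).$$
Applying the submodularity inequality of Lemma~\ref{lemma:submodularity} with $A\cup\{i\}$ and $A\cup\{j\}$ in place of the two subsets, whose union is $A\cup\{i,j\}$ and whose intersection is $A$, gives
$$\rank\HH_{A\cup\{i,j\}} + \rank\HH_A \leq \rank\HH_{A\cup\{i\}} + \rank\HH_{A\cup\{j\}},$$
so each bracketed term is non-positive. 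Summing the non-positive contributions over $A$ (each with a non-negative probability weight) yields $\partial^2\Phi/\partial x_i\partial x_j\leq 0$, completing the proof. The only real ingredient beyond multilinear book-keeping is the submodularity of the rank, which has already been established.
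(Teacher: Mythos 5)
Your argument is correct and is essentially the same as the paper's: exploit multilinearity of $\Phi$ to write the first and mixed second partial derivatives as finite differences, express these as expectations (or sums over the remaining coordinates) of rank differences, and invoke monotonicity of the rank for the first inequality and submodularity (Lemma~\ref{lemma:submodularity}) applied to $A\cup\{i\}$ and $A\cup\{j\}$ for the second. The paper handles the $i=j$ case in the same way, by noting that multilinearity makes the pure second partials vanish.
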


\begin{proof}
Let $x$ be an element of $[0, 1]^n$.
We remark that $\Phi$ is an affine function in each variable. Thus, we have:
\begin{align*}
\frac{\partial \Phi}{\partial x_i} (x) &= \Phi(x_1, \dots, x_{i-1}, 1, x_{i+1}, \dots, x_n) - \Phi(x_1, \dots, x_{i-1}, 0, x_{i+1}, \dots, x_n)\\
& = \mathbb E_x(\rank (\E \cup \{i\}) - \mathbb E_x(\rank (\E \backslash \{i\})\\
& = \mathbb E_x \Big(\rank (\E \cup \{i\}) - \rank (\E \backslash \{i\} \Big)\\
& \geq 0.
\end{align*}
In the preceding expression, the vector $\E \in \F_2^n$ is considered as a subset of $\{1, 2, \dots, n\}$. Fixing $x_i=1$ is equivalent to replacing the subset $\E$ by $\E \cup \{i\}$ and fixing $x_i=0$ is equivalent to considering the subset $\E \backslash \{i\}$.

Let $j$ be an integer between 1 and $n$ such that $j \neq i$.
The partial derivative $\frac{\partial \Phi}{\partial x_i}(x)$ is also an affine function of the $j$-th variable thus we can derivate it by the same process.
We find:
\begin{align*}
\frac{\partial^2 \Phi}{\partial x_j \partial x_i}(x) &= \frac{\partial}{\partial x_j} \frac{\partial \Phi}{\partial x_i}(x)\\
& = \mathbb E_x( \rank(\E \cup \{i, j\} )) 
- \mathbb E_x( \rank(\E \cup \{i\} \backslash \{j\} ))\\
& - \mathbb E_x( \rank(\E \cup \{j\} \backslash \{i\} ))
+ \mathbb E_x( \rank(\E \backslash \{i, j\} ))\\
& = \mathbb E_x \left[ \rank(A_i \cup A_j) - \rank (A_i) 
- \rank(A_j) + \rank(A_i \cap A_j)  \right]\\
& \leq 0.
\end{align*}
The subset $A_i$ denotes the subset $\E \cup \{i\} \backslash \{j\}$ and $A_j$ denotes the subset $\E \cup \{j\} \backslash \{i\}$.
This quantity is negative by the submodularity of Lemma \ref{lemma:submodularity}

If $j = i$ then the second partial derivative is null.
\end{proof}

\begin{prop} \label{prop:increasing}
Let $\HH \in M_{r, n}(\Pauli_1)$ be a Pauli matrix with $n$ columns.
The function $\phi(p) = \frac 1n \mathbb E_p (\rank \HH_\E)$ is an increasing function on $[0, 1]$.
\end{prop}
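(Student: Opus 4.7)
The plan is to reduce the statement to the computation of the derivative of $\phi$ along the diagonal of the multivariate function $\Phi$ introduced just above.

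First, I would observe that $\phi(p) = \frac{1}{n}\Phi(p,p,\dots,p)$, since plugging $x_i = p$ for every $i$ into the definition of $\Phi$ recovers exactly $\mathbb{E}_p(\rank \HH_\E)$ (the $\E_i$ are i.i.d.\ Bernoulli of parameter $p$). Since $\Phi$ is a polynomial in $n$ variables, it is differentiable on $[0,1]^n$, so $\phi$ is differentiable on $[0,1]$.

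Next, I would apply the chain rule to differentiate $\phi$: for every $p\in[0,1]$,
$$
\phi'(p) = \frac{1}{n}\sum_{i=1}^{n} \frac{\partial \Phi}{\partial x_i}(p,p,\dots,p).
$$
By Lemma~\ref{lemma:derivatives}, each partial derivative $\frac{\partial \Phi}{\partial x_i}$ is non-negative on $[0,1]^n$, so every term in the sum is non-negative and hence $\phi'(p)\geq 0$. This gives that $\phi$ is non-decreasing on $[0,1]$, as required.

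There is no real obstacle left: the substantive work (submodularity of the rank and the sign of the partial derivatives) has already been established in Lemma~\ref{lemma:submodularity} and Lemma~\ref{lemma:derivatives}. The only thing to be careful about is the identification $\phi(p) = \frac{1}{n}\Phi(p,\dots,p)$ and the application of the chain rule; everything else is immediate.
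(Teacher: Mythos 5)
Your proof is correct and follows essentially the same route as the paper: write $\phi = \frac{1}{n}\Phi\circ i$ with $i(p)=(p,\dots,p)$, differentiate by the chain rule, and invoke the sign of the partial derivatives from Lemma~\ref{lemma:derivatives}. In fact your chain-rule formula $\phi'(p)=\frac{1}{n}\sum_i \partial_{x_i}\Phi(i(p))$ is the correct one; the paper's displayed formula carries a spurious factor of $p$ inside the sum, which is a harmless typo but an error nonetheless.
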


\begin{proof}
The function $\phi$ is $\frac 1n \Phi \circ i$ where $i$ is the injection from $[0,1]$ to $[0, 1]^n$ which sends $p$ onto $(p, p \dots, p)$. The derivatives of $\phi$ can be expressed as function of the partial derivatives of $\Phi$:
$$
\phi'(p) = \frac 1n \sum_{i=1}^{n} p \frac{\partial \Phi}{\partial x_i} (i(p)).
$$
From Lemma \ref{lemma:derivatives}, this number is always non-negative.
\end{proof}

\begin{prop} \label{prop:concavity}
Let $\HH \in M_{r, n}(\Pauli_1)$ be a Pauli matrix with $n$ columns.
The function $\phi(p) = \frac 1n \mathbb E_p (\rank \HH_\E)$ is a concave function on $[0, 1]$.
\end{prop}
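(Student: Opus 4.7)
The plan is to deduce concavity of $\phi$ directly from Lemma~\ref{lemma:derivatives}, which has already established the key sign information on the Hessian of the multivariate polynomial $\Phi(x_1, \dots, x_n) = \mathbb{E}_x(\rank \HH_\E)$. Since $\phi(p) = \frac{1}{n}\Phi(p, p, \dots, p)$ is the composition of the polynomial $\Phi$ with the diagonal injection $i \colon p \mapsto (p, p, \dots, p)$, it is a polynomial in $p$, hence smooth, and the one-variable chain rule applies without any regularity concerns.

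First I would compute the second derivative of $\phi$ by differentiating twice along the diagonal. A direct application of the chain rule gives
$$
\phi''(p) \;=\; \frac{1}{n} \sum_{i=1}^{n} \sum_{j=1}^{n} \frac{\partial^2 \Phi}{\partial x_i \, \partial x_j}(p, p, \dots, p).
$$
The second step is then immediate: by Lemma~\ref{lemma:derivatives}, every mixed partial derivative $\frac{\partial^2 \Phi}{\partial x_i \, \partial x_j}$ is non-positive on $[0,1]^n$ (including the diagonal entries $i=j$, which vanish by the affine-in-each-variable remark used in the proof of that lemma). Therefore $\phi''(p) \leq 0$ for every $p \in [0,1]$, and $\phi$ is concave on $[0,1]$.

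There is essentially no obstacle here: the real work was done in Lemma~\ref{lemma:submodularity} (submodularity of the rank, viewed via the binary image $[H^X | H^Z]$) and in Lemma~\ref{lemma:derivatives} (translating submodularity into sign conditions on the Hessian of $\Phi$). Concavity of $\phi$ is simply the diagonal specialisation of the statement that $\Phi$ has a non-positive Hessian in every direction with non-negative coordinates; the $n$-fold diagonal direction $(1, 1, \dots, 1)$ is one such. No additional estimates or approximations are required.
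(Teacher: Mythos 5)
Your proof is correct and follows the same approach as the paper: compute $\phi''$ by the chain rule along the diagonal injection $p \mapsto (p,\dots,p)$ and invoke the sign of the Hessian of $\Phi$ established in Lemma~\ref{lemma:derivatives}. In fact your chain-rule formula $\phi''(p) = \frac{1}{n}\sum_{i,j}\frac{\partial^2\Phi}{\partial x_i\partial x_j}(p,\dots,p)$ is the correct one --- the paper's displayed expression carries a spurious factor of $p^2$, a harmless typo that does not affect the sign conclusion.
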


\begin{proof}
With the same notation as in the proof of Proposition \ref{prop:increasing}, we obtain:
$$
\phi''(p) = \frac 1n \sum_{i, j = 1}^{n} p^2 \frac{\partial^2 \Phi}{\partial x_i \partial x_j} (i(p))
$$
From Lemma \ref{lemma:derivatives}, this number is always non-positive, proving the concavity of $\phi$.
\end{proof}

\begin{prop} \label{prop:bound_g}
Let $\HH \in M_{r, n}(\Pauli_1)$ be a Pauli matrix with $n$ columns.
Assume that $\phi(p) = \frac{1}{n} \mathbb E_p( \rank \HH)$ is upper bounded by $M(p)$.
Then, the function $\Delta(p)=\phi(1-p)-\phi(p)$ admits the lower bound:
$$
\Delta(p) \geq \left( \frac{1-2p}{1-p} \right) \left( \frac{\rank \HH}{n} - M(p) \right)
$$
\end{prop}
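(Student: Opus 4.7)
My plan is to deduce the bound as a direct consequence of three facts that are already in place: the concavity of $\phi$ (Proposition \ref{prop:concavity}), the boundary value $\phi(1)=\rank \HH /n$ (which holds because when $\E=(1,\ldots,1)$, the submatrix $\HH_\E$ is simply $\HH$ itself), and the hypothesis $\phi(p)\leq M(p)$. The only nontrivial ingredient is the classical chord-slope monotonicity for concave functions, which will convert the concavity of $\phi$ into the precise linear combination $\frac{1-2p}{1-p}$ that appears in the statement.

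Concretely, I would fix $p\in[0,1/2]$ so that the three points $p\leq 1-p\leq 1$ lie in increasing order in $[0,1]$. For any concave function, the slope of the chord from a fixed left endpoint $a$ to a variable right endpoint $b\geq a$ is a nonincreasing function of $b$. Applying this with $a=p$, and comparing the right endpoints $b_1=1-p$ and $b_2=1$, I obtain
\[
\frac{\phi(1-p)-\phi(p)}{(1-p)-p}\;\geq\;\frac{\phi(1)-\phi(p)}{1-p}.
\]
Since $1-2p\geq 0$, multiplying through and using $\phi(1)=\rank\HH/n$ yields
\[
\Delta(p)=\phi(1-p)-\phi(p)\;\geq\;\frac{1-2p}{1-p}\left(\frac{\rank\HH}{n}-\phi(p)\right).
\]
Finally, because the prefactor $\frac{1-2p}{1-p}$ is nonnegative on $[0,1/2]$, replacing $\phi(p)$ by the larger quantity $M(p)$ can only decrease the right-hand side, producing exactly the announced inequality.

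There is no real obstacle beyond bookkeeping: the main point to check is that the chord-slope inequality is being applied with the correct orientation (so that concavity gives $\geq$ rather than $\leq$) and that the sign of $(1-2p)$ is handled correctly when we pass from the slope inequality to the bound on $\Delta(p)$ and then weaken $\phi(p)\leq M(p)$. The chord-slope monotonicity itself follows immediately from concavity: if $p<q<r$ one has $\phi(q)\geq \frac{r-q}{r-p}\phi(p)+\frac{q-p}{r-p}\phi(r)$, and rearranging gives $\frac{\phi(q)-\phi(p)}{q-p}\geq \frac{\phi(r)-\phi(p)}{r-p}$, which is exactly what I need with $q=1-p$ and $r=1$. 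So the proof is essentially a one-line consequence of concavity, with the algebra chosen to expose the factor $(1-2p)/(1-p)$.
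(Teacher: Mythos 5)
Your proof is correct and takes essentially the same approach as the paper: both rest on the concavity of $\phi$ (Proposition~\ref{prop:concavity}) applied to the three points $p \le 1-p \le 1$, together with $\phi(1)=\rank\HH/n$ and the bound $\phi(p)\le M(p)$. Your chord-slope formulation and the paper's ``point above the segment'' formulation are algebraically identical, and you are right to make the restriction $p\le 1/2$ explicit (the paper leaves it implicit).
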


\begin{proof}
It suffices to use the concavity of $f$.
Indeed, by concavity the point $(1-p, \phi(1-p))$ is above the segment between $(p, \phi(p))$ and $(1, \phi(1))$. That is:
$$
\phi(1-p) \geq \phi(p) + (1-2p) \left( \frac{\phi(1) - \phi(p)}{1-p} \right).
$$
The inequality follows using the equality $\phi(1) = \rank \HH /n$ and then the upper bound $\phi(p) \leq M(p)$.
\end{proof}

% pour le fichier .bib:
%\bibliographystyle{plain}
%\bibliography{biblio}

%\newpage
\newcommand{\SortNoop}[1]{}

\end{document}